\documentclass[%
 reprint,
superscriptaddress,
 amsmath,amssymb,
 aps,
]{revtex4-2}
\usepackage{mathtools}
\usepackage{graphicx}
\usepackage{dcolumn}
\usepackage{stmaryrd}
\usepackage{bm}
\usepackage{hyperref} 
\usepackage{comment}
\usepackage{xspace}
\usepackage{amsthm}
\usepackage{amsmath}
\usepackage{nicematrix}
\usepackage{tikz}
\usepackage{adjustbox}
\usepackage{arydshln}
\theoremstyle{definition}
\newtheorem{dfn}{Definition}[section]
\newtheorem{prop}[dfn]{Proposition}
\newtheorem{lem}[dfn]{Lemma}

\newtheorem{rem}[dfn]{Remark}

\newcommand{\nameofcode}{PRISM\xspace}
\newcommand{\F}{\mathbb F}
\newcommand{\GL}{\operatorname{GL}}
\newcommand{\Sp}{\operatorname{Sp}}
\newcommand{\Sym}{\operatorname{Sym}}
\newcommand{\diag}{\operatorname{diag}}

\newcommand{\foldS}{\operatorname{S}_{\rm fd}}
\newcommand{\foldH}{\operatorname{H}_{\rm fd}}

\usepackage{braket}

\DeclareMathOperator{\im}{\operatorname{im}}

\usepackage{tikz-cd}
\usepackage[compat=0.8]{yquant}

\usetikzlibrary{shapes.geometric, arrows, positioning, fit, calc, decorations.pathreplacing}

\begin{document}

\title{Constant-rate quantum codes with low-weight stabilizers and full logical Clifford actions via transversal and fold-transversal gates}

\author{Takaya Matsuura}
\thanks{These authors contributed equally to this work.}
 \affiliation{OptQC Corp., 3-28-13 Nishi-Ikebukuro, Toshima-ku, Tokyo 171-0021, Japan}
 \affiliation{RIKEN Center for Quantum Computing (RQC), Hirosawa 2-1, Wako, Saitama 351-0198, Japan}
\author{Yohji Chin}%
\thanks{These authors contributed equally to this work.}
\affiliation{OptQC Corp., 3-28-13 Nishi-Ikebukuro, Toshima-ku, Tokyo 171-0021, Japan}
\affiliation{Department of Applied Physics, Graduate School of Engineering,
The University of Tokyo, 7-3-1 Hongo, Bunkyo-ku, Tokyo 113-8656, Japan}
\author{Shohei Kiryu}
\affiliation{OptQC Corp., 3-28-13 Nishi-Ikebukuro, Toshima-ku, Tokyo 171-0021, Japan}
\author{Kosuke Fukui}
\affiliation{OptQC Corp., 3-28-13 Nishi-Ikebukuro, Toshima-ku, Tokyo 171-0021, Japan}

\date{\today}

\begin{abstract}
Low-space-overhead fault-tolerant quantum computation requires not only high-rate quantum error-correcting codes but also space-efficient implementations of logical operations. Transversal and fold-transversal gates are promising since they limit error propagation and require no additional qubits. However, the logical operations they enable are typically restricted, and a central challenge is to construct codes that combine a complete set of logical Clifford gates with favorable code parameters. In this work, we introduce a family of quantum codes with an asymptotically constant encoding rate and sublogarithmically growing stabilizer weight, while supporting the entire logical Clifford group using only transversal and fold-transversal gates. Our construction is based on classical codes whose code spaces are absolutely irreducible Steinberg modules of their Tanner-graph automorphism groups.
Taking hypergraph products of these classical codes yields quantum codes for which all logical Clifford operations can be synthesized from a fixed set of transversal and fold-transversal gates. Moreover, the slow growth of the stabilizer weight enables a high error-correcting performance in small instances. These results provide a path toward fault-tolerant quantum computation with low space overhead using transversal and fold-transversal gates.
\end{abstract}

\maketitle

\section{Introduction}

Low-space-overhead fault-tolerant quantum computation requires quantum error-correcting codes that encode $k$ logical qubits into $n$ physical qubits at a high rate $k/n$, while maintaining a sufficiently large code distance $d$~\cite{gottesman2014, Tamiya2026, yamasaki2024time}. Favorable code parameters alone, however, are not enough; fault-tolerant logical operations must also be implemented without consuming many ancillary qubits. Among several approaches, two stand out as particularly promising. The first performs logical operations through fault-tolerant measurements. 
In particular, extensive research has been conducted on a protocol called surgery tailored to measure logical Pauli operators of quantum low-density parity-check (qLDPC) codes with additional ancillary qubits~\cite{Cohen2022, Cowtan2024,cross2025,he2025}. Together with high-fidelity magic state preparation, such measurements enable universal Pauli-based computation~\cite{Litinski2019}.
While surgery offers several advantages, such as broad applicability and flexibility in connecting different types of qLDPC codes,  its fault tolerance depends on a constant stabilizer weight. It also typically incurs repeated syndrome-extraction rounds and additional decoding costs. 
The second approach uses transversal operations~\cite{Nielsen2010}, which are not restricted to qLDPC codes and require no additional space overhead. Furthermore, for suitable codes, the time overhead may also be maintained by solving the decoding problem dynamically~\cite{Zhou2025}, which makes this approach particularly attractive.

Transversal gates act independently on individual physical qubits, or on corresponding qubits in different code blocks. Consequently, a fault at one location does not spread to multiple qubits within the same block, making these gates intrinsically fault-tolerant. 
Unfortunately, transversal gates alone cannot form a universal gate set~\cite{Eastin2009}, and thus one needs to combine them with other fault-tolerant primitives. An established protocol is to combine transversal Clifford gates with magic-state preparation~\cite{Bravyi2005} and teleportation~\cite{gottesman2014}.
For a code block encoding multiple logical qubits, however, transversal gates alone cannot even realize the full logical Clifford group~\cite{Chakraborty2026}. This lack of addressability is, in fact, a central obstacle for high-rate codes~\cite{Guyot2026}.

Fold-transversal gates enlarge the available logical gate set while retaining the essential advantages of transversal gates~\cite{Breuckmann2024}. They exploit an involutive duality that exchanges the $X$- and $Z$-structures of a CSS code and pairs its physical qubits. A logical operation can then be implemented by a layer of one- and two-qubit gates, together with a permutation of the physical qubits. These operations preserve bounded error propagation and require no ancillary space.
The idea first appeared in folded surface codes~\cite{Moussa2016} and was subsequently generalized through the framework of $ZX$-dualities~\cite{Breuckmann2024}. For the $\llbracket 30,8,3\rrbracket$ Bring code, for example, fold-transversal operations generate the full Clifford group on a four-logical-qubit subspace. This example demonstrates their power, but does not yield a scalable family with a growing number of protected logical qubits.

Within this framework, code automorphisms form an especially important class of fold-transversal operations~\cite{berthusen2025automorphismgadgetshomologicalproduct}. A code automorphism permutes physical qubits while preserving the stabilizer group and can thereby induce a nontrivial transformation on the logical space. The recently introduced phantom codes use such permutations to implement addressable CNOT gates between logical qubits in the same code block~\cite{koh2026, morris2026, He2026, Mao2026}.
The name ``phantom'' reflects the fact that these permutations can be absorbed into classical relabeling of physical-qubit indices and thus need not correspond to physical gate operations. Phantomness supplies the complete logical CNOT subgroup, but not, by itself, the full Clifford group.
For the phantom quantum Reed--Muller family, which has parameters $\llbracket n,k=\Theta(\log n),d=\sqrt{n}\rrbracket$ at the balanced point, the full Clifford action is obtained by combining permutation CNOTs and fold-diagonal phase gates with targeted phase-state injection and teleported Hadamard gates, but this implementation thus requires ancillary state preparation and measurement, rather than using only transversal and fold-transversal gates with no additional space overhead.
Moreover, automorphism-group bounds show that binary phantom codes of distance at least two obey $k\leq \log_2(n+1)$, apart from a special $k=4$ case \cite{morris2026}. Phantom symmetry therefore cannot by itself produce a high-rate stabilizer-code family.

Several other families of CSS codes support the full logical Clifford group through transversal and fold-transversal gates~\cite{Malcolm2026,tansuwannont2026,holmes2026}. Among these, the quantum Reed--Muller family studied in Ref.~\cite{tansuwannont2026} has code parameters $\left\llbracket
n,\Theta\bigl(\frac{n}{\sqrt{\log n}}\bigr),\sqrt n \right\rrbracket$ and hence the highest encoding rate $k/n$ among the previously known families, although the rate still vanishes asymptotically.
Moreover, even with an optimized generating set, its stabilizer generators have weight $2\sqrt{n}$. Its large logical dimension and $\sqrt{n}$ code distance therefore come at the cost of high-density stabilizers.

The quantum logic codes~\cite{holmes2026} provide another route to a complete and individually addressable logical Clifford gate set. Starting from a small $\llbracket n_0,2,d_0\rrbracket$ core, the construction combines $r$-fold tiling with $\ell$ levels of concatenation by the Steane code, resulting in parameters $\left\llbracket n_0r7^\ell,2r,d_0 3^\ell \right\rrbracket $. Every member admits a constant-depth Clifford gate set built from fold-transversal gates in a broader sense; it allows disjoint one- and two-qubit physical gates in a code block that individually preserve the codespace and is not necessarily based on the $ZX$-duality. Together, these operations realize individually addressable logical $S_i$, $\sqrt{X_i}$, and $\mathrm{C}Z_{ij}$ gates. However, this family cannot simultaneously achieve a constant encoding rate and a growing distance, and its stabilizer weights are also large.

The SHYPS family~\cite{Malcolm2026}---Subsystem Hypergraph Product Simplex codes---is particularly attractive since it supplies fold-transversal logical Clifford computation with constant-weight gauge checks. Automorphisms of the underlying simplex codes, transversal inter-block CNOTs, and Hadamard- and phase-type fold-transversal gates generate the full logical Clifford group with efficient gate depth. Nevertheless, its parameters $\llbracket n,\Theta((\log n)^2),\Theta(\sqrt{n})\rrbracket$ imply a vanishing encoding rate. Moreover, although SHYPS codes have weight-three gauge checks, their stabilizer weights are $\Theta(\sqrt{n})$.
Thus, they are not low-weight stabilizer codes, where we use the term \emph{low weight} in a weaker sense, allowing the weight to grow polylogarithmically in the code size $n$ and the distance $d$ rather than remain constant.
These high-weight stabilizers may complicate the decoding or degrade the error-correction performance, as suggested by the use of higher-order localized-statistics decoding (LSD) for larger code sizes~\cite{Malcolm2026}.

Many additional finite-size CSS codes have full logical Clifford groups generated by fold-transversal gates, and several further families have been conjectured~\cite{albert2026}.
None, however, has been proved to simultaneously achieve a constant encoding rate, a growing distance, slowly growing stabilizer weight, and a complete logical Clifford action generated entirely by transversal and fold-transversal gates. Establishing such a family would combine the space efficiency of high-rate codes with a rich native logical gate set that requires no ancillas.

In this work, we construct a family of quantum CSS codes that occupies precisely this parameter regime. Our construction contains a family of codes with parameters $\llbracket n,\Theta(n), n^{o(1)}\rrbracket$, where the distance grows subpolynomially with $n$, and the stabilizer weight $w$ has a sublogarithmic growth, $w=(\log n)^{1/2-o(1)}$.
To our knowledge, this is the first construction that simultaneously achieves constant rate, growing distance, sublogarithmic growth of stabilizer weight, and a complete logical Clifford action generated entirely by transversal and fold-transversal operations. In particular, it is the first known non-vanishing-rate family with growing distance to possess this logical-gate property. 

A key feature of our construction is its use of arbitrary inter-block logical CNOT operations. Unlike phantom codes~\cite{koh2026,morris2026, He2026, Mao2026}, it does not rely solely on in-block addressable logical CNOT gates and thus avoids the corresponding obstruction to high-rate families~\cite{morris2026}. Our code family thus requires at least two code blocks to generate the full logical Clifford operations.
In addition, the improved code parameters come with a tradeoff: synthesizing an arbitrary logical Clifford operation from the primitive transversal and fold-transversal gates generally requires greater time overhead than in existing constructions~~\cite{koh2026, morris2026, He2026, Mao2026, Malcolm2026, tansuwannont2026, holmes2026}.

Our construction begins with a family of classical codes called \emph{Steinberg codes}~\cite{SMITH1995199}. These codes are derived from finite groups of Lie type and their Steinberg representations~\cite{Carter1972}. The canonical pair of subgroups, called the $BN$-pair, supplies the symmetric structure and incidence relations used to define these codes. At the same time, the coordinate permutations that preserve the parity-check matrix induce the Steinberg representation, making the code space a Steinberg module. Importantly, a suitable family has a constant encoding rate, subpolynomially growing distance, and sublogarithmically growing row and column weights.  We then apply the hypergraph-product (HGP) construction~\cite{Tillich2014} to two copies of a Steinberg code. 
The resulting CSS codes inherit both the constant encoding rate and the automorphisms inducing the Steinberg representation.
Combined with the $ZX$-duality~\cite{Breuckmann2024}, which HGP codes are naturally endowed with~\cite{Quintavalle_2023}, these CSS codes have a rich set of fold-transversal gates. We show that these fold-transversal gates, together with inter-block transversal CNOT gates, can generate all logical Clifford gates whenever the Steinberg representation is absolutely irreducible, which is abundant. We therefore name our code the {\bf \nameofcode} code after {\bf PR}oduct code from {\bf I}rreducible {\bf S}teinberg {\bf M}odules.

The \nameofcode codes are not qLDPC codes in the strict asymptotic sense, since their stabilizer weight grows with the block length. Nevertheless, that growth is only $(\log n)^{1/2-o(1)}$, in sharp contrast to that of existing constructions whose number of logical qubits grows polynomially with the number of physical qubits. For practical code size below $10^{18}$ physical qubits, the stabilizer weights are $\lesssim10$, comparable to those of state-of-the-art qLDPC codes. Our memory simulations further show high error-correcting performance for small instances compared to the rotated surface-code blocks with the same number of logical qubits and a similar code distance. The construction thus demonstrates that sublogarithmic stabilizer weight can coexist with a complete native Clifford action generated by code symmetries. More broadly, the construction establishes finite-group representation theory, and the Steinberg representation in particular, as a mechanism for jointly engineering quantum-code parameters and fault-tolerant logical gates.

The paper is organized as follows. In Sec.~\ref{sec:preliminaries}, we summarize the mathematical tools needed for our code construction and the logical Clifford-group generation.
In Sec.~\ref{sec:construction}, we construct the \nameofcode code family and provide a bottom-up construction of full logical Clifford gates. We also discuss the overhead of the Clifford-gate generation.
In Sec.~\ref{sec:numerical-study}, we perform memory simulations for small instances of the \nameofcode codes, and finally in Sec.~\ref{sec:discussion}, we wrap up our work with several open questions.

\section{Preliminaries}\label{sec:preliminaries}
\subsection{Classical codes for our construction}
For our quantum code construction, we use the binary top-dimensional cycle code studied in Ref.~\cite{SMITH1995199} as the base code of the HGP~\cite{Tillich2014}. Since this cycle code affords the Steinberg module, we call it the \emph{Steinberg code}~\cite{SMITH1995199}. We review its construction and basic properties below, restricting attention to Chevalley groups, i.e., a subclass of finite groups of Lie type. This restriction keeps the relation between the defining field, the Weyl group, and the local incidence parameters particularly transparent.

\subsubsection{Chamber system and Steinberg module}
Let $q$ be a prime power, $\F_q$ be a finite field, and $G$ be a nontrivial Chevalley group of rank $r\geq 1$ over $\F_q$. The group $G$ has a pair $(B,N)$ of subgroups, called a \emph{Tits system} or a \emph{$BN$-pair}, with the following properties~\cite{Abramenko2008}:
\begin{itemize}
    \item The group $G$ is generated by $B$ and $N$.
    \item The intersection $T\coloneqq B\cap N$ is a normal subgroup of $N$.
    \item The quotient $W\coloneqq N/T$, called the Weyl group, is generated by a set of involutions $\mathcal{S}=\{s_1,\ldots,s_r\}$. If $\dot{w},\dot{s}\in N$ are representatives of $w\in W$ and $s\in\mathcal{S}$, respectively, then
    \begin{align}
        (B\dot{s}B)(B\dot{w}B)
        \subseteq B\dot{s}\dot{w}B\cup B\dot{w}B,
        \label{eq:nontrivial_weyl_relation}
    \end{align}
    and
    \begin{equation}
        \dot{s}B\dot{s}^{-1}\not\subseteq B.
    \end{equation}
\end{itemize}
We henceforth fix representatives in $N$ and suppress the dots when no confusion can arise. The Bruhat decomposition is the disjoint double-coset decomposition
\begin{equation}
    G=\bigsqcup_{w\in W}BwB.
    \label{eq:bruhat_decomposition}
\end{equation}
The pair $(W,\mathcal{S})$ is a finite Coxeter system. Its length function $\ell: W\to\mathbb{Z}_{\geq 0}$ is defined by letting $\ell(w)$ be the minimum number of generators from $\mathcal{S}$ needed to express $w$.

The $BN$-pair induces a simplicial complex $\Delta$, called the spherical Tits building of $G$. The building has dimension $r-1$ as a simplicial complex. We do not need its full simplicial description; for the code construction, it suffices to describe its chambers, its panels, and their incidence. A \emph{chamber} is a highest-dimensional simplex, identified with a left coset $gB\in G/B$. For $s\in\mathcal{S}$, the corresponding minimal parabolic subgroup is
\begin{equation}
    P_s\coloneqq B\sqcup BsB. \label{eq:parabolic_subgroup}
\end{equation}
That $P_s$ is a subgroup is a standard consequence of the $BN$-pair axioms, including Eq.~\eqref{eq:nontrivial_weyl_relation}. An $s$-type \emph{panel}, i.e., a codimension-one simplex associated with $s\in{\cal S}$, is identified with a coset $gP_s\in G/P_s$. A chamber $gB$ contains the panel $hP_s$ precisely when
\begin{equation}
    gB\in hP_s/B,
    \qquad
    hP_s/B\coloneqq\{h'B\in G/B:h'\in hP_s\},
\end{equation}
or, equivalently, when $gP_s=hP_s$. Two distinct chambers $gB$ and $hB$ are $s$-adjacent when they contain the same $s$-type panel; equivalently,
\begin{equation}
    gP_s=hP_s
    \quad\Longleftrightarrow\quad
    g^{-1}h\in P_s.
\end{equation}

Let $\F_2[X]$ denote the $\F_2$-vector space with basis $X$. The chamber--panel incidence relation gives the top boundary map in the augmented simplicial chain complex of $\Delta$,
\begin{align}
    \partial:
    \F_2[G/B]
    &\longrightarrow
    \bigoplus_{s\in\mathcal{S}}\F_2[G/P_s],
    \nonumber\\
    gB&\longmapsto \sum_{s\in\mathcal{S}}gP_s.
    \label{eq:steinberg_boundary}
\end{align}
Here the coefficient field is $\F_2$ because we are interested in a binary code, and the usual orientation signs disappear in characteristic two. More explicitly,
\begin{equation}
    \F_2[G/B]\cong C_{r-1}(\Delta;\F_2),
    \qquad
    \bigoplus_{s\in\mathcal{S}}\F_2[G/P_s]
    \cong C_{r-2}(\Delta;\F_2).
\end{equation}
Since $\Delta$ has no $r$-dimensional simplices, its top reduced homology is
\begin{equation}
    \widetilde{H}_{r-1}(\Delta;\F_2)=\ker\partial.
    \label{eq:top_homology_kernel}
\end{equation}

The chambers of the fundamental apartment $A_0$ are
\begin{equation}
    \{wB:w\in W\}.
\end{equation}
These cosets are distinct: if $wB=w'B$ for representatives $w,w'\in N$, then $w^{-1}w'\in B\cap N=T$, so $w$ and $w'$ represent the same element of $W$. Each panel of $A_0$ lies in exactly two chambers~\cite{Abramenko2008}; more concretely, $wB$ and $wsB$ share an $s$-type panel since $ws P_s=wP_s$. Therefore, the \emph{apartment chain}
\begin{equation}
    a_0\coloneqq\sum_{w\in W}wB
\end{equation}
is a cycle, i.e., every panel occurs twice in $\partial a_0$ and hence cancels over $\F_2$. Every apartment is a $G$-translate of $A_0$, and its apartment cycle has the form
\begin{equation}
    a_g\coloneqq g a_0=\sum_{w\in W}gwB,
    \qquad g\in G.
\end{equation}
The Solomon--Tits theorem states, in particular, that the top homology is generated by apartment classes~\cite{Abramenko2008}. Thus,
\begin{equation}
    \ker\partial
    =\widetilde{H}_{r-1}(\Delta;\F_2)
    =\operatorname{span}_{\F_2}\{a_g:g\in G\}.
    \label{eq:apartment_span}
\end{equation}
Note that $\{a_g:g\in G\}$ are linearly dependent in general (See Prop.~\ref{prop:Steinberg_code_parameters}).

This cycle space is the Steinberg module and is denoted by
\begin{equation}
    \operatorname{St}_G(\F_2)
    \coloneqq \widetilde{H}_{r-1}(\Delta;\F_2).
\end{equation}
Indeed, left multiplication by any $g'\in G$ acts on chambers and panels as
\begin{equation}
    gB\longmapsto g'gB,
    \qquad
    gP_s\longmapsto g'gP_s,
\end{equation}
and commutes with the boundary map $\partial$. It therefore preserves $\ker\partial$ and gives the Steinberg representation of $G$ on $\operatorname{St}_G(\F_2)$~\cite{Steinberg1957}.

An important property for our later application is absolute irreducibility. 
\begin{dfn}[Absolutely irreducible]
    A representation on a vector space $V$ over $\F_2$ is \emph{absolutely irreducible} if $V\otimes_{\F_2}\overline{\F}_2$ remains irreducible, where $\overline{\F}_2$ denotes the algebraic closure of $\F_2$.
    For an irreducible finite-field representation $\pi$, this is equivalent to the statement that the only matrices commuting with every $\pi(g)$ are scalar matrices. 
    This in turn is equivalent to the statement that the set of matrices spanned by $\pi(g)$ generates the entire set of matrices $M_n(\F_2)$, where $n = \dim_{\F_2}(V)$. 
\end{dfn}

\begin{prop}[Theorems~2~(iii) and~3 of Ref.~\cite{Steinberg1957}] 
    The binary Steinberg module $\operatorname{St}_G(\F_2)$ is absolutely irreducible if and only if $[G:B]$ is odd. For a nontrivial Chevalley group over $\F_q$, this is equivalent to $q=2^f$.
\end{prop}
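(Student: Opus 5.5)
A convenient route goes through the Steinberg idempotent and a Hom-space computation, reducing absolute irreducibility to the parity of $[G:B]$.

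\textbf{Setup.} Let $U\le B$ be the unipotent radical, so $B=TU$ with $U$ a Sylow $p$-subgroup, where $q=p^f$. Put
\[
e=\sum_{w\in W}\sum_{b\in B}(-1)^{\ell(w)}\,b\,w\in\mathbb{Z}[G].
\]
Then $\operatorname{St}_G(\F_2)$ may be identified with $\F_2[G]\,e$, equivalently with $\F_2[G]\,a_0$ inside $\F_2[G/B]$. Over $\F_2$ the signs $(-1)^{\ell(w)}$ disappear. This is consistent with Eq.~\eqref{eq:apartment_span}.

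\textbf{Step 1: the endomorphism ring is $\F_2$.} I will first show $\operatorname{End}_{\F_2G}(\operatorname{St}_G(\F_2))=\F_2$ unconditionally. An endomorphism $\phi$ is determined by $\phi(a_0)$. The stabiliser data of $a_0$ give $U^-$-type constraints: by Steinberg's argument, $a_0$ spans the $U$-fixed line of the form ``$\sum_{u\in U} u\,w_0 B$''. One checks that $\operatorname{St}$ restricted to $U$ is the regular module $\F_2[U]$, since the chambers opposite $w_0B$ form a $U$-torsor and the projection to them is an isomorphism. Hence the $U$-fixed space is one-dimensional and is spanned by a vector $v$ built from $a_0$. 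Any endomorphism therefore scales $v$. Because $v$ generates the module (Solomon--Tits together with transitivity of $G$ on apartments), $\phi$ is a scalar.

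\textbf{Step 2: the criterion.} Over any field, a module with trivial endomorphism ring is absolutely irreducible if and only if it is irreducible. So it remains to decide irreducibility, which I will do using the standard symmetric bilinear form on $\F_2[G/B]$ in which chambers form an orthonormal basis.
\begin{itemize}
\item Compute $\langle a_0, g a_0\rangle$. Restricted to $\operatorname{St}$, the radical of the form is a submodule, and the form is nonzero on $v$ exactly when $\langle a_0,\sum_{u} u a_0\rangle\neq 0$. This pairing counts, modulo $2$, the chambers shared by $a_0$ and its $U$-translates. That count is $\sum_{w}q^{\ell(w)}=[G:B]$, by the Bruhat decomposition Eq.~\eqref{eq:bruhat_decomposition} with $|BwB/B|=q^{\ell(w)}$.
\item If $[G:B]$ is odd, the form on $\operatorname{St}$ is nondegenerate. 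Then any nonzero submodule $M$ contains a nonzero $U$-fixed vector, hence contains $v$, hence equals $\operatorname{St}$. So the module is irreducible.
\item If $[G:B]$ is even, the radical is a nonzero proper submodule. It is nonzero because it contains $v$, and proper because the form is not identically zero on $\operatorname{St}$. So the module is reducible.
\end{itemize}

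\textbf{Step 3: parity of the index.} We have $[G:B]=\sum_w q^{\ell(w)}$.
\begin{itemize}
\item If $q$ is even, only $w=1$ contributes modulo $2$, so the sum is $1\bmod 2$.
\item If $q$ is odd, the sum is $|W|\bmod 2$. Since $W$ is nontrivial and generated by involutions, $|W|$ is even.
\end{itemize}
Hence $[G:B]$ is odd if and only if $q=2^f$.

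\textbf{Main obstacle.} The delicate point is Step 1 and the nondegeneracy link in Step 2. One must verify that $\operatorname{St}|_U\cong\F_2[U]$, and identify the socle of $\operatorname{St}|_U$ with the single line spanned by $v$, in characteristic $p$ versus $2$. When $p=2$, $U$ is a $2$-group, so every nonzero submodule meets the $U$-fixed line, and the argument is clean. When $p$ is odd, irreducibility already fails by the parity count, so that case needs no such argument.
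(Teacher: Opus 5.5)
Your Step 3 and the reduction ``irreducible with $\operatorname{End}=\F_2$ iff absolutely irreducible'' are fine, and so is the pairing $\langle a_0,v\rangle=[G:B]$. (The paper itself gives no proof here; it cites Steinberg and only adds the parity count.) The genuine gap is the claim that the $U$-fixed vector $v=\sum_{u\in U}u\,a_0$ generates $\operatorname{St}_G(\F_2)$.

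You justify this by ``Solomon--Tits together with transitivity of $G$ on apartments''. That only shows $a_0$ generates, not $v$. The claim cannot follow from such general facts, because it is false for odd $q$. There every fibre of $u\mapsto uwB$ has odd size $q^{\ell(w_0)-\ell(w)}$, so $v=\sum_{x\in G/B}x=\mathbf 1$, which is $G$-fixed and generates a one-dimensional submodule. So Step 1 is not unconditional. In Step 2, the implication ``$M$ contains $v$, hence $M=\operatorname{St}$'' is unproved even for $q$ even.

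Nondegeneracy of the form does not close this as written. For $q$ even, $v$ is the sum of the $q^{\ell(w_0)}$ chambers opposite $B$, so $\langle v,v\rangle=0$. The natural move ``$M^\perp\neq0$ also contains $v$'' then gives $v\in M\cap M^\perp$, which is no contradiction.

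The missing ingredient is the opposite unipotent group $U^-=w_0Uw_0^{-1}$. It fixes $w_0B$ and acts simply transitively on the chambers opposite $w_0B$. By the same projection argument you use for $U$, the vectors $u^-a_0$ with $u^-\in U^-$ (the apartments through $w_0B$) form a basis of $\operatorname{St}$. Each of these apartments contains $w_0B$ exactly once. So in these coordinates, the coefficient of the chamber $w_0B$ is exactly the augmentation of $\F_2[U^-]$.

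For $q$ even, $\F_2[U^-]$ is local with the augmentation ideal as its radical. Hence any $x\in\operatorname{St}$ with coefficient $1$ at $w_0B$ satisfies $\F_2[U^-]x=\operatorname{St}$. Since $v$ contains $w_0B$ with coefficient $1$, this gives $\F_2[G]v=\operatorname{St}$. Irreducibility and $\operatorname{End}=\F_2$ then both follow from your socle argument, and the form is not needed. Equivalently, you could show that $\langle v,w_0v\rangle$ is odd and combine that with nondegeneracy.

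Two smaller fixes. First, the $U$-torsor is the set of chambers opposite $B$, not opposite $w_0B$. Second, for odd $q$ the line $\F_2\mathbf 1$ is already a proper submodule, because $\dim\operatorname{St}=q^{\ell(w_0)}>1$. That is simpler than your unproved assertion that the form is not identically zero on $\operatorname{St}$.
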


\subsubsection{Steinberg code} \label{sec:steinberg_code}
Following Ref.~\cite{SMITH1995199}, we regard the binary top-dimensional homology of the building as a linear code. Identify the chambers $G/B$ with the standard basis of $\F_2^{[G:B]}$ and, for every $s\in\mathcal{S}$, identify the $s$-type panels $G/P_s$ with the standard basis of the corresponding summand of $\bigoplus_{s\in\mathcal{S}}\F_2^{[G:P_s]}$. The matrix of $\partial$ in these bases is the binary chamber--panel incidence matrix $H$, whose entries are
\begin{equation}
    H_{hP_s,gB}
    =
    \begin{cases}
        1,&gP_s=hP_s,\\
        0,&gP_s\neq hP_s.
    \end{cases}
    \label{eq:steinberg_parity_check}
\end{equation}
We use $H$ as a parity-check matrix and define the Steinberg code by
\begin{equation}
    C_{\mathrm{St}}(G)
    \coloneqq\ker H
    =\ker\partial
    =\operatorname{St}_G(\F_2).
    \label{eq:steinberg_code}
\end{equation}
Left multiplication by $G$ simultaneously permutes the chamber coordinates and the panel checks while preserving incidence. It therefore gives a homomorphism from $G$ to the automorphism group of the Tanner graph of $H$, which we call \emph{Tanner-graph automorphism}.
\begin{dfn}[Tanner-graph automorphism]
    The Tanner-graph automorphism ${\rm Aut}(H)$ for a parity-check matrix $H$ is a group with pairs $\{(P_g, Q_g):g\in{\rm Aut}(H)\}$ of permutation representation such that $H P_g = Q_g H$.
\end{dfn}
The Tanner-graph automorphism of the Steinberg code $C_{\mathrm{St}}(G)$ is therefore a transitive coordinate-permutation action of $G$. This symmetry is important in the later application. The homomorphism need not be faithful; for example, central elements may act trivially on the building.

In Ref.~\cite{SMITH1995199}, the parameters of this code are determined as follows.
\begin{prop}[Proposition~6 of Ref.~\cite{SMITH1995199} \label{prop:Steinberg_code_parameters}]
    Let $G$ be a Chevalley group over $\F_q$ with $BN$-pair $(B,N)$ and Coxeter system $(W,\mathcal{S})$. Let $w_0$ be the longest element of $W$. The associated Steinberg code $C_{\mathrm{St}}(G)$ has parameters
    \begin{align}
        n&=\sum_{w\in W}q^{\ell(w)},\\
        k&=q^{\ell(w_0)}=\max_{w\in W}q^{\ell(w)},\\
        d&=|W|.
    \end{align}
\end{prop}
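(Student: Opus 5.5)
The three parameters are handled separately: $n$ by counting, $k$ by Solomon--Tits, $d$ by a residue argument on minimal cycles.

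\emph{Length.} By Eq.~\eqref{eq:bruhat_decomposition} we have $n=[G:B]=\sum_{w\in W}|BwB/B|$. Each double coset $BwB$ is a union of $|B|/|B\cap wBw^{-1}|$ left $B$-cosets. For Chevalley groups $B=UT$, and the standard fact $BwB/B\cong U_w$ holds, where $U_w$ is the product of the root subgroups $U_\alpha$ over the $\ell(w)$ positive roots sent negative by $w^{-1}$. Since $|U_\alpha|=q$, this gives $|BwB/B|=q^{\ell(w)}$, and summing yields the formula for $n$.

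\emph{Dimension.} I would use the strong form of the Solomon--Tits theorem. Fix the opposite chamber $w_0B$. The apartments containing both $B$ and $u w_0 B$, for $u\in U$, are exactly the translates $a_u=u a_0$. The chambers $uw_0B$ run through the big cell $Bw_0B/B$, which has $q^{\ell(w_0)}$ elements. Consider the restriction of a cycle to the coordinates of the big cell. The vectors $a_u$ restricted there form a unit-triangular family: $a_u$ meets the big cell only in chambers opposite $B$, and among the apartments $a_{u'}$ only $a_u$ contains $uw_0B$, since the apartment through two opposite chambers is unique. Hence they are linearly independent, and $k\ge q^{\ell(w_0)}$.

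For the reverse inequality, I would show that a cycle vanishing on the big cell is zero. The argument is an induction on codistance from $B$, i.e.\ on the Bruhat cell, from the top down. Suppose a cycle $c$ vanishes on all cells $BwB$ with $\ell(w)>m$, and let $gB$ be a chamber with $\ell(w)=m<\ell(w_0)$. Pick $s$ with $\ell(ws)>\ell(w)$. The panel $gP_s$ contains $gB$ and $q$ other chambers, all lying in $BwsB$ (by Eq.~\eqref{eq:nontrivial_weyl_relation}), where $c$ vanishes. The check at this panel then forces $c(gB)=0$. Together these give $k=q^{\ell(w_0)}$.

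\emph{Distance.} The upper bound $d\le|W|$ is witnessed by $a_0$. For the lower bound, let $c\neq0$ be a cycle and let $\Sigma$ be its support. Each panel meets $\Sigma$ in an even number of chambers, so $\Sigma$ is a "pseudomanifold-like" subcomplex in which every panel of a support chamber lies in at least two support chambers. Take $C\in\Sigma$ and consider the residues (links) of $C$ inside $\Sigma$. I would argue by induction on rank, via retraction onto an apartment. Fix an apartment $A$ through $C$ and the retraction $\rho_{A,C}$. Then $\rho$ maps galleries to galleries and preserves distance from $C$. Since $\Sigma$ has no boundary, one can build for each $w\in W$ a chamber of $\Sigma$ at Weyl distance $w$ from $C$: starting at $C$, extend galleries one type at a time, always using the panel condition to find a new support chamber. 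Chambers at different Weyl distances from $C$ are distinct, so $|\Sigma|\ge|W|$.

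The main obstacle is this extension step. One must ensure that moving across an $s$-panel in $\Sigma$ reaches Weyl distance $ws$ when $\ell(ws)>\ell(w)$. The panel contains exactly one chamber at distance $w$, or at $ws$ in the shorter direction, and all the others lie at the longer one. So the second support chamber in the panel can be chosen in the correct direction, except when the only other support chamber is the unique short-direction one. In that case the panel projects wrongly, and I would handle it by choosing the reduced word adaptively, or by inducting on $\ell(w)$ through the parabolic residues $W_J$.
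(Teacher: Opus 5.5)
Your arguments for $n$ and $k$ are correct. They also go further than the paper, which justifies only $n$ (via $|BwB/B|=q^{\ell(w)}$ and the Bruhat decomposition) and cites \cite{SMITH1995199} and Solomon--Tits for $k$ and $d$. Restricting to the big cell, combined with the top-down check induction, gives a self-contained proof that the $a_u$, $u\in U$, form a basis of $\ker\partial$. One citation point needs fixing. Eq.~\eqref{eq:nontrivial_weyl_relation} by itself only yields $(BwB)(BsB)\subseteq BwsB\cup BwB$. To exclude $BwB$ when $\ell(ws)>\ell(w)$ you need the standard refinement $(BwB)(BsB)=BwsB$. This follows, for example, from $BsB=U_{\alpha_s}sB$ and $wU_{\alpha_s}w^{-1}=U_{w\alpha_s}\subseteq B$.

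The gap is in the lower bound $d\ge|W|$. You leave the extension step as an unresolved ``main obstacle'' and offer only vague remedies: an adaptive word choice, parabolic induction, and a retraction you never actually use. The case you worry about cannot occur.

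Suppose $D\in\Sigma$ has $\delta(C,D)=w$ and $\ell(ws)>\ell(w)$. The unique chamber of the $s$-panel of $D$ at the shorter distance is $D$ itself, so every other chamber of that panel lies at distance $ws$. The panel meets $\Sigma$ in an even number of chambers and contains $D$. Hence it contains a second support chamber, which must lie at distance $ws$.

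In your own group language, use the $G$-action (which preserves the code) to take $C=B$ and $D=gB$ with $g\in BwB$. The other chambers of $gP_s$ are $gxB$ with $x\in BsB$, so they lie in $BwsB$. This is exactly the fact you already used for the upper bound on $k$.

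Following any reduced word for $w$ therefore produces a support chamber in every Bruhat cell $BwB$. Since these cells are disjoint, $|\Sigma|\ge|W|$. With this observation the proof is complete; as written, the distance lower bound is not yet established.
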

In particular, the first inequality follows directly from $|BwB/B|=q^{\ell(w)}$~\cite{Carter1972} and the Bruhat decomposition Eq.~\eqref{eq:bruhat_decomposition}~\cite{Hiss2011}.
Note that the cardinality of the Borel subgroup $B$ itself can be given for a Chevalley group by~\cite{Carter1972} 
\begin{equation}
    |B| = q^{\ell(w_0)}|T|, \qquad  |T| = \Theta(q^r),\label{eq:cardinality_Borel}
\end{equation}
where $T$ is as defined in $BN$-pair.

The sparsity of the parity-check matrix follows directly from the local incidence structure.
\begin{prop}
    Let $G$ be a Chevalley group over $\F_q$ with Coxeter system $(W,\mathcal{S})$. The column weight $w_{\mathrm{col}}$ and row weight $w_{\mathrm{row}}$ of the parity-check matrix $H$ of the associated Steinberg code are, respectively,
    \begin{equation}
        w_{\mathrm{col}}=|\mathcal{S}|=r,
        \qquad
        w_{\mathrm{row}}=q+1.
    \end{equation}
    Moreover, the number of rows of $H$ is
    \begin{equation}
        m_{\mathrm{chk}}
        =\sum_{s\in\mathcal{S}}[G:P_s]
        =\frac{r n}{q+1}.
        \label{eq:number_of_checks}
    \end{equation}
\end{prop}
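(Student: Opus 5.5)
The plan is to read off all three quantities from the definition of $H$ in Eq.~\eqref{eq:steinberg_parity_check}. Column weight comes from the definition of $\partial$; row weight comes from counting the chambers in a single panel; and the number of rows then follows by double counting the nonzero entries of $H$.

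\emph{Column weight.} The column indexed by $gB$ is $\partial(gB)=\sum_{s\in\mathcal{S}}gP_s$, as in Eq.~\eqref{eq:steinberg_boundary}. For each $s$, exactly one $s$-type panel contains $gB$, namely $gP_s$. These $r$ panels sit in distinct direct summands $\F_2[G/P_s]$, so nothing cancels. Hence each column has weight exactly $|\mathcal{S}|=r$.

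\emph{Row weight.} The row indexed by $hP_s$ has support $hP_s/B$, the set of chambers contained in that panel. Left multiplication by $h$ is a bijection onto this set from $P_s/B$, so its size is $[P_s:B]$. By Eq.~\eqref{eq:parabolic_subgroup}, $P_s/B=\{B\}\sqcup BsB/B$. Since $\ell(s)=1$, the cited fact $|BwB/B|=q^{\ell(w)}$ gives $|BsB/B|=q$. Therefore every row has weight $q+1$.

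\emph{Number of rows.} By definition, $m_{\mathrm{chk}}=\sum_s[G:P_s]$. Double counting the ones of $H$ gives
\begin{equation*}
r\cdot[G:B]=\sum_{s\in\mathcal{S}}[G:P_s]\,(q+1).
\end{equation*}
Proposition~\ref{prop:Steinberg_code_parameters} gives $n=[G:B]=\sum_{w}q^{\ell(w)}$, so $m_{\mathrm{chk}}=rn/(q+1)$. Alternatively, $[G:P_s]=[G:B]/[P_s:B]=n/(q+1)$ for each $s$, and summing over the $r$ generators gives the same result.

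The only nontrivial input is $|BsB/B|=q$ for a simple reflection, which is where the Chevalley-group structure enters. We take this from the cited Bruhat-cell count. If one wanted to verify it directly, the idea is that $BsB/B$ is parametrized by the root subgroup $U_{\alpha_s}\cong\F_q$ via $u\mapsto usB$.
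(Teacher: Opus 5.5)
Your proposal is correct and follows the paper's proof essentially step for step. You get the column weight from each chamber lying in exactly one panel of each type, the row weight from $|P_s/B| = 1 + |BsB/B| = q+1$, and the row count from $[G:P_s]=[G:B]/[P_s:B]$ summed over the $r$ types; your double-counting variant is an equivalent repackaging of that last step.
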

\begin{proof}
    Each chamber contains exactly one panel of each type $s\in\mathcal{S}$. A chamber labels a bit node and a panel labels a check node, so every column of $H$ contains exactly $r=|\mathcal{S}|$ ones.

    A row indexed by an $s$-type panel $gP_s$ has one nonzero entry for each chamber containing that panel. These chambers are the elements of $gP_s/B$, and therefore the row weight is
    \begin{equation}
        |gP_s/B|=[P_s:B]=q+1,
    \end{equation}
    where the last equality follows from Eq.~\eqref{eq:parabolic_subgroup} and the fact that $|BsB/B|=q^{\ell(s)}=q$. Finally,
    \begin{equation}
        [G:P_s]=\frac{[G:B]}{[P_s:B]}=\frac{n}{q+1}
    \end{equation}
    for every $s\in\mathcal{S}$. Summing over the $r$ panel types proves Eq.~\eqref{eq:number_of_checks}.
\end{proof}

Although $H$ generally contains redundant checks, its rank is fixed by the code dimension:
\begin{equation}
    \operatorname{rank}H=n-k.
\end{equation}
Consequently, this canonically defined $H$ has a redundancy. This redundancy is unavoidable for $G$ to act transitively on the chain $C_{r-1}(\Delta;\F_2)\xrightarrow{\partial}C_{r-2}(\Delta;\F_2)$.

The row-weight formula also gives a coding-theoretic view of the reducibility for odd $q$. In that case $q+1$ is even, so the all-one chamber vector satisfies $H\boldsymbol{1}=0$. It is fixed by the coordinate-permutation action of $G$ and hence spans a nonzero proper trivial submodule of $C_{\mathrm{St}}(G)$.

\paragraph{Example: $\operatorname{SL}_3(q)$.} 
Let $G=\operatorname{SL}_3(q)$. Its building is known to be the projective plane $\operatorname{PG}(2,q)$~\cite{Abramenko2008}. Chambers are incident point--line pairs, while panels are points or lines. Thus the chamber coordinates are the edges of the point--line incidence graph, the panel checks are its vertices, and $H$ is the vertex--edge incidence matrix of that graph. Algebraically, the Borel subgroup $B$ corresponds to upper-triangular matrices, and $N$ corresponds to the normalizer of the diagonal subgroup $T$ in $B$. The Weyl group $W=N/T$ is isomorphic to the permutation group $S_3$, which is generated by two involutions, i.e., $S_3=\langle s_1,s_2|s_1^2=s_2^2=(s_1s_2)^3=1\rangle$.
Since $W\cong S_3$ has a length generating function given by
\begin{equation}
    \sum_{w\in W}q^{\ell(w)}=1+2q+2q^2+q^3
    =(q+1)(q^2+q+1),
\end{equation}
the code parameters are
\begin{equation}
    [n,k,d]
    =\bigl[(q+1)(q^2+q+1),\,q^3,\,6\bigr].
\end{equation}
Every column of $H$ has weight $2$, every row has weight $q+1$, and the apartment words are the six-edge hexagons in the incidence graph. This example shows geometrically why apartment cycles attain the minimum distance.

For $q=2$, the code parameter is given by $[21, 8, 6]$. Its parity-check matrix and Tanner-graph automorphisms are given in Appendix~\ref{appendix:code-examples}.

\paragraph{Explicit subfamily: $G=\operatorname{SL}_{r+1}(q)$} \label{par:subfamily}
The above example is just an instance of a sequence of Chevalley groups $G=\operatorname{SL}_{r+1}(q)$ with the Lie rank $r$. Its Borel subgroup is again the upper triangular matrices. The Weyl group is the permutation group $S_{r+1}$, and thus $|W|=(r+1)!$. The size $|B|$ of the Borel subgroup is given by $q^{r(r+1)/2}(q-1)^r$, the number of chambers $[G:B]$ is given by~\cite{Morrison2006}
\begin{align}
    [G:B] &=\prod_{i=1}^{r+1}(1+q+\cdots q^{i-1}) = \prod_{i=2}^{r+1} \frac{q^i-1}{q-1},
\end{align}
and the maximum length $\max_{w\in W}l(w)$ of the Weyl group is $r(r+1)/2$ (reverse permutation). Thus, as both $q$ and $r$ grows, the code parameter scales as 
\begin{equation}
[\leq q^{r(r+1)/2}e^{r/q+r/q^2},q^{r(r+1)/2},(r+1)!], \label{eq:asymptotic_scaling}
\end{equation}
where the first term follows from
\begin{align}
    \ln[G:B] &= \sum_{i=2}^{r+1} [\ln q^{i-1} +\ln(1-q^{-i}) - \ln(1-q^{-1})] \\
    &\leq \frac{r(r+1)}{2}\ln q - \left(\sum_{i=2}^{r+1} q^{-i}\right) + r\left(\frac{1}{q}+\sum_{j=2}^{\infty}\frac{q^{-j}}{j}\right) \\ 
    &\leq\frac{r(r+1)}{2}\ln q + \frac{r}{q} + \frac{r}{q^2}. 
\end{align}
The last inequality holds whenever $q\geq 2$.
Thus, by setting $r=cq$ with a constant $c$ and letting $q$ grow, this code family achieves a constant rate. The distance, on the other hand, grows only subpolynomially (yet faster than quasi-polylogarithmically) in $n$, since $d\sim q^{cq}$ compared to $n\sim q^{c^2 q^2}$. The row and column weights also grow, but very slowly. More precisely, it grows as $(\log n)^{1/2-o(1)}$.

\begin{rem} \label{rem:meta-code}
    Although the only information of the spherical Tits building $\Delta$ used for the construction is vector spaces $C_{r-1}(\Delta;\F_2)$, $C_{r-2}(\Delta;\F_2)$, and the homology space $\widetilde{H}_{r-1}(\Delta;\F_2)$, the fact that $\widetilde{H}_i(\Delta;\F_2)=0$ for $i<r-1$, which follows from the Solomon-Tits theorem~\cite{Abramenko2008}, may be useful for the syndrome consistency. For $r\geq 2$, consider the first three chains of the Tits building $\Delta$ as 
    \begin{equation}
        C_{r-1}(\Delta;\F_2) \xrightarrow{\partial} C_{r-2}(\Delta;\F_2) \xrightarrow{\partial'} C_{r-3}(\Delta;\F_2), \label{eq:length-three-chain}
    \end{equation}
    where $C_{-1}(\Delta;\F_2)\coloneqq \F_2$ for convenience.
    Then, since $\ker{\partial'} = \im\partial$, the syndrome space $\im\,\partial$ of the Steinberg code can be regarded as the codewords of another code, $\ker{\partial'}$. This structure can be exploited to construct meta-checks~\cite{Higgott2023, Ostrev2024} when constructing a quantum code.

    Note that the code distance of $\ker{\partial'}$ is given by $r$. This is because the row weight of $\partial$ is $r$, which upper-bounds the distance of $\ker{\partial'}$, and any $(r-2)$-cycle in a simplicial complex has at least $r$ simplices, which lower-bounds the distance. This is typically much smaller than the code distance $|W|$ of the Steinberg code.
\end{rem}

\subsection{Binary symplectic representation of $\overline{\mathrm{Cl}_n/\mathcal{P}_n}$} \label{sec:symplectic-rep}

In this section, we introduce the binary symplectic representation of the $n$-qubit Clifford group~\cite{Dehaene2003, Aaronson2004}, which is a convenient notation used throughout the paper for our construction of full logical Clifford gates.
Let $\GL_n(\F_2)$ denote the set of $n \times n$ invertible binary matrices.

\paragraph{Binary representation of Pauli operators.}
The $n$-qubit Pauli group ${\cal P}_n$ can be written as
\begin{equation}
{\cal P}_n\coloneqq \left\{
i^{\kappa}X^{\bm{x}}Z^{\bm{z}}
\;\middle|\;
\kappa\in\mathbb{Z}_4,\
\bm{x},\bm{z}\in\F_2^n
\right\},
\end{equation}
where
\begin{equation}
X^{\bm{x}}Z^{\bm{z}}\coloneqq\bigotimes_{j=1}^{n}X_j^{x_j}Z_j^{z_j}.
\end{equation}
We identify Pauli operators up to a global phase by taking the projective version $\overline{\mathcal{P}}_n$ of $\mathcal{P}_n$, where two Pauli operators are equivalent if they differ up to a phase.
If one writes 
\begin{equation}        
    \forall \bm{v}=(\bm{x}, \bm{z})\in\F_2^{2n}, \qquad P(\bm{v})=X^{\bm{x}}Z^{\bm{z}},
\end{equation} 
then any element of $\overline{\mathcal{P}}_n$ can be uniquely identified as a representative $[P(\bm{v})]$ with a binary vector $\bm{v}\in\F_2^{2n}$.

Multiplication of Pauli operators corresponds to addition of their binary labels:
\begin{equation}
[P(\bm{v})] [P(\bm{w})] = [P(\bm{v}+\bm{w})], \qquad \bm{v},\bm{w}\in\F_2^{2n}.
\end{equation}
Hence, we identify
\begin{equation}
\overline{\mathcal{P}}_n
\cong
\F_2^{2n}
\end{equation}
as additive groups.

The commutation relation of two Pauli operators 
\begin{equation}
    P(\bm{v})P(\bm{w}) = (-1)^{\langle\bm{v},\bm{w}\rangle_{\Omega}} P(\bm{w})P(\bm{v})
\end{equation}
is encoded by the binary symplectic form defined by
\begin{align}
\langle\bm{v},\bm{w}\rangle_{\Omega}
&=
\bm{v}\Omega\bm{w}^T,
\end{align}
where
\begin{equation}
\Omega = 
\begin{pmatrix}
0&I_n \\
I_n&0
\end{pmatrix}.
\end{equation}
Thus, $P(\bm{v})$ and $P(\bm{w})$ commute iff $\langle\bm{v},\bm{w}\rangle_{\Omega}=0$.

\paragraph{Binary symplectic representation of a Clifford operator.}
The $n$-qubit Clifford group $\mathrm{Cl}_n$ is the normalizer of $\mathcal{P}_n$ in the $n$-qubit unitary group $\mathrm{U}(2^n)$:
\begin{equation}
\left\{
U\in\mathrm{U}(2^n)
\;\middle|\;
U\mathcal{P}_nU^\dagger=\mathcal{P}_n
\right\}.
\end{equation}
Since ${\cal P}_n$ is a normal subgroup of $\mathrm{Cl}_n$, the quotient $\mathrm{Cl}_n/{\cal P}_n$ also forms a group. When two Clifford operators differ only by a Pauli operator, their conjugation actions on Pauli operators differ at most by signs. Consequently, after discarding Pauli phases, the two Clifford operators induce the same transformation. 
Here, we do not care about a global phase, so we consider the projective quotient $\overline{\mathrm{Cl}_n/\mathcal{P}_n}$ of $\mathrm{Cl}_n/\mathcal{P}_n$.

For every $U\in\mathrm{Cl}_n$, conjugation maps each Pauli operator to another Pauli operator. It therefore induces a bijection $f_U:\F_2^{2n}\to\F_2^{2n}$ of the binary label space defined by
\begin{equation}
    [UP(\bm{v})U^{\dagger}]= [P(f_U(\bm{v}))].
\end{equation}
Using multiplication of Pauli classes, one obtains
\begin{align}
[P(f_U(\bm{v}+\bm{w}))]
&=
[UP(\bm{v}+\bm{w})U^\dagger]
\\
&=
[UP(\bm{v})U^\dagger]
[UP(\bm{w})U^\dagger]
\\
&=
[P(f_U(\bm{v})+f_U(\bm{w}))].
\end{align}
Because binary labels are unique, this implies
\begin{equation}
f_U(\bm{v}+\bm{w}) = f_U(\bm{v})+f_U(\bm{w}).
\end{equation}
Thus, $f_U$ is linear over $\F_2$. Consequently, there is an invertible binary matrix $F_U \in \GL_{2n}(\F_2)$ such that
\begin{equation}
f_U(\bm{v})=\bm{v}F_U^T.
\end{equation}

Since conjugation by a unitary preserves commutation and anticommutation relations, the induced map $f_U$ must preserve the symplectic form. Thus, the following holds
\begin{equation}
    \langle \bm{v}F_U^T,\bm{w}F_U^T\rangle_{\Omega} = \langle\bm{v},\bm{w}\rangle_{\Omega}
\end{equation}
for all $\bm{v},\bm{w}\in\F_2^{2n}$. Therefore,
\begin{equation}
F_U^T\Omega F_U=\Omega.
\end{equation}
A binary matrix satisfying this condition is called symplectic. The binary symplectic group $\mathrm{Sp}_{2n}(\F_2)$ is
\begin{equation}
\mathrm{Sp}_{2n}(\F_2) \coloneqq \left\{
F\in\operatorname{GL}_{2n}(\F_2)
\;\middle|\;
F^T\Omega F=\Omega
\right\}.
\end{equation}

\paragraph{Identification of the quotient.}

The assignment
\begin{equation}
\begin{split}
\Phi:
\overline{\mathrm{Cl}}_n
&\longrightarrow
\operatorname{Sp}_{2n}(\F_2),
\\
[U]&\longmapsto F_U,
\end{split}
\end{equation}
is a group homomorphism since $F_{UV}=F_U F_V$ holds.
Every Pauli operator lies in its kernel. Indeed, if $Q=P(\bm{q})\in \overline{\mathrm{Cl}}_n$, then
\begin{equation}
 QP(\bm{v})Q^{\dagger} = (-1)^{\langle\bm{q},\bm{v}\rangle_{\Omega}}
P(\bm{v}),
\end{equation}
so conjugation by $Q$ changes only the phase of $P(\bm{v})$ and leaves its binary label unchanged. In particular, $U$ and $UQ$ determine the same symplectic matrix.

Conversely, suppose that $F_U=I_{2n}$. Then conjugation by $U$ fixes the binary label of every Pauli operator and can only change its sign. These signs form a character of the binary Pauli space. Because the symplectic form is nondegenerate, there is a vector $\bm{q}$ such that these signs are exactly those produced by conjugation with $P(\bm{q})$. The operator $P(\bm{q})^\dagger U$ then commutes with every Pauli operator. Since the Pauli operators span the full matrix algebra, $P(\bm{q})^\dagger U$ must be a scalar multiple of the identity. Thus, projectively, $U$ is a Pauli operator, and
\begin{equation}
\ker\Phi=\overline{\mathcal{P}}_n.
\end{equation}

Finally, every binary symplectic transformation can be realized by a Clifford operator. For example, the Hadamard, phase, and CNOT gates induce elementary symplectic transformations that generate $\operatorname{Sp}_{2n}(\F_2)$ \cite{Dehaene2003}. Hence, $\Phi$ is surjective, giving that
\begin{equation}
\overline{\mathrm{Cl}_n/\mathcal{P}_n} = \overline{\mathrm{Cl}_n} /
\overline{\mathcal{P}_n}
\cong
\operatorname{Sp}_{2n}(\F_2)
.
\end{equation}
The symplectic matrix $F_U$ records the action of $U$ on the binary Pauli labels but does not record the phase function $\epsilon_U(\bm{v})$ appearing in the following equation:
\begin{equation}
    UP(\bm{v})U^{\dagger} = \epsilon_U(\bm{v})P(\bm{v}F_U^T).
\end{equation}
Changing those phases corresponds precisely to multiplying $U$ by a Pauli operator, which explains why $F_U$ represents a coset in $\mathrm{Cl}_n/\mathcal{P}_n$ rather than an individual Clifford operator.

\subsection{Hypergraph product codes} \label{sec:hypergraph_product}
In this section, we summarize how to construct a quantum CSS code from classical codes using the hypergraph product construction~\cite{Tillich2014}.
From two classical linear codes with parity check matrices $H_A \in \F_2^{m_a \times n_a}$ and $H_B \in \F_2^{m_b \times n_b}$, respectively, the hypergraph product $\text{HGP}(H_A,H_B)$ of $H_A$ and $H_B$ is constructed by taking a homological product of chain complexes $\F_2^{n_a} \xrightarrow{H_A} \F_2^{m_a}$ and $\F_2^{m_b} \xrightarrow{H_B^T} \F_2^{n_b}$ to form a chain complex $C_2 \xrightarrow{H_Z^T} C_1 \xrightarrow{H_X} C_0$ of length-2 as follows.
\begin{equation}
\begin{tikzcd}
    & \F_2^{n_a} \otimes \F_2^{n_b} 
    \arrow[dr, "H_A \otimes I_{n_b}"] \\
    \F_2^{n_a} \otimes \F_2^{m_b} 
    \arrow[ur, "I_{n_a} \otimes H_B^T"] 
    \arrow[dr, "H_A \otimes I_{m_b}"']  
    && \F_2^{m_a} \otimes \F_2^{n_b} \\
    & \F_2^{m_a} \otimes \F_2^{m_b}
    \arrow[ur, "I_{m_a} \otimes H_B^T"']
\end{tikzcd}
\end{equation}
The parity check matrices $H_Z$ and $H_X$ of $\text{HGP}(H_A, H_B)$, where each row corresponds to a stabilizer generator, are thus given by
\begin{align}
    H_Z &= (I_{n_a} \otimes H_B ~|~ H_A^T \otimes I_{m_b}), \label{eq:Z_stabilizer_generator}\\  
    H_X &= (H_A \otimes I_{n_b} ~|~ I_{m_a} \otimes H_B^T),\label{eq:X_stabilizer_generator}
\end{align}
which satisfies the CSS condition $H_X H_Z^T=0$. We call the physical qubits corresponding to the columns of the left-hand side of the bar $|$ the left-sector physical qubits and those corresponding to the right-hand side the right-sector physical qubits.

Let $V^*$ be the dual space of $V \subset \F_2^{n}$. 
The set of logical Pauli operators of a HGP code $\text{HGP}(H_A, H_B)$ is then given by
\begin{align}
    \mathcal{L}_Z &= \{(g_A \otimes \bar{g}_B ~|~ 0): g_A^T \in \ker{H_A},~ \bar{g}_B^T \in (\ker{H_B})^* \} \\
    \cup& 
    \{(0 ~|~ \bar{g}'_A \otimes g'_B): \bar{g}'^T_A \in (\ker{H_A^T})^*,~ g'^T_B \in \ker{H_B^T} \}, \\
    \mathcal{L}_X &= \{(\bar{g}_A \otimes g_B ~|~ 0): \bar{g}_A^T \in (\ker{H_A})^*,~ g_B^T \in \ker{H_B} \} \\
    \cup& 
    \{(0 ~|~ g'_A \otimes \bar{g}'_B): g'^T_A \in \ker{H_A^T},~ \bar{g}'^T_B \in (\ker{H_B^T})^* \},
\end{align}
which are also separated into the left and right sectors~\cite{Quintavalle_2023}. 
We can also give the generator matrices of these logical Pauli operators. Let $G_A, G_B$ be the $k_a\times n_a$ and $k_b \times n_b$ generator matrices of the two classical codes, i.e., $H_A G_A^T = 0 = H_B G_B^T$. Then, there exist matrices $G_A^*$ and $G_B^*$ such that the left-sector logical Pauli operators are generated by 
\begin{align}
    G_Z^{(L)} &= (G_A \otimes G_B^* ~|~ 0), \label{eq:lshgp-logical-basis-Z} \\
    G_X^{(L)} &= (G_A^* \otimes G_B ~|~ 0),\label{eq:lshgp-logical-basis-X}  \\
    \intertext{and}
    G_{A(B)} (G_{A(B)}^* )^T &= G_{A(B)}^* G_{A(B)}^T = I_{k_{a(b)}}.
\end{align}
In fact, $G_{A(B)}^*$ is given by the transpose of the right inverse of $G_{A(B)}$. The generators of the right-sector logical operators can also be written in the same way.

The minimum weight of the left-sector logical Pauli operators, i.e., $\{u(G_Z^{(L)})^T:u\in\F_2^{n_a n_b + m_a m_b}\}\cup \{u(G_X^{(L)})^T:u\in\F_2^{n_a n_b + m_a m_b}\}$, is given by $\min\{d_a,d_b\}$, where $d_a$ and $d_b$ are the code distances of the classical code $\ker{H_A}$ and $\ker{H_B}$, respectively~\cite{Quintavalle_2023}. Similarly, the minimum weight of the right-sector logical Pauli operators is given by $\min\{d_a^T,d_b^T\}$, where $d_a^T$ and $d_b^T$ are the minimum distances of $\ker{H_A^T}$ and $\ker{H_B^T}$, respectively.

The code $\text{HGP}(H_A, H_B)$ thus has the code parameter $\llbracket n_a n_b+m_a m_b,k_a k_b + k_a^T k_b^T,\min\{d_a,d_b,d_a^T,d_b^T\}\rrbracket$, where $k_i^T=m_i-(n_i-k_i)$. For our later purposes, however, we only care about the left-sector logical qubits and gauge out the right-sector logical qubits. We call such a code $\text{LS-HGP}(H_A, H_B)$.
\begin{dfn}
    The CSS code $\text{LS-HGP}(H_A, H_B)$ is defined as keeping the left-sector logical qubits of $\text{HGP}(H_A, H_B)$ and gauging its right-sector logical qubits. 
\end{dfn}
Thus, strictly speaking, $\text{LS-HGP}(H_A, H_B)$ is a subsystem code while its gauge degree of freedom is not actively used in the following analysis.
Since the right-sector logical operators are no longer relevant, the code parameter of $\text{LS-HGP}(H_A, H_B)$ is given by $\llbracket n_a n_b+m_a m_b,k_a k_b,\min\{d_a, d_b\}\rrbracket$. 
Indeed, all the transversal and the fold-transversal gates that we use throughout this paper never couple the left- and right-sector logical qubits~\cite{Quintavalle_2023, berthusen2025automorphismgadgetshomologicalproduct}, and thus the minimum weight $\min\{d_a, d_b\}$ of the left-sector logical Pauli operators Eqs.~\eqref{eq:lshgp-logical-basis-Z} and \eqref{eq:lshgp-logical-basis-X} is not reduced by any logical operations induced by them.

\subsubsection{Automorphisms inherited from a classical code} \label{sec:single-block-diagonal}
We review how an automorphism of a Tanner graph of a classical code gives rise to a Clifford automorphism of the corresponding HGP code~\cite{Hong2024, berthusen2025automorphismgadgetshomologicalproduct}. Let $\mathrm{Aut}(H)$ be the group of automorphisms of a Tanner graph corresponding to a parity-check matrix $H\in \F_2^{m \times n}$ of a classical code $\ker{H}$. Let $G \in \F_2^{k \times n}$ be a generator matrix of this code, with $k=n-{\rm rank}(H)$. For a coordinate permutation $\sigma \in \mathrm{Aut}(H)$, let $P^{\sigma}\in\F_2^{n\times n}$ be a permutation matrix defined as
\begin{equation}
    \sigma: v \mapsto v P^{\sigma}. 
\end{equation}
Since $\sigma$ is a Tanner-graph automorphism, there exists another permutation matrix $Q^{\sigma}\in\F_2^{m\times m}$ acting on the rows of $H_A$ as 
\begin{equation}
    H P^{\sigma} = Q^{\sigma} H. \label{eq:coordinate_check_permutation}
\end{equation}
A Tanner-graph automorphism may act nontrivially on codewords, i.e., there exists $\pi(\sigma)\in\mathrm{GL}_k(\F_2)$ such that
\begin{equation}
    G P^{\sigma} = \pi(\sigma)G.
\end{equation}
In this sense, the code space is a representation space of $\mathrm{Aut}(H)$.

We now consider a HGP code $\mathrm{HGP}(H_A, H_B)$ with parity-check matrices given in Eqs.~\eqref{eq:Z_stabilizer_generator} and~\eqref{eq:X_stabilizer_generator}. Then, there are two natural lifts of classical Tanner-graph automorphisms discussed below. Before writing them explicitly, recall that an invertible binary matrix $L\in \GL_{n_a n_b+m_a m_b} (\F_2)$ determines a $Z$- and $X$-preserving Clifford transformation whose binary symplectic representation is
\begin{equation}
    \mathcal{D}(L) = \begin{pmatrix}L & 0 \\ 0 & L^{-T} \end{pmatrix}.
\end{equation}
The inverse transpose in the $Z$ block follows from preserving the Pauli commutation relation. Thus, to obtain a logical Clifford automorphism of the HGP code, it is enough to show that 
\begin{equation}
    \mathrm{row}(H_X L) = \mathrm{row}(H_X), \qquad \mathrm{row}(H_Z L^{-T}) = \mathrm{row}(H_Z), \label{eq:preservation_rule}
\end{equation}
where $\mathrm{row}(H)$ denotes the row space of $H$. These conditions say exactly that this Clifford normalizes the $X$- and $Z$-stabilizer groups.

For a Tanner-graph automorphism $\sigma_a\in{\rm Aut}(H_A)$, define the first-tensor-factor lift by
\begin{equation}
    L_A(\sigma_a)=(L_A(\sigma_a))^{-T} = (P_A^{\sigma_a} \otimes I_{n_b}) \oplus (Q_A^{\sigma_a} \otimes I_{m_b}). \label{eq:left_tensor_factor}
\end{equation}
In the same way, for $\sigma_b\in{\rm Aut}(H_B)$ the second-tensor-factor lift is defined as 
\begin{equation}
    L_B(\sigma_b)=(L_B(\sigma_b))^{-T} = (I_{n_a}\otimes P_B^{\sigma_b}) \oplus (I_{m_a} \otimes Q_B^{\sigma_b}). \label{eq:right_tensor_factor}
\end{equation}
We now verify Eq.~\eqref{eq:preservation_rule}. For the first-tensor-factor lift, we have
\begin{align}
    H_X L_A(\sigma_a) &= (H_A P_A^{\sigma_a} \otimes I_{n_b}~|~Q_{A}^{\sigma_a}\otimes H_B^T) \\
    &= (Q_A^{\sigma_a}H_A  \otimes I_{n_b}~|~Q_{A}^{\sigma_a}\otimes H_B^T) \\
    &= (Q_A^{\sigma_a}\otimes I_{m_b})H_X.
\end{align}
This shows $\mathrm{row}(H_X L_A(\sigma_a))=\mathrm{row}(H_X)$. For the $Z$ checks, we have
\begin{align}
    H_Z (L_A(\sigma_a))^{-T} &= (P_A^{\sigma_a}\otimes H_B~|~H_A^T Q_A^{\sigma_a} \otimes I_{m_b}) \\
    &= (P_A^{\sigma_a}\otimes H_B~|~P_A^{\sigma_a} H_A^T  \otimes I_{m_b}) \\
    &= (P_A^{\sigma_a}\otimes I_{m_b}) H_Z,
\end{align}
and thus $\mathrm{row}(H_Z (L_A(\sigma_a))^{-T})=\mathrm{row}(H_Z)$. Thus, the first-tensor-factor lift preserves both the $X$ and $Z$ stabilizer subgroups.

The second-tensor-factor lift is completely analogous. Hence, $L_A(\sigma_a)$ and $L_B(\sigma_b)$ both normalize the stabilizer group and therefore define the logical Clifford automorphisms of the HGP code. If we align the physical qubits of $\mathrm{HGP}(H_A, H_B)$ in the square grid, placing the $(i,j)$-th qubit on the $i$-th row and $j$-th column of the grid, then $U_A(\sigma_a)$ only permutes the rows, whereas $U_B(\sigma_b)$ only permutes the columns. In particular, these two actions commute. Thus, $\mathrm{HGP}(H_A, H_B)$ has the two commuting lifts of classical Tanner-graph automorphisms as quantum code automorphisms.

The induced actions of these automorphisms can be given explicitly for the left-sector logical qubits. We have 
\begin{align}
    G_Z^{(L)} L_{A}(\sigma_a) &= (\pi_A(\sigma_a)\otimes I_{k_b}) G_Z^{(L)},\\
    G_Z^{(L)} L_{B}(\sigma_b) &= (I_{k_a} \otimes (\pi_B(\sigma_b))^{-T})G_Z^{(L)},\\
    G_X^{(L)} L_{A}(\sigma_a) &= ((\pi_A(\sigma_a))^{-T}\otimes I_{k_b}) G_X^{(L)},\\
    G_X^{(L)} L_{B}(\sigma_b) &= (I_{k_a} \otimes \pi_B(\sigma_b))G_X^{(L)},
\end{align}
where we used $G^* P^\sigma=(\pi(\sigma))^{-T} G^*$ for the generator $G$ with $G P^\sigma = \pi(\sigma) G$. Thus, its symplectic-matrix representation on the left-sector logical qubits can be given by
\begin{equation}
    \begin{pmatrix}
        (\pi_A(\sigma_a))^{-T}\otimes I_{k_b} & 0 \\ 0 & \pi_A(\sigma_a)\otimes I_{k_b}
    \end{pmatrix},
\end{equation}
for $L_A(\sigma_a)$ and
\begin{equation}
    \begin{pmatrix}
        I_{k_a}\otimes \pi_B(\sigma_b)  & 0 \\ 0 & I_{k_a}\otimes (\pi_B(\sigma_b))^{-T}
    \end{pmatrix},
\end{equation}
for $L_B(\sigma_b)$.

We finally comment on why we stick to Tanner-graph automorphisms of the base classical codes instead of more general code automorphisms. For a classical code with a parity-check matrix $H$, a general code automorphism $\mathrm{Aut}(\ker{H})$ only needs to preserve $\ker{H}$ under a coordinate permutation. This means that, for $\sigma\in\mathrm{Aut}(\ker{H})$, $Q^{\sigma}$ in Eq.~\eqref{eq:coordinate_check_permutation} is not restricted to a permutation matrix but an element of $\mathrm{GL}_m(\F_2)$. Then, since $L_{A(B)}(\sigma_{a(b)})$ has $Q_{A(B)}^{\sigma_{a(b)}}$ action on physical qubits from Eq.~\eqref{eq:coordinate_check_permutation}, which is no longer a physical qubit permutation, $L_{A(B)}(\sigma_{a(b)})$ is not a code automorphism of $\mathrm{HGP}(H_A, H_B)$ in the conventional sense. This is why Ref.~\cite{Malcolm2026} used the subsystem hypergraph product code~\cite{bacon2006, li2020} rather than the usual hypergraph product code; to preserve all the code automorphisms $\mathrm{Aut}(\ker{H})$ of the simplex codes, they needed to use the subsystem version of the hypergraph product in which $H^T$ does not appear in the stabilizer generator. As a cost, the resulting CSS code is not qLDPC as a stabilizer code, although the gauge checks have constant weights.
Here, in contrast, since the automorphism group we want to lift from classical codes is $\mathrm{Aut}(H)$, i.e., Tanner-graph automorphisms, we can use the conventional hypergraph product code.

\subsubsection{$ZX$-duality and fold-transversal gates for $\text{HGP}(H,H)$} \label{subsec:ZX-duality}
Hereafter, we consider the case $H_A=H=H_B$ for the seed classical code of $\text{HGP}(H_A, H_B)$. 
In this case, the parity-check matrices are given by
\begin{align}
    H_Z &= (I_{n} \otimes H ~|~ H^T \otimes I_{m}), \\  
    H_X &= (H \otimes I_{n} ~|~ I_{m} \otimes H^T).
\end{align}
The code $\text{HGP}(H,H)$ has an additional symmetry called the $ZX$-duality~\cite{Breuckmann2024}. Let us index a left-sector physical qubit as $(i,j)\in[n]\times[n]$, corresponding to the $i$-th column of the first tensor factor and the $j$-th column of the second tensor factor, where $[n]=\{1,\ldots,n\}$. In the same way, the right-sector physical qubits can be indexed by $(a,b)\in[m]\times[m]$. 
Throughout this section, $i$ and $j$ denote the indices of the left-sector qubits while $a$ and $b$ denote those of the right-sector qubits.

Now, we define swap maps $\tau_{L}$ and $\tau_{R}$ as 
\begin{align}
    \tau_L: (i,j) &\mapsto (j,i), \\
    \tau_R: (a, b) &\mapsto (b, a),
\end{align}
which correspond to physical-qubit swaps. 
Similar to the physical-qubit index, the $X$ and $Z$ checks are indexed by their row numbers in $H_X$ and $H_Z$, which are in turn identified by the coordinates of the rows of the first and second tensor factors as $(i', a')\in[n]\times [m]$ and $(b', j')\in[m]\times [n]$, respectively. Let $\tau_C$ be defined as 
\begin{equation}
    \tau_C: (i', a') \mapsto (a', i').
\end{equation}
Then, we have
\begin{equation}
    H_X = \tau_C H_Z \tau, \label{eq:exchange_XZ_stabilizer}
\end{equation}
where 
\begin{equation}
    \tau \coloneqq (\tau_L \oplus \tau_R),
\end{equation}
and we abuse the notation by using the same symbol $\tau$ for the map and the matrix.
Indeed, the coordinate exchange gives
\begin{equation}
    \tau_C (I_n\otimes H)\tau_L = H \otimes I_n,
\end{equation}
and 
\begin{equation}
    \tau_C (H^T\otimes I_m)\tau_R = I_m \otimes H^T.
\end{equation}
Thus, $\tau$ maps the support of every $Z$-type stabilizer to the support of a corresponding $X$-type stabilizer. This symmetry is called a $ZX$-duality.

Let us write the swap gates that implement the $\tau$ action in the above as ${\rm SWAP}_{\tau}$. If one places the physical qubit indexed by $(i, j)$ on the $(i, j)$-th entry and $(a, b)$ on the $(a, b)$-th entry of the square grid, where the left- and right-sector physical qubits are placed on different square grids sharing the same diagonal line, then ${\rm SWAP}_{\tau}$ swaps the qubits across the diagonal line of the square grids.

Although the physical qubit swaps ${\rm SWAP}_{\tau}$ alone do not exchange the logical Pauli $Z$ and $X$, combined with the physical transversal Hadamard gates ${\rm H}^{\otimes (n^2+m^2)}$, it realizes a logical Pauli exchange $Z\leftrightarrow X$.  This type of gate $\foldH$ is called the fold-transversal Hadamard gate~\cite{Breuckmann2024}, which is defined as
\begin{equation}
    \foldH\coloneqq {\rm SWAP}_{\tau}\,{\rm H}^{\otimes (n^2+m^2)}.
\end{equation}
This gate $\foldH$ satisfies
\begin{equation}
    \foldH\; Z(v) \; \foldH^{\dagger} = X(v\tau),
\end{equation}
where $v\in\F_2^{n^2+m^2}$, and $Z(v)$ denotes a multi-qubit Pauli-$Z$ operator that acts on the qubits corresponding to the elements $1$ in $v$. Here, we suppose that $v$ is indexed by the left-sector index $(i,j)$ and the right-sector index $(a,b)$ with the correspondence given above.
Thus, the $Z$- and $X$-type stabilizer subgroup ${\cal S}_Z$ and ${\cal S}_X$ satisfy
\begin{equation}
    \foldH\; {\cal S}_Z\; \foldH^{\dagger} = {\cal S}_X, \qquad \foldH\; {\cal S}_X\; \foldH^{\dagger} = {\cal S}_Z,
\end{equation}
and the logical operators ${\cal L}_Z$ and ${\cal L}_X$ satisfy
\begin{equation}
    \foldH\; {\cal L}_Z\; \foldH^{\dagger} = {\cal L}_X, \qquad \foldH\; {\cal L}_X\; \foldH^{\dagger} = {\cal L}_Z.
\end{equation}
Under the action of $\foldH$, the logical qubits are Hadamard transformed and swapped according to $\tau_{k^2}:[k]\times [k]\ni (\alpha, \beta)\mapsto(\beta,\alpha)$ for the left sector and $\tau_{(k^T)^2}:[k^T]\times [k^T]\ni (\alpha', \beta')\mapsto(\beta',\alpha')$ for the right sector. This can be checked from
\begin{equation}
    \tau_{k^2} G_Z^{(L)} \tau = G_X^{(L)}, 
\end{equation}
and similarly for the right-sector generator matrices.
Its symplectic-matrix representation on the left-sector logical qubits is thus given by
\begin{equation}
    \begin{pmatrix}
        0 & \tau_{k^2} \\ \tau_{k^2} & 0
    \end{pmatrix}.
\end{equation}

There is another fold-transversal gate exploiting this $ZX$-duality, the phase-type fold-transversal gate $\foldS$, which is defined as~\cite{Quintavalle_2023}
\begin{equation}
    \foldS\coloneqq \prod_{i=1}^n \operatorname{S}_{(i,i)} \prod_{i<j} {\rm C}Z_{(i,j),(j,i)} \otimes \prod_{a=1}^m \operatorname{S}^{\dagger}_{(a,a)} \prod_{a<b} {\rm C}Z_{(a,b),(b,a)}.
    \label{eq:fold-S}
\end{equation}
It satisfies 
\begin{align}
    \foldS\; Z(u) \; \foldS^{\dagger} &= Z(u), \\
    \foldS\; X(v) \; \foldS^{\dagger} &= i^{f_{\operatorname{S}}(v)}X(v)Z\bigl(v\tau\bigr),
\end{align}
for any $u,v\in\F_2^{n^2+m^2}$, where $f_{\operatorname{S}}(v)$ is defined as 
\begin{equation}
    f_{\operatorname{S}}(v)\coloneqq \sum_{i=1}^{n} v_{(i,i)} - \sum_{a=1}^m v_{(a,a)},
\end{equation}
which is an integer sum, not a binary sum.
The choice of $\operatorname{S}$ and $\operatorname{S}^{\dagger}$ in Eq.~\eqref{eq:fold-S} is to ensure~\cite{Quintavalle_2023} 
\begin{equation}
    \forall S_x \in {\cal S}_X, \qquad \foldS\;S_x\;\foldS^{\dagger}=S_x S_{z(x)}, \quad \exists S_{z(x)} \in {\cal S}_Z.
\end{equation}
From this, one can see that $\foldS$ preserves the stabilizer group $\langle{\cal S}_Z,{\cal S}_X\rangle$. On logical qubits, $\foldS$ acts as ${\rm C}Z$ between $(\alpha,\beta)\in[k]\times[k]$ and $(\beta,\alpha)\in[k]\times[k]$ while acts as $\operatorname{S}$ on $(\alpha,\alpha)\in[k]\times[k]$, both up to phases. Its symplectic-matrix representation on the left-sector logical qubits is thus given by 
\begin{equation}
    \begin{pmatrix}
        I_{k^2} & \tau_{k^2} \\ 0 & I_{k^2}
    \end{pmatrix}.
\end{equation}

\section{Code construction with all the logical Clifford gates} 
\label{sec:construction}
\subsection{Construction}
We first describe a family of HGP codes with code automorphisms lifted from the seed classical codes and then show that they can perform all in-block and inter-block logical Clifford gates with transversal and fold-transversal physical Clifford gates. Although this family is not strictly qLDPC, the codes have relatively low stabilizer weight $\lesssim 10$ for a practical number $\lesssim 10^{18}$ of physical qubits and a distance that grows much faster than the stabilizer weight by adjusting the parameters. Thus, although the code family considered here is not an asymptotic qLDPC family, it simultaneously achieves a high rate and a low stabilizer weight in practical sizes.

The code family considered here is $\text{LS-HGP}(H_{\rm St}, H_{\rm St})$, where $H_{\rm St}$ is a parity-check matrix of a Steinberg code introduced in Sec.~\ref{sec:steinberg_code}. We name this family of codes \nameofcode codes after ``PRoduct codes from Irreducible Steinberg Modules,'' as stated in the introduction. Since it only leaves the left-sector logical qubits, the code parameters of $\text{LS-HGP}(H_{\rm St}, H_{\rm St})$ can be explicitly given by $\llbracket n^2+m^2, k^2, d\rrbracket$ when $H_{\rm St}$ is an $ m \ times n$ matrix for an $[n, k, d]$ Steinberg code. The stabilizer weight of this code is bounded from above by $w_{\mathrm{col}}+w_{\mathrm{row}}$ of $H_{\rm St}$, and the number of stabilizers that act on a particular physical qubit is bounded from above by $\max\{w_{\mathrm{col}},w_{\mathrm{row}}\}$ of $H_{\rm St}$. 
If one uses a subfamily $\operatorname{SL}_{r+1}(q)$, given in the example~\ref{par:subfamily}, one gets a subfamily of the \nameofcode codes with the code parameter 
\begin{equation}
    \left\llbracket \left(1+\frac{r^2}{(q+1)^2}\right)\!  \left(\prod_{i=2}^{r+1} \frac{q^i-1}{q-1}\right)^2\!,q^{r(r+1)}, (r+1)!\right\rrbracket,
\end{equation}
and the stabilizer weight $r+q+1$.
Thus, by scaling up $q$ and $r$ while keeping $r\simeq q$ and using the asymptotic expansion Eq.~\eqref{eq:asymptotic_scaling}, one obtains the following scaling of code parameters:
\begin{equation}
    \llbracket n, \Theta(n), n^{o(1)} \rrbracket,
\end{equation}
with the stabilizer weight $(\log n)^{1/2-o(1)}$.
More explicitly, by setting $r=cq$ for a constant $c$, the encoding rate approached $e^{-2c}/(1+c^2)$ from Eq.~\eqref{eq:asymptotic_scaling}.
Thus, it is a constant-rate subfamily.
The distance, on the other hand, grows only subpolynomially, more precisely, $\exp[(\log n)^{1/2+o(1)}]$. 
Small instances for this $\operatorname{SL}_{r+1}(q)$ subfamily includes $\llbracket 637, 64, 6\rrbracket$ code ($q=r=2$), $\llbracket 12789, 4096, 6\rrbracket$ code ($q=4, r=2$), and $\llbracket 198450, 4096, 24\rrbracket$ code ($q=2,r=3$).

As we mentioned, this code family is not strictly qLDPC. However, the factorial scaling of the number of physical qubits is universally true for the \nameofcode codes. Thus, the numbers of physical and logical qubits blow up very quickly compared to the stabilizer weight. Thus, in practical sizes, \nameofcode codes can be treated as almost-qLDPC codes. For example, if the seed classical code is chosen to be $\operatorname{SL}_{r+1}(q)$ as given above, then setting $r=q+1=5$ already leads to the number of physical qubits $\approx 10^{18}$ while the stabilizer weights are $\leq 10$.

\begin{rem}\label{rem:meta-check}
    The \nameofcode code has a meta-check, inherited from a syndrome consistency check for the Steinberg code as given in Remark~\ref{rem:meta-code}. In fact, by writing a corresponding parity-check matrix for $\partial'$ in Eq.~\eqref{eq:length-three-chain} $M_{\rm St}$, we can see that $M_{\rm St}\otimes G_{\rm St}$ and $G_{\rm St}\otimes M_{\rm St}$ serve as meta-checks for the $X$ and $Z$ syndromes of $\text{LS-HGP}(H_{\rm St}, H_{\rm St})$, respectively, since $M_{\rm St} H_{\rm St}=0$ and $G_{\rm St}H_{\rm St}^T=0$. The code distance of this meta-check is given by $\min\{r,q+1\}$, where the latter comes from the code distance of $\ker{G_{\rm St}}$. Indeed, since any support of at most $q$ chambers has a common opposite~\cite{Bjorner1984}, where the opposite chamber $hB$ for a chamber $gB$ means $g^{-1}h\in Bw_0B$ with a longest Weyl element $w_0$, an apartment chain can intersect that support in exactly one chamber. However, $\im H_{\rm St}^{T}$ is orthogonal to any apartment chain, so any vector in $\im H_{\rm St}^{T}$ has at least $q+1$ nonzero elements.

    This meta-check structure may help its error-correction routine, especially for the encoded-state preparation.
\end{rem}

\subsection{All the logical Clifford gates from transversal and fold-transversal gats} \label{subsec:Clifford-synthesis}

As explained in Sec.~\ref{sec:hypergraph_product}, a \nameofcode code defined in the previous section inherits code automorphisms, which act as a tensor product of absolutely irreducible Steinberg representations on the logical Pauli operators. It also has a $ZX$-duality given in Sec.~\ref{sec:hypergraph_product}. Furthermore, since a \nameofcode code is CSS, the physical transversal CNOT gates between two \nameofcode code blocks induce logical parallel CNOT gate actions. In this section, we show that this set of elementary gate gadgets is in fact sufficient to generate all addressable logical Clifford gates on code blocks greater than one, when combined with logical Pauli actions that can trivially be performed transversally. As explained in Sec.~\ref{sec:symplectic-rep}, this amounts to showing that the symplectic representations of these gate primitives generate all the binary symplectic matrices.

We first introduce the notation for multi-block symplectic representation of Clifford gates. Consider $b(>1)$ blocks of $\llbracket n',k',d\rrbracket$ \nameofcode codes.
Let $k'\geq 3$, which is always satisfied for the \nameofcode codes, and put $m=bk'$.  We order the symplectic coordinates as
\begin{equation}
 (\bm{x}_1,\ldots,\bm{x}_{b},\bm{z}_1,\ldots,\bm{z}_{b}),
 \qquad \bm{x}_i,\bm{z}_i\in\mathbb F_2^{k'},
\end{equation}
and use
\begin{align}
 \Omega_m&=\begin{pmatrix}0&I_m\\ I_m&0\end{pmatrix},\\
 \Sp_{2m}(\F_2)&=\{S\in\GL_{2m}(\F_2):S^{T}\Omega_mS=\Omega_m\}.
\end{align}
Let $\operatorname{Sym}_m(\F_2) \subset M_m(\F_2)$ be the set of symmetric matrices. For $A\in\GL_m(\F_2)$ and $B,C\in \operatorname{Sym}_m(\F_2)$, define
\begin{align}
 \mathcal D(A)&=
 \begin{pmatrix}A&0\\0&A^{-T}\end{pmatrix}, \\
 \mathcal U(B)&=
 \begin{pmatrix}I_m&B\\0&I_m\end{pmatrix},\\
 \mathcal L(C)&=
 \begin{pmatrix}I_m&0\\C&I_m\end{pmatrix}.
 \label{eq:global-DUL}
\end{align}
These are all symplectic. 

The local (in-block) logical gates of the \nameofcode code can be characterized by invertible matrices $\Gamma\subseteq\GL_{k'}(\F_2)$ and a symmetric, invertible, nonalternating (i.e., having at least one nonzero element on the diagonal) matrix $\tau_{k'}\in M_{k'}(\F_2)$, where $\Gamma$ is given by
\begin{equation}
    \Gamma \coloneqq \{(\pi_{\rm St}(\sigma_1))^{-T} \otimes \pi_{\rm St}(\sigma_2): \sigma_1,\sigma_2\in\mathrm{Aut}(H_{\rm St})\}, \label{eq:def_Gamma}
\end{equation}
and $\tau_{k'}$ corresponds to the tensor-factor swap given in Sec.~\ref{subsec:ZX-duality}.  A gate gadget implemented by a code automorphism induces a logical action given by
\begin{equation}
    \begin{pmatrix}
        G & 0 \\ 0 & G^{-T}
    \end{pmatrix} \in \Sp_{2k'}(\F_2), \qquad G\in\Gamma.
\end{equation}
On the other hand, gate gadgets implemented by $\foldS$ and $\foldH$ are given, respectively, by
\begin{equation}
    \begin{pmatrix}
        I_{k'} & \tau_{k'} \\ 0 & I_{k'}
    \end{pmatrix} \quad \text{and} \quad 
    \begin{pmatrix}
        0 & \tau_{k'} \\ \tau_{k'} & 0
    \end{pmatrix}.
\end{equation}
In a multi-block scenario, we need to specify a block index where a local gate is applied. We thus use the notation $\mathcal D_i(G)$, $\operatorname{S}_{\tau_{k'}, i}$, and $\operatorname{H}_{\tau_{k'}, i}$ for this purpose, implying that it trivially acts on the blocks except the $i$-th block.

From the absolute irreducibility of the Steinberg representation $\pi_{\rm St}$, we have
\begin{equation}
 \operatorname{span}_{\mathbb F_2}\Gamma=M_{k'}(\F_2),
 \label{eq:gamma-span}
\end{equation}
which implies that for any $M\in M_{k'}(\F_2)$,
\begin{equation}
    M = \sum_{s=1}^{\rho_{\Gamma}(M)}h_s, \qquad h_s\in \Gamma, \qquad \rho_{\Gamma(M)}\leq {k'}^2.
\end{equation}
Let $\rho_{\max}$ be defined as
\begin{equation}
    \rho_{\max} \coloneqq \max_M \rho_{\Gamma}(M) \leq k'^2. \label{eq:def_rho_max}
\end{equation}

Since the \nameofcode codes are CSS codes, the physical transversal CNOT gates from the second to the first code block induce logical CNOT action, whose symplectic matrix is given by
\begin{align}
    \mathcal{C}_{12}
    &= \begin{pmatrix}
        I_{k'} &   I_{k'} &  &  & \\
        &  I_{k'} &  &  & \\
        &  &  I_{k'} &  & \\
        &  &  I_{k'} &  I_{k'} 
    \end{pmatrix} \\
    &= \mathcal{D}(I_{2{k'}} + E_{12} \otimes I_{k'}),
\end{align}
where $E_{12}$ is, specifically,
\begin{align}
    E_{12} = \begin{pmatrix}
        0 & 1 \\ 0 & 0
    \end{pmatrix}.
\end{align}
A transversal logical CNOT from the first block to the second block $\mathcal{C}_{21}$ can be constructed similarly by replacing $E_{12}$ with $E_{21}$. In a more general setting, applying transversal CNOT gates from the $j$-th block to the $i$-th block ($i\neq j$) gives 
\begin{align}
    \mathcal{C}_{ij} = \mathcal{D}(I_{m} + E_{ij} \otimes I_{k'}),
\end{align}
where $E_{ij}\in M_{b}(\F_2)$ is defined as 
\begin{equation}
    (E_{ij})_{kl} = \delta_{ik}\delta_{jl}.
\end{equation}

In summary, the symplectic matrix representation of the primitive logical gates of the \nameofcode code is summarized as follows:
\begin{itemize}
    \item The local diagonal gates $\mathcal D_i(G)$ with $G\in \Gamma$, which can be implemented by the code automorphisms of the $i$-th block.
    \item The local shear $\operatorname{S}_{\tau_{k'}, i}$ and Hadamard $\operatorname{H}_{\tau_{k'}, i}$, which can be implemented by the local phase-type and Hadamard-type fold-transversal gates.
    \item The two-block gate $\mathcal C_{ij} = \mathcal D(I_m+E_{ij}\otimes I_{k'})$ with $i\neq j$.
\end{itemize}

\begin{figure}[!htbp]
    \centering
    \resizebox{\linewidth}{!}{%
    \begin{tikzpicture}[
        flowchart box/.style={
            rectangle,
            rounded corners,
            minimum width=1cm,
            minimum height=0.7cm,
            align=center,
            draw=black
        }
    ]
        
        \node [flowchart box, text width=3.2cm, anchor=west]
        (local-diagonal)
        {\mbox{Single-block diagonal}\\%
        gates $\{\mathcal{D}_i(G)\}_{G\in\Gamma}$,\\%
        Sec.~\ref{sec:single-block-diagonal}.};

        \node [flowchart box, right=0.05cm of local-diagonal, text width=1.9cm]
        (cnot)
        {Two-block\\%
        CNOT $\mathcal{C}_{ij}$.};

        \node [flowchart box, right=0.05cm of cnot, text width=3.85cm]
        (fold)
        {Single-block shear $\mathrm{S}_{\tau_{k'},i}$\\%
        and Hadamard $\mathrm{H}_{\tau_{k'},i}$,\\%
        Sec.~\ref{subsec:ZX-duality}.};

        \node (prim-title) [align=center]
        at ($(local-diagonal.north)!0.3!(fold.north)!0.3!(cnot.north)+(1.2cm,5mm)$)
        {\underline{Primitive gates (depth 1)}};

        \node [draw, fit=(local-diagonal)(cnot)(fold)(prim-title),
        inner sep=1mm, minimum width=10cm] (prim) {};

        \node [draw, below=0.5cm of prim, minimum width=10cm,
        minimum height=4.9cm] (two-block) {};

        \node [align=center, anchor=north]
        (rho-title) at ($(two-block.north)+(0.6cm,-0.15cm)$)
        {\underline{Two-block gates} \\[2mm]%
        (depth $\rho_{\max}$)};

        \node [align=center]
        (two-diagonal-title)
        at ($(two-block.north west)+(2.2cm,-4.15cm)$)
        {Two-block diagonal gates,\\%
        Sec.~\ref{par:two-block-diagonal}.};

        \node [flowchart box, text width=3.35cm]
        (inter-diagonal)
        at ($(two-block.north west)+(2.2cm,-1.2cm)$)
        {Inter-block diagonals.};

        \node [flowchart box, text width=3.35cm]
        (in-diagonal)
        at ($(two-block.north west)+(2.2cm,-3.2cm)$)
        {In-block diagonals.};

        \node [draw, rounded corners,
        fit=(two-diagonal-title)(inter-diagonal)(in-diagonal),
        inner xsep=0.8mm, inner ysep=2mm]
        (two-diagonal) {};

        \node [align=center, anchor=north]
        (two-title) at ($(two-block.north)+(0.85cm,-1.80cm)$)
        {(depth $\Theta(\rho_{\max})$)};

        \node [align=center]
        (two-shear-title)
        at ($(two-block.north)+(2.7cm,-4.15cm)$)
        {Two-block shear gates,\\%
        Sec.~\ref{par:two-block-shear}.};

        \node [flowchart box, text width=1.65cm]
        (in-shear)
        at ($(two-block.north)+(1.45cm,-3.20cm)$)
        {In-block shears.};

        \node [flowchart box, text width=1.65cm]
        (inter-shear)
        at ($(two-block.north)+(3.65cm,-3.20cm)$)
        {Inter-block shears.};

        \node [draw, rounded corners,
        fit=(two-shear-title)(inter-shear)(in-shear),
        inner xsep=2.5mm, inner ysep=2mm]
        (two-shear) {};

        \draw [dashed]
        ($(two-block.north west)+(0,-1.7cm)$) --
        ($(two-block.north east)+(0,-1.7cm)$);

        \node [flowchart box, below=1.4cm of two-block, xshift=-2.42cm,
        text width=4.4cm]
        (multi-diagonal)
        {\mbox{Multi-block diagonal gates,}\\%
        Appendix~\ref{subsec:levi-compilation}.};

        \node [flowchart box, right=0.2cm of multi-diagonal, text width=4.4cm]
        (multi-shear)
        {\mbox{Multi-block shear gates,}\\%
        Appendix~\ref{subsec:shear-synthesis}.};

        \node (multi-title)
        at ($(multi-shear.north)!0.5!(multi-diagonal.north)+(0,4mm)$)
        {\underline{Multi-block gates}};

        \node [draw, fit=(multi-shear)(multi-diagonal)(multi-title),
        inner xsep=1mm, inner ysep=2mm, minimum width=10cm]
        (multi-block) {};

        \node [flowchart box, below=0.7cm of multi-block, xshift=-2.125cm,
        text width=3.5cm]
        (ldu) {LDU decomposition};

        \node [flowchart box, right=0.5cm of ldu, text width=3.5cm]
        (lulu) {LULU decomposition};

        \node (title) [align=center]
        at ($(lulu.south)!0.5!(ldu.south)+(0,-7mm)$)
        {\underline{Global gate decomposition}\\[2mm]%
        Appendix~\ref{subsec:global-gate-decom}};

        \node [draw, fit=(lulu)(ldu)(title),
        inner xsep=1mm, inner ysep=2mm, minimum width=10cm]
        (global) {};

        \coordinate (local-out) at
        ($(local-diagonal.south west)!0.45!(local-diagonal.south east)$);
        \draw [->]
        (local-out) -- (inter-diagonal.north -| local-out);

        \coordinate (cnot-out) at
        ($(cnot.south west)!0.09!(cnot.south east)$);
        \draw [->]
        (cnot-out) -- (inter-diagonal.north -| cnot-out);

        \coordinate (inter-shear-target) at
        ($(inter-shear.north west)!0.4!(inter-shear.north east)$);
        \draw [->]
        ($(inter-diagonal.south west)!0.95!(inter-diagonal.south east)$)
        |- ($(inter-shear-target)+(0,0.35cm)$)
        -- (inter-shear-target);

        \coordinate (fold-inter-out) at
        ($(fold.south west)!0.75!(fold.south east)$);
        \draw [->]
        (fold-inter-out) -- (inter-shear.north -| fold-inter-out);

        \draw [->]
        ($(inter-diagonal.south west)!0.35!(inter-diagonal.south east)$)
        -- ($(in-diagonal.north west)!0.35!(in-diagonal.north east)$);

        \draw [->]
        (in-diagonal.east) -- (in-shear.west);

        \coordinate (fold-in-out) at
        ($(fold.south west)!0.35!(fold.south east)$);
        \draw [->]
        (fold-in-out) -- (in-shear.north -| fold-in-out);

        \coordinate (two-diagonal-out) at
        ($(two-diagonal.south west)!0.53!(two-diagonal.south east)$);
        \draw [->]
        (two-diagonal-out) -- (multi-diagonal.north -| two-diagonal-out);

        \coordinate (two-shear-out) at
        ($(two-shear.south west)!0.48!(two-shear.south east)$);
        \draw [->]
        (two-shear-out) -- (multi-shear.north -| two-shear-out);

        \coordinate (ldu-from-diagonal) at
        ($(ldu.north west)!0.35!(ldu.north east)$);
        \draw [->]
        (multi-diagonal.south -| ldu-from-diagonal)
        -- (ldu-from-diagonal);

        \coordinate (ldu-from-shear) at
        ($(ldu.north west)!0.70!(ldu.north east)$);
        \draw [->]
        ($(multi-shear.south west)!0.35!(multi-shear.south east)$)
        -- ++(0,-0.5cm)
        -| (ldu-from-shear);

        \coordinate (lulu-from-shear) at
        ($(lulu.north west)!0.75!(lulu.north east)$);
        \draw [->]
        (multi-shear.south -| lulu-from-shear)
        -- (lulu-from-shear);
    \end{tikzpicture}
    }
    \caption{A flowchart of the bottom-up construction of full logical Clifford gates for
    \ifcsname nameofcode\endcsname\nameofcode\else the specified\fi\ codes.}
    \label{fig:gate-construction-flowchart}
\end{figure}

\subsubsection{Two-block Clifford gates}

\paragraph{Two-block diagonal gates} \label{par:two-block-diagonal}
In the following, we first consider the two-block case $m=2k'$ and show that \nameofcode codes have all the two-block logical Clifford gates via transversal and fold-transversal gates. We first show that diagonal gates in $\mathrm{Sp}_{4k'}(\mathbb{F}_2)$ can be performed by composing single block-diagonal gates $\mathcal D_{1(2)}(A)$ and two-block transversal CNOT gates.

\begin{prop} \label{prop:diagonal_generation}
    For $k'\geq 3$, the entire set of invertible matrices $\GL_{2k'}(\mathbb{F}_2)$ can be generated by $\mathcal{C}_{12}, \mathcal{C}_{21}$ and single block-diagonal gates $\mathcal{D}_1(A), \mathcal{D}_2(B)$, where $A, B \in \Gamma$. The worst-case gate count is bounded from above by $6(2\rho_{\max} + 1)$, where $\rho_{\max}$ is defined in Eq.~\eqref{eq:def_rho_max}. 
    Therefore, \nameofcode codes realize all diagonal gates in $\mathrm{Sp}_{4k'}(\mathbb{F}_2)$.
\end{prop}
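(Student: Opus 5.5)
The plan is to work inside $\GL_{2k'}(\F_2)$ through $\mathcal D(\cdot)$. All generators are of the form $\mathcal D(X)$: $\mathcal D_1(A)=\mathcal D(\diag(A,I))$, $\mathcal D_2(B)=\mathcal D(\diag(I,B))$, and $\mathcal C_{12}=\mathcal D\bigl(\begin{smallmatrix}I&I\\0&I\end{smallmatrix}\bigr)$. Since $\mathcal D$ is a homomorphism, it suffices to show that the $2k'\times 2k'$ matrices $\diag(A,I)$, $\diag(I,B)$ with $A,B\in\Gamma$, together with $U_0\coloneqq\bigl(\begin{smallmatrix}I&I\\0&I\end{smallmatrix}\bigr)$ and $L_0\coloneqq\bigl(\begin{smallmatrix}I&0\\I&I\end{smallmatrix}\bigr)$, generate $\GL_{2k'}(\F_2)$ within the stated gate budget.

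\emph{Step 1: arbitrary block shears.} Conjugation gives
\begin{equation}
\diag(h,I)\,U_0\,\diag(h^{-1},I)=\begin{pmatrix}I&h\\0&I\end{pmatrix}.
\end{equation}
Here $\Gamma$ is closed under inverses, since it is the image of the group $\mathrm{Aut}(H_{\rm St})\times\mathrm{Aut}(H_{\rm St})$. Upper block shears add in their off-diagonal block. So by Eq.~\eqref{eq:gamma-span}, any $M\in M_{k'}(\F_2)$ can be written as $M=\sum_{s=1}^{\rho}h_s$ with $h_s\in\Gamma$ and $\rho\le\rho_{\max}$. This gives
\begin{equation}
\begin{pmatrix}I&M\\0&I\end{pmatrix}=\prod_{s}\diag(h_s,I)\,U_0\,\diag(h_s^{-1},I).
\end{equation}
Adjacent factors $\diag(h_s^{-1}h_{s+1},I)$ merge into a single element of $\Gamma$, so this uses $\rho$ CNOTs and $\rho+1$ diagonal gates, i.e.\ at most $2\rho_{\max}+1$ primitives. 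The same construction with $L_0$ and $\mathcal D_2$ produces every lower block shear $\bigl(\begin{smallmatrix}I&0\\C&I\end{smallmatrix}\bigr)$.

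\emph{Step 2: generation.} The block shears contain the elementary transvections $I+e_{ij}$ with $i,j$ in different blocks. A commutator $[I+e_{ij},I+e_{jl}]=I+e_{il}$, with $j$ in the other block, then gives the same-block transvections. Hence all elementary transvections are obtained, and these generate $\mathrm{SL}_{2k'}(\F_2)=\GL_{2k'}(\F_2)$. This already proves generation.

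\emph{Step 3: the count $6(2\rho_{\max}+1)$.} I would show that every $X\in\GL_{2k'}(\F_2)$ is a product of at most six alternating upper and lower block shears. My first attempt is the following.
\begin{enumerate}
\item Left-multiply $X$ by a suitable lower block shear so that its top-left block $P'$ becomes invertible. This is possible because the first block-row of $X$ has full rank $k'$.
\item Clear the bottom-left block with a lower shear and the top-right block with an upper shear, leaving $\diag(P',S')$.
\item Write $\diag(P',S')=\diag(Y,Y^{-1})$ times a residual. Here $\diag(Y,Y^{-1})$ equals $\bigl(U(Y)L(Y^{-1})U(Y)\bigr)\bigl(U(I)L(I)U(I)\bigr)$, and over $\F_2$ each bracket is a block antidiagonal matrix.
\end{enumerate}
Adjacent shears of the same type merge, so the total should collapse to six.

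The residual block-diagonal factor $\diag(P'S',I)$ is the main obstacle, because it is not a product of few block shears for free. To handle it, I would instead absorb the diagonal part into the choice of the initial shear, using the fact that over a field with $k'\ge 3$ every invertible matrix is a product of a few unitriangular matrices (an LULU-type factorization). Concretely, I would aim at a decomposition $X=L_1U_1L_2U_2L_3U_3$ by choosing the first two factors so that the remaining matrix has an upper-left block equal to $I$. Each shear costs at most $2\rho_{\max}+1$ gates by Step 1, giving the bound $6(2\rho_{\max}+1)$.

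Finally, $\mathcal D(X)$ for $X\in\GL_{2k'}(\F_2)$ is exactly the diagonal subgroup of $\Sp_{4k'}(\F_2)$, which yields the last claim of the proposition.
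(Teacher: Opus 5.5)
\paragraph{What holds.} Your Steps 1 and 2 are sound. The conjugation $\diag(h,I)\,U_0\,\diag(h^{-1},I)$, additivity of block shears, and merging of adjacent diagonal gates give exactly the paper's count of $2\rho_{\max}+1$ per shear. Generating $\GL_{2k'}(\F_2)=\mathrm{SL}_{2k'}(\F_2)$ from elementary transvections, with same-block transvections obtained as commutators, is a valid and more elementary route to the generation claim than the paper's. Your last sentence about the diagonal subgroup of $\Sp_{4k'}(\F_2)$ is also fine.

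\paragraph{The gap: Step 3.} Step 3 is the only part that yields the bound $6(2\rho_{\max}+1)$, and it never establishes a factorization $L(M_1)U(M_2)L(M_3)U(M_4)L(M_5)U(M_6)$ into $2\times 2$ block shears. The fact you invoke does not supply it. Scalar unitriangular matrices are not block shears, since their diagonal $k'\times k'$ blocks are unitriangular rather than $I$. For block shears, short factorizations are not automatic. When $I+bc$ is invertible, $L(a)U(b)L(c)U(d)$ has top-left block $I+bc$ and Schur complement $(I+cb)^{-1}$. Since $bc$ and $cb$ have the same characteristic polynomial, $\diag(A,I)$ is not such a product when, e.g.\ for $k'=3$, $A$ is the companion matrix of $x^3+x+1$.

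Your concrete plan runs into the same wall. A matrix $\bigl(\begin{smallmatrix}I&B\\C&D\end{smallmatrix}\bigr)$ equals $L(C)\,\diag(I,D+CB)\,U(B)$. So normalizing the top-left block to $I$ just reproduces the residual $\diag(I,S)$ you already flagged as the obstacle. Moreover, a four-factor product with top-left block $I$ needs $bc=0$, which forces its Schur complement to be $I+cb$ with $(cb)^2=0$. A minor slip in your first attempt: left multiplication by a lower shear leaves the top block row, hence the top-left block, unchanged. The full-rank argument instead supports multiplying by a lower shear on the right, which adds the top-right block times $E$ to the top-left block.

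\paragraph{The missing idea.} What is missing is a commutator decomposition, and this is exactly where the hypothesis $k'\ge 3$ enters; your argument never uses it. For a target $\bigl(\begin{smallmatrix}A&B\\C&D\end{smallmatrix}\bigr)$, the paper proceeds as follows.
\begin{enumerate}
\item Pick $E$ with $B'=AE+B$ invertible, which is possible because the first block row has rank $k'$.
\item Set $D'=CE+D$ and $K=B'(D'B'^{-1}A-C)$; $K$ is invertible as a Schur complement.
\item By Thompson's theorem, every element of $\GL_{k'}(\F_2)=\mathrm{SL}_{k'}(\F_2)$ with $k'\ge3$ is a commutator, so $K=X_0^{-1}Y_0^{-1}X_0Y_0$.
\item Take explicit choices such as $M_2=X_0B'$, $M_3=B'^{-1}X_0^{-1}(Y_0-I)$ (so that $I+M_2M_3=Y_0$) and $M_6=E$, and verify the six-factor product block by block.
\end{enumerate}
The commutator is what absorbs the residual block-diagonal factor into the shears. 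Without this ingredient, or an equivalent bounded-length factorization, your proposal proves generation but not the stated gate count.
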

\begin{proof} 
    The first part of the proof follows from Theorem~3, Lemma~2, Theorem~4 of Ref.~\cite{Grassl2013}, combined with Eq.~\eqref{eq:gamma-span} and the fact that $\GL_{2k'}(\F_2)=\mathrm{SL}_{2k'}(\F_2)$. For completeness and to prove the overhead count, we here describe the explicit construction.

    Since we only consider the diagonal gates, we focus on the $2 \times 2$ upper-left block of the overall $4 \times 4$ block matrix.
    For $A\in \Gamma$, its inverse $A^{-1}$ is also in the group $\Gamma$. Direct calculation shows that the following decomposition gives the upper triangular matrix $U(A)$ with $A$ as the right upper block matrix. 
    \begin{align}
        \begin{pmatrix}
            I & A \\ & I
        \end{pmatrix} 
        &= \begin{pmatrix}
            A & \\ & I
        \end{pmatrix} \begin{pmatrix}
            I & I \\ & I
        \end{pmatrix} \begin{pmatrix}
            A^{-1} & \\ & I
        \end{pmatrix}.
        \label{eq:upper-triangular-decom}
    \end{align}
    Note that $U(A)$ is different from a shear gate $\mathcal{U}$ which acts on the whole symplectic matrix of code blocks. The whole symplectic matrix for $U(A)$ is $U(A) \oplus (U(A))^{-T}$ and is equal to $\mathcal{D}(I_{2k'} + E_{12} \otimes A)$.
    Since upper triangular matrices have the property that $U(A_1) U(A_2) = U(A_1 + A_2)$, we have
    \begin{align}
        \prod_{j=1}^J U(A_j)
        = U\left(\sum_{j=1}^J A_j\right).
    \end{align}
    This implies from Eq.~\eqref{eq:gamma-span} that the multiplication of diagonal gates $\mathcal{D}_{12}(U(A_j))$ realizes $\mathcal{D}_{12}(U(M))$ for arbitrary $M \in M_{k'}(\mathbb{F}_2)$. From the definition of $\rho_{\max}$ in Eq.~\eqref{eq:def_rho_max}, we have $J \leq \rho_{\max}$, and since the subsequent diagonal gates can be combined, i.e., ${\cal D}_1(A^{-1}){\cal D}_1(B)={\cal D}_1(A^{-1}B)$, the total gate count is given by $2\rho_{\max}+1$ from Eq.~\eqref{eq:upper-triangular-decom}.

    In a similar way, a left triangular matrix $L(M)$ for arbitrary matrix $M \in M_{k'}(\mathbb{F}_2)$ can be constructed with the decomposition 
    \begin{align}
        \begin{pmatrix}
            I & \\ A & I
        \end{pmatrix}
        = \begin{pmatrix}
            I & \\ & A
        \end{pmatrix} \begin{pmatrix}
            I & \\ I & I
        \end{pmatrix} \begin{pmatrix}
            I & \\ & A^{-1}
        \end{pmatrix}.
    \end{align}
    The gate count is also the same in this case.
    
    Now, the only thing left to prove is that $U(M)$ and $L(M')$ generate $\GL_{2k'}(\mathbb{F}_2)$, i.e.,
    \begin{align}
        \langle U(M), L(M'); M, M' \in M_{k'}(\mathbb{F}_2) \rangle = \GL_{2k'}(\mathbb{F}_2).
    \end{align}
    In fact, for $k'\geq 3$, we have the following: $\forall G\in \GL_{2k'}(\F_2)$,
    \begin{equation}
        \begin{split}
        G=L(M_1)U(M_2)L(M_3)U(M_4)L(M_5)U(M_6),
        \end{split} \label{eq:six-block-decomp}
    \end{equation}
    which is proved in Ref.~\cite{Urschel2023}. For completeness, we briefly describe this decomposition in the following for reference. Write
    \begin{equation}
        G=\begin{pmatrix}
            A & B \\ C & D
        \end{pmatrix},
    \end{equation}
    and choose $E\in M_{k'}(\F_2)$ such that $B'=AE + B$ is invertible. Set $D'=CE+D$ and 
    \begin{equation}
        K=B'(D'B'^{-1}A - C). \label{eq:def_K}
    \end{equation}
    The matrix $K$ is invertible since 
    \begin{equation}
        \begin{pmatrix}
            B' & A \\ D' & C
        \end{pmatrix} = 
        G \begin{pmatrix}
            E & 1 \\ 1 & 0
        \end{pmatrix}
    \end{equation}
    is invertible, and $D'B'^{-1}A-C$ is the Schur complement of the top-left block. (Note that since the field has characteristic two, plus and minus coincide.)
    Now, choose $X, Y\in \GL_{k'}(\F_2)$ with $X^{-1}Y^{-1}XY=K$. Such $X$ and $Y$ always exist and can algorithmically be constructed for $k'\geq 3$~\cite{Thompson1961}. Then, one may take
    \begin{align}
    M_1&=D'B'^{-1}
      +B'^{-1}X^{-1}Y^{-1}(I_{k'}-X)-B'^{-1}X^{-1}, \\
    M_2&=XB',\\
    M_3&=B'^{-1}X^{-1}(Y-I_{k'}), \\
    M_4&=Y^{-1}(I_{k'}-X)B',\label{eq:six-block-coefficients}\\
    M_5&=B'^{-1}(A-Y), \\
    M_6&=E.
\end{align}
Equation~\eqref{eq:six-block-decomp} can be checked as follows. First, since $I_{k'} + M_2 M_3 = Y$, we have
\begin{equation}
    L(M_1) U(M_2) L(M_3) = \begin{pmatrix}
        Y & M_2 \\
        M_1 Y + M_3 & I_{k'} + M_1 M_2
    \end{pmatrix}.
\end{equation}
Next, since straightforward calculations give $Y M_4 + M_2= B'$ and $(M_1Y + M_3)M_4 + (I + M_1 M_2)=M_1(YM_4 + M_2)+M_3 M_4 + I = D'$, we have 
\begin{equation}
    L(M_1) U(M_2) L(M_3) U(M_4) = \begin{pmatrix}
        Y  & B' \\ M_1 Y + M_3 & D'
    \end{pmatrix}.
\end{equation}
Third, since $Y+B'M_5=A$ and $M_1Y+M_3 =D'B'^{-1}Y - B'^{-1}K$, we have 
\begin{equation}
\begin{split}
    &L(M_1) U(M_2) L(M_3) U(M_4) L(M_5) \\
    &= \begin{pmatrix}
        A  & B' \\ D'B'^{-1}A - B'^{-1}K & D'
    \end{pmatrix},
    \end{split}
\end{equation}
Finally, noticing  $D'B'^{-1}A - B'^{-1}K =C$ from the definition of $K$ in Eq.~\eqref{eq:def_K} and the definitions of $B'$ and $D'$, we have
\begin{equation}
    L(M_1) U(M_2) L(M_3) U(M_4) L(M_5) U(M_6) = \begin{pmatrix}
        A & B \\ C & D
    \end{pmatrix},
\end{equation}
which proves Eq.~\eqref{eq:six-block-decomp}. The gate count also follows from this decomposition.
\end{proof}

\begin{rem}
    The realizability of all the diagonal gates with $\mathcal C_{12}, \mathcal C_{21}$ and single block-diagonal gates $\mathcal D_1(A), \mathcal D_2(B)$ also holds for $k'=2$~\cite{Urschel2023}, but the proof strategy using the commutator construction in Ref.~\cite{Thompson1961} does not work in this case.
\end{rem}
\begin{rem}
    Instead of using the six-gate decomposition given above, one can use a gate teleportation as in Ref.~\cite{Malcolm2026}. This reduces the overhead factor by $1/6$, at the cost of an ancilla code block for each data block.
\end{rem}

\paragraph{Two-block shear gates} \label{par:two-block-shear}
Next, we show that the upper and lower shear of $\Sp_{4k'}(\F_2)$ can be performed with the diagonal $\mathcal D(A)$ gates and the logical gates induced by physical fold-transversal gates.
Notice that the condition $\operatorname{span}_{\mathbb F_2}\Gamma=M_{k'}(\F_2)$ does not imply
\begin{equation}
 \operatorname{span}_{\mathbb F_2}
 \{h\tau_{k'} h^{T}:h\in\Gamma\}
 =\operatorname{Sym}_{k'}(\F_2).
\end{equation}
This is the essential property for the direct shear construction used in Ref.~\cite{Malcolm2026}, which the \nameofcode codes do not have. Consequently, the single \nameofcode code block may not have all the logical Clifford gate actions.
Instead, our construction of shearing gates relies on the following Prop.~\ref{prop:symmetric_matrix_decomp}, where arbitrary $Q\in\GL_{2k'}(\F_2)$ is prepared via a gate sequence as in Prop.~\ref{prop:diagonal_generation}.
We have the following result.
\begin{prop} \label{prop:symmetric_matrix_decomp}
    For $k\geq 3$, let $\tau \in \operatorname{Sym}_{k}(\F_2)$ be a nonsingular nonalternating matrix. Then, for any $R\in\operatorname{Sym}_{k}(\F_2)$, we can decompose it as
    \begin{equation}
        R=\sum_{s=1}^{r}Q_s\tau Q_s^{T},\qquad
 Q_s\in\GL_{k}(\F_2),  \label{eq:two-congruence-phase}
    \end{equation}
    with $r\leq2$.
\end{prop}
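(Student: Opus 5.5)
The plan is to reduce everything to congruence canonical forms over $\F_2$ and then handle a few small blocks by hand.

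\textbf{Step 1 (normalize $\tau$).} By the classification of nondegenerate symmetric bilinear forms over a perfect field of characteristic two, a nonsingular nonalternating $\tau\in\Sym_k(\F_2)$ is congruent to $I_k$. So $\tau=PP^T$ for some $P\in\GL_k(\F_2)$, and $Q\tau Q^T=(QP)(QP)^T$. It therefore suffices to treat $\tau=I_k$.

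\textbf{Step 2 (reformulate).} The matrices of the form $QQ^T$ with $Q\in\GL_k(\F_2)$ are exactly the nonsingular nonalternating symmetric matrices; call this set $\mathcal N$. There are two cases.
\begin{itemize}
\item If $R\in\mathcal N$, take $r=1$.
\item Otherwise it suffices to find $S\in\mathcal N$ with $R+S\in\mathcal N$, because then $R=S+(R+S)$ gives $r=2$ (recall $-1=1$).
\end{itemize}
The property ``$S$ and $R+S$ both lie in $\mathcal N$'' is invariant under simultaneous congruence $R\mapsto PRP^T$, $S\mapsto PSP^T$. Hence we may replace $R$ by its canonical form. This is either $\diag(I_\rho,0_{k-\rho})$ in the nonalternating case, or $J^{\oplus t}\oplus 0_{k-2t}$ with $J=\begin{pmatrix}0&1\\1&0\end{pmatrix}$ in the alternating case (including $R=0$).

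\textbf{Step 3 (block-diagonal choice of $S$).} We build $S$ block by block, compatibly with a partition of the canonical form. A block sum is nonsingular iff every block is, and it is nonalternating iff at least one block has a nonzero diagonal entry. So it is enough to give, for each block $R_0$, an $S_0$ with $S_0$ and $R_0+S_0$ nonsingular, and then check that some block of $S$ and some block of $R+S$ have a diagonal $1$. The gadgets are:
\begin{itemize}
\item For a $1\times1$ zero block: $S_0=(1)$.
\item For $I_2$: $S_0=\begin{pmatrix}1&1\\1&0\end{pmatrix}$, with $R_0+S_0=\begin{pmatrix}0&1\\1&1\end{pmatrix}$.
\item For $I_3$: $S_0=\begin{pmatrix}1&1&0\\1&0&1\\0&1&0\end{pmatrix}$, with $R_0+S_0=\begin{pmatrix}0&1&0\\1&1&1\\0&1&1\end{pmatrix}$. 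Both have determinant $1$.
\item For a lone $\diag(1,0)$: $S_0=\begin{pmatrix}1&1\\1&0\end{pmatrix}$, with $R_0+S_0=J$, which is nonsingular but alternating.
\end{itemize}
Any $\rho\ge2$ splits into blocks of sizes $2$ and $3$. The lone-$1$ case $\rho=1$ uses the last gadget. Since $k\ge3$, there is another block, coming from a zero, whose contribution to $R+S$ has diagonal $1$, so nonalternation is restored globally.

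\textbf{Step 4 (alternating canonical forms — main obstacle).} A single $J$ admits no $2\times2$ gadget. Writing $S_0=\begin{pmatrix}a&b\\b&c\end{pmatrix}$, we get $\det S_0=ac+b$ and $\det(S_0+J)=ac+b+1$, so they cannot both be nonzero. Hence $J$ must be merged with a neighbour.
\begin{itemize}
\item If $2t<k$, use $3\times3$ blocks $J\oplus 0$. For each, I would search for $S_0$ with $S_0$ and $S_0+(J\oplus0)$ both nonsingular and nonalternating. A natural first try is a permuted version of the $I_3$ gadget above.
\item If $R$ is nonsingular alternating ($k=2t\ge4$), use $4\times4$ blocks $J\oplus J$, plus one $6\times 6$ block $J^{\oplus 3}$ if $t$ is odd. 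Here I would try $S_0=\begin{pmatrix}A&I_2\\I_2&0\end{pmatrix}$ with $A$ chosen so that $R_0+S_0$ has an invertible Schur complement. Failing that, I would do a direct finite search, which is small.
\end{itemize}
Verifying existence of these few finite gadgets is the main computational obstacle. Once they exist, Steps 1–3 assemble $R=Q_1Q_1^T+Q_2Q_2^T$, and undoing the congruence of Step 1 yields Eq.~\eqref{eq:two-congruence-phase} with $r\le2$.
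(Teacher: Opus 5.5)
Your Steps 1--3 are sound and follow the paper's route. You reduce $R$ to a congruence normal form, split that form into two nonsingular nonalternating symmetric summands using small block gadgets, and then use that every such summand is congruent to $\tau$. Your nonalternating gadgets differ from the paper's only cosmetically and are correct; for instance, you handle $\rho=1$ via $\diag(1,0)=\bigl(\begin{smallmatrix}1&1\\1&0\end{smallmatrix}\bigr)+J$, with nonalternation restored by a leftover zero coordinate.

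The gap is Step~4. The alternating normal forms $J^{\oplus t}\oplus 0_{k-2t}$, which you rightly call the main obstacle, are never actually handled. You offer candidates and propose a search, but no gadget is exhibited or shown to exist, so the proof does not yet cover these $R$. In addition, the case split fails as written. The instruction ``if $2t<k$, use blocks $J\oplus 0$'' needs at least $t$ zero coordinates. That fails for $R=J\oplus J\oplus 0_1$ with $k=5$: pairing one $J$ with the single zero leaves a lone $J$, and your own $2\times2$ determinant computation shows a lone $J$ admits no gadget.

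The missing ingredient is an explicit splitting of the alternating forms, which the paper supplies.
\begin{itemize}
\item For $t=1$ (where $k\ge3$ guarantees a zero coordinate), take $S_0=\bigl(\begin{smallmatrix}0&0&1\\0&1&0\\1&0&0\end{smallmatrix}\bigr)$ on $J\oplus 0_1$. Then $S_0+(J\oplus 0_1)=\bigl(\begin{smallmatrix}0&1&1\\1&1&0\\1&0&0\end{smallmatrix}\bigr)$, and both matrices have determinant $1$ and a nonzero diagonal entry.
\item For every $t\ge2$, treat the whole $J^{\oplus t}$ as one block. After a coordinate permutation it becomes $\Omega_t=\bigl(\begin{smallmatrix}0&I_t\\I_t&0\end{smallmatrix}\bigr)$. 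Take $S_0=\bigl(\begin{smallmatrix}I_t&P\\P^T&0\end{smallmatrix}\bigr)$ with $P$ and $P+I_t$ both invertible; direct sums of the companion matrices of $x^2+x+1$ and $x^3+x+1$ work, since these polynomials have no roots in $\F_2$. Then both $S_0$ and $S_0+\Omega_t=\bigl(\begin{smallmatrix}I_t&P+I_t\\P^T+I_t&0\end{smallmatrix}\bigr)$ are nonsingular and nonalternating. Any remaining zero coordinates get your $(1)$ blocks.
\end{itemize}
Your own template does work for $J\oplus J$. With $A=I_2$, the Schur complement of $\bigl(\begin{smallmatrix}I_2+J&I_2\\I_2&J\end{smallmatrix}\bigr)$ over the block $J$ is $I_2+J+J^{-1}=I_2$. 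It does not, however, cover the odd case $J^{\oplus3}$, which is exactly where a matrix $P$ with $P$ and $P+I$ both invertible is needed.
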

A proof of this statement is lengthy, so we give it in Appendix~\ref{sec:proof_matrix_decomp}. In particular, by applying this proposition to the case $\tau=\tau_{k'}\oplus\tau_{k'}$, we can see that an arbitrary gate of the form $\mathcal U(B)$ or $\mathcal L(C)$ with $B,C\in \operatorname{Sym}_{2k'}(\F_2)$ can be generated.
In fact, the phase-type fold-transversal gate $\operatorname{S}_{\rm fd, 1}\otimes \operatorname{S}_{\rm fd, 2}$ sandwiched by the gate sequences realizing $\mathcal D(Q)$ and $\mathcal D(Q^{-1})$ for $Q \in \GL_{2k'}(\F_2)$ given in Prop.~\ref{prop:diagonal_generation} leads to the logical action in the form
\begin{align}
    \begin{split}
        &\begin{pmatrix}
        Q & \vline & \\
        \hline 
        & \vline & Q^{-T}
    \end{pmatrix} \begin{pmatrix}
        I_{k'} & & \tau_{k'} & \\
        & I_{k'} & & \tau_{k'} \\
        & & I_{k'} & \\
        & & & I_{k'} 
    \end{pmatrix} \begin{pmatrix}
            Q^{-1} & \vline & \\
            \hline
            & \vline & Q^T
        \end{pmatrix} \\
    &= \begin{pmatrix}
        I_{2k'} & \vline & Q (\tau_{k'} \oplus \tau_{k'}) Q^T \\
        \hline
        & \vline& I_{2k'}
    \end{pmatrix}
    = \mathcal{U}(Q\tau Q^T).
    \end{split} 
\end{align}
Multiplying the above upper shear leads to the sum in Eq.~\eqref{eq:two-congruence-phase} and thus generates arbitrary $\mathcal{U}(R)$ from Prop.~\ref{prop:symmetric_matrix_decomp}. 
Lower shears can be implemented by inserting upper shears between Hadamard-type fold-transversal gates. Indeed, 
\begin{align}
\begin{split}
    &\begin{pmatrix}
        & \tau \\ \tau
    \end{pmatrix} \begin{pmatrix}
        I_{2k'} & \tau Q \tau Q ^T \tau  \\
        & I_{2k'}
    \end{pmatrix}
    \begin{pmatrix}
        & \tau \\ \tau
    \end{pmatrix}
    = \begin{pmatrix}
        I_{2k'} & \\ Q\tau Q^T & I_{2k'}
    \end{pmatrix} \\
    &= \mathcal{L}(Q\tau Q^T),
\end{split}
\end{align}
where we used the identity $\tau^2 = (\tau_{k'} \oplus \tau_{k'})^2 = I_{2k'}$.

The ability to generate an arbitrary $\mathcal U(B)$ and $\mathcal L(C)$ in $\Sp_{4k'}(\F_2)$ is in fact sufficient to generate all the logical Clifford gates, which can be shown in the next section.

\subsubsection{Multi-block Clifford decomposition}
\label{sec:symplectic-block-decomposition}

In this section, we mention how to compile a multi-block Clifford gate into our elementary logical gate gadgets. Although the basic strategy is the same as that in Ref.~\cite{Malcolm2026}, our available elementary gate gadget is far more limited compared to the code considered in Ref.~\cite{Malcolm2026}. Thus, the total overhead behaves differently from Ref.~\cite{Malcolm2026}.

For simplicity, we assume that $b$ is even and all-to-all connectivity across or within any code blocks is allowed in the following.
Any $S\in \Sp_{2m}(\F_2)$ can be decomposed as~\cite{Malcolm2026}
\begin{equation}
S=\mathcal U(E)\mathcal L(X) \mathcal U(Y) \mathcal L(Z) \mathcal U(W),
\end{equation}
where $E, X, Y, Z, W\in\operatorname{Sym}_{m}(\F_2)$, and $E$ has column weight at most one. The role of $\mathcal U(E)$ is to make the top-left quadrant of $\mathcal U(E)^{-1}S$ invertible. From the commutativity, we have $\mathcal U(B)=\prod_i \mathcal U(B_i)$ when $B=\sum_{i}B_i$. Exploiting this, we can decompose an element of $\operatorname{Sym}_{m}(\F_2)$ into $b\times b$ collections of $\operatorname{Sym}_{k'}(\F_2)$ blocks and perform gates that correspond to off-diagonal terms of $X, Y, Z, W$ given above with high parallelism, since they can be implemented by upper- or lower-triangular diagonal $\mathcal D(A)$ gates in Eq.~\eqref{eq:upper-triangular-decom} sandwiched by fold-transversal Hadamard gate acting on one of the blocks.
The remaining diagonal blocks are in-block shear gates, so they are taken into account locally. See Appendix~\ref{sec:multi-block_Clifford_decomp} for more details.

If there is no $\mathcal U(E)$ term (i.e., $S$ has an invertible top-left quadrant), the worst-case number of layers of gate gadgets is given by 
\begin{equation}
    4(2b+34)\rho_{\max} + 12b + \Theta(1), 
\end{equation}
which is proved in Appendix~\ref{sec:multi-block_Clifford_decomp}. The overhead of $\mathcal U(E)$ depends highly on the structure of $E$. If it can be decomposed into off-diagonal blocks, they can be performed in at most $2\rho_{\max}+\Theta(1)$ depth. If there are diagonals, it may be more costly; each diagonal gate $\mathcal D(Q)$ in $\mathcal U_i(E)$ as shown in Prop.~\ref{prop:symmetric_matrix_decomp} requires six block-unitriangular factors in the worst case, which may result in $36\rho_{\max}+\Theta(1)$ depth.

In Appendix~\ref{sec:multi-block_Clifford_decomp}, we also explain an LDU-type decomposition, which may improve the depth in the special case.

\subsection{Overhead analysis for full Clifford synthesis} \label{subsec:overhead_analysis}
Until this point, we have not specified the exact scaling of $\rho_{\max}$, which determines the time overhead. If $\rho_{\max}=O(k')$, the required gate depth mentioned in the previous section has nearly optimal scaling~\cite{Duncan2020, Maslov2022, Malcolm2026} for practical use. However, we can see below that for a constant-rate family of \nameofcode codes, one has $\rho_{\max}=k'^{2-o(1)}$ in general.

As shown in Eq.~\eqref{eq:cardinality_Borel}, the cardinality of the Borel subgroup $|B|$ is given by $q^{\ell(w_o)}\Theta(q^r)$.
When one takes $q\simeq r \gg 1$, the code parameter $[n, k, d]$ of the Steinberg code satisfies $[G:B]=n=\Theta(k)=\Theta(q^{\ell(w_0)})$.
Thus, we have
\begin{equation}
    |G| = \Theta(q^{2\ell(w_o)+r}).
\end{equation}
Since $r\leq \ell(w_o)\leq 2r^2$ holds from the classification of Chevalley groups~\cite{Carter1972}, the above implies that 
\begin{equation}
    k^{2+o(1)} \leq |G| \leq \Theta(k^3).
\end{equation}
As can be seen from the above, $G$ has almost the smallest possible cardinality for the representation to be absolutely irreducible, since a linear basis of $M_{k}(\F_2)$ has cardinality $k^2$.
In fact, the Steinberg representation is the largest among absolutely irreducible representations for a Chevalley group $G$~\cite{Humphreys1987}.
This accounts for the scaling $\rho_{\max}=k'^{2-o(1)}$. Indeed, a simple counting argument implies that $|M_{k'}(\F_2)|=2^{k'^2}$. To represent an arbitrary element of $M_{k'}(\F_2)$ by a linear combination of $\Gamma$ defined in Eq.~\eqref{eq:def_Gamma}, we have
\begin{equation}
    |\Gamma|^{\rho_{\max}} \gtrsim |M_{k'}(\F_2)|,
\end{equation}
where the left-hand side counts the number of choices of $\rho_{\max}$ elements from $\Gamma$ with replacement, where the duplicated elements account for the length of the summand less than $\rho_{\max}$. Using $|\Gamma|=|G|^2$ and $k'=k^2$, we have
\begin{equation}
    \rho_{\max} \gtrsim \frac{\log|M_{k'}(\F_2)|}{\log|G|^2} = \Theta(k'^2/\log k'),
\end{equation}

In general, finding the optimal linear decomposition with $k'^{2-o(1)}$ summands may not be efficient since it is a satisfiability problem with a large size. An efficient, convenient strategy is to form an operator basis with $\Gamma$ and decompose an element of $M_{k'}(\F_2)$ with this basis, which leads to $\rho_{\max}=k'^2$.

For $b$ blocks of $\llbracket n',k',d\rrbracket$ \nameofcode codes with $b\gg 1$, which have $m = bk'$ logical qubits in total, the gate decomposition mentioned in the previous section leads to the scaling of gate depth, assuming $\rho_{\max}=k'^2$, with
\begin{equation}
    \Theta(bk'^2) = \Theta(m k'),
\end{equation}
which thus have additional $k'$ overhead compared to a conventional compilation~\cite{Duncan2020, Maslov2022, Malcolm2026} or the code families with full logical Clifford gates by fold-transversal gates~\cite{Malcolm2026, koh2026, tansuwannont2026, holmes2026}.
To avoid the large time overhead, we thus need to keep the number $k'$ of logical qubits in a code block small.
Ideally, $k'$ should be polylogarithmically small compared to the total number of logical qubits (i.e., $k'=\operatorname{polylog}(m)$) to keep the time overhead small and the code distance sufficiently large. The situation is similar to the scheme considered in Refs.~\cite{gottesman2014, Tamiya2026} for a constant-space-overhead quantum computation with qLDPC codes. Note again that the \nameofcode code family is not strictly qLDPC.

In terms of the space overhead, our code achieves a constant rate, which is a distinctive feature of the other codes that allow all the logical Clifford gates with transversal and fold-transversal gates~\cite{Malcolm2026, koh2026, tansuwannont2026, holmes2026, albert2026}. Although the \nameofcode code family has sublogarithmically growing stabilizer weights, which means that additional space overhead may be incurred for fault-tolerant state preparation in the asymptotic sense, none of the known codes except those in Ref.~\cite{Malcolm2026} is a qLDPC stabilizer code nor has polylogarithmic stabilizer weight, which may thus require more space overhead for fault-tolerant state preparation. 
The codes developed in Ref.~\cite{Malcolm2026} have only logarithmically many logical qubits.
Thus, the space overhead is a distinct advantage of the \nameofcode code family over the families of codes that have full logical Clifford gates induced by transversal and fold-transversal gates.

\section{Numerical simulation}
\label{sec:numerical-study}
In this section, we perform circuit-level memory simulations for $\llbracket 637, 64, 6 \rrbracket $ and $ \llbracket 2925, 256, 8 \rrbracket$ \nameofcode codes whose seed Steinberg codes are $[21, 8, 6]$ and $ [45, 16, 8]$, respectively. Both Steinberg codes have $q=r=2$, resulting in the stabilizer weight $5$ for these \nameofcode codes. The details of these seed Steinberg codes can be found in Appendix~\ref{appendix:code-examples}. 

The simulations are conducted with the following steps: 1) Initialization of the $\ket{0}^{\otimes n}$ state, 2) 10 rounds of syndrome extraction, and 3) Pauli $Z$ measurements on all the qubits.
The syndrome extraction circuits are constructed so that, for each stabilizer generator, sequential CNOT gates are applied between the data qubits in the support of the generator and a bare ancilla qubit. The depth of the entire syndrome extraction circuit is minimized by partitioning CNOT gates according to the standard edge-coloring problem~\cite{vizing1964estimate}. We have not optimized the CNOT ordering to suppress hook errors.

The standard circuit-level noise is assumed; initialization of physical qubits, physical single-qubit and two-qubit gates, and physical Pauli measurements are all assumed to be noisy and to be subject to the same error rate. Thus, there is only one physical error parameter. 
The syndrome data is generated through implementation of these circuits by the package $\texttt{Stim}$~\cite{Gidney_2021} in Python. 

For decoding, the belief propagation (BP) decoder with the LSD post-processing~\cite{Hillmann_2025} is used. 
We adopt the overlapping window version of the BP-LSD decoder that is implemented in the package \texttt{LDPC}~\cite{Roffe_LDPC_Python_tools_2022, roffe_decoding_2020}. The window size is set equal to the code distance, and the size of the commit region is set to 1. For the parameters of the BP-LSD decoder, the max BP iteration number is set to 30, and the order of LSD is set to 0. 

The memory experiments are executed for a given number of shots, and a shot is counted as failed if at least one of the logical operators is flipped. The number of shots is set to $2 \times 10^{5}$ for the $\llbracket 637, 64, 6 \rrbracket$ \nameofcode code and to $3 \times 10^{5}$ for the $\llbracket 2925, 256, 8 \rrbracket$ \nameofcode code. To decide the flip of logical operators, the logical Pauli basis given in Eq.~\eqref{eq:lshgp-logical-basis-Z} is used.
The logical error rate $p_L$ is calculated as the ratio of failed shots to total shots. Logical error rate per cycle $p_{L, 1}$ is then evaluated through 
\begin{align}
    p_L = 1 - (1 - p_{L,1})^l,
    \label{eq:logical-error-rate-per-cycle}
\end{align}
or equivalently
\begin{align}
    p_{L,1} = 1 - (1 - p_L)^{1/l},
\end{align}
where $l$ denotes the number of syndrome extraction cycles executed. This equation is derived under the assumption that logical errors occur in each cycle independently and that multiple logical flips of any logical Pauli basis are negligibly small. Note that the same scaling is adopted in~\cite{Bravyi_2024}.

For comparison, we have added the logical error rates of multiple blocks of rotated surface codes, each of which has code distance 5 and 7. The numbers of code blocks are chosen to be 64 and 256, respectively, to be consistent with the \nameofcode codes with similar code distances. For the decoding of the rotated surface code, the minimum-weight matching decoder implemented in $\texttt{pymatching}$~\cite{Higgott_2025} is used. For both rotated surface codes with distance 5 and 7, the number of shots is set to $10^9$. 
To scale the value of the logical error rate per round $p_{L,1}$ of a single-block surface code to that of $m$ blocks of surface code, we follow the same procedure in Eq.~\eqref{eq:logical-error-rate-per-cycle}: we use
\begin{align}
    p^{(m)}_{L,1} = 1 - (1 - p_{L,1})^m,
\end{align}
which holds under the assumption that logical errors occur in each block independently.

We use the 95\% Clopper-Pearson interval~\cite{CLOPPER_1934} as the error bar in our plot. Note that the effect of the scaling of logical error rates $p_L$ into $f(p_L)$ by a monotone function $f$ is reflected through the transformation of the original interval $[L, U]$ into $[f(L), f(U)]$.

Fig.~\ref{fig:memory-experiment} shows the result of our numerical simulations.
It demonstrates that the logical error rate of $\llbracket 637, 64, 6 \rrbracket$ \nameofcode codes steadily decreases as the physical error rate decreases, and its performance is better than the 64-block surface code with distance 5; the logical error rate stays lower and decays faster. 
To encode 64 logical qubits, the distance-5 rotated surface code needs 1600 physical qubits while the \nameofcode code needs only 637 physical qubits, an approximately 2.5-fold reduction. 
A similar behavior can be observed for the $\llbracket 2925, 256, 8 \rrbracket$ \nameofcode code compared with the 256-block rotated surface code with distance 7, which is an approximately 4.3-fold reduction in the physical-qubit overhead in this case. Notably, we observe an order-of-magnitude improvement in the logical error rate at the physical error rate $10^{-3}$.
These two plots for the \nameofcode codes also show that there is no obvious hook-error floor~\cite{Dennis_2002}, even though the CNOT ordering was not optimized to avoid hook errors.
We finally note that these results of the \nameofcode codes are obtained with a modest number of maximum BP iterations and the lowest LSD order, highlighting the error-correcting potential of the \nameofcode codes.
This may be owed to the low-weight stabilizers of these codes (in this case 5).

\begin{figure*}[ht]
    \centering
    \includegraphics[width=\linewidth]{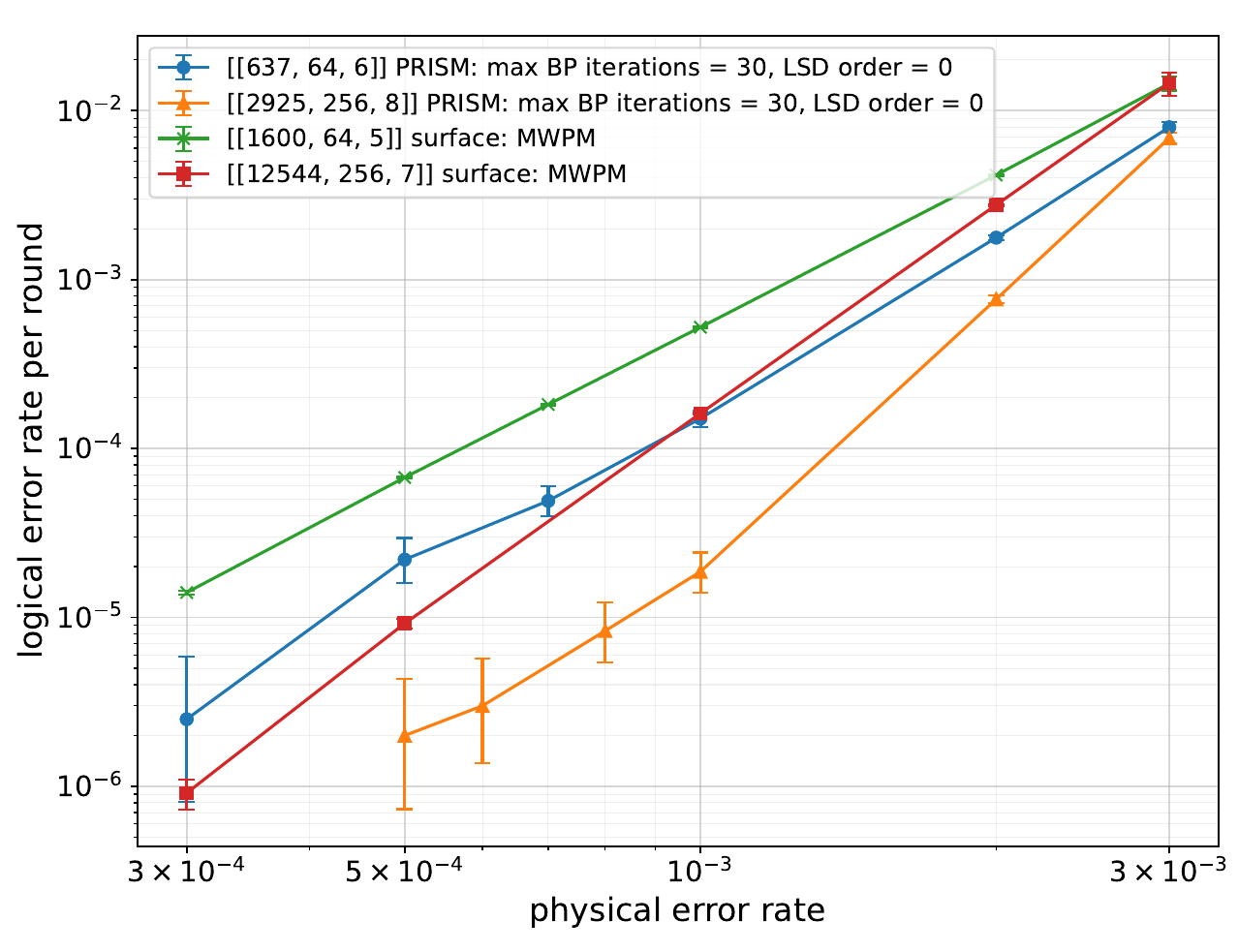}
    \caption{Simulation result of the block logical error rate per syndrome extraction cycle for the memory experiment of the $\llbracket 637, 64, 6 \rrbracket$ and  $\llbracket 2925, 256, 8 \rrbracket$ \nameofcode codes constructed from the seed $[21, 8, 6]$ and $[45, 16, 8]$ Steinberg codes.}
    \label{fig:memory-experiment}
\end{figure*}

\section{Discussion} \label{sec:discussion}

In summary, we have constructed a family of codes named \nameofcode codes with full logical Clifford actions only by physical transversal and fold-transversal gates that simultaneously achieve an asymptotic constant rate, growing distance, and sublogarithmic growth of the stabilizer weight. Our result employs the $ZX$-duality of hypergraph product codes and code automorphisms that are lifted from seed Steinberg codes. The crucial property of a Steinberg code constructed from a Chevalley group over $\F_{2^f}$ is that it has an absolutely irreducible binary representation of the Tanner-graph automorphism group called the Steinberg representation. Thus, qubit permutations of a \nameofcode code block induce logical CNOT-type gates corresponding to the (tensor product of) Steinberg representation. Since an absolutely irreducible representation of a group can generate an arbitrary matrix with a linear combination of its elements, these code automorphisms combined with other transversal and fold-transversal gates can give a generating set of symplectic matrices, which thus implies that full logical Clifford gates can be generated by these gate primitives. 
Our analysis shows that this achievement comes at the cost of longer gate depth that scales as $\Theta(bk'^2)$ for synthesizing an arbitrary logical Clifford gate on $b$ \nameofcode code blocks, each of which encodes $k'$ logical qubits.
Furthermore, we have numerically demonstrated that \nameofcode codes show a steady decrease in logical error rates under a circuit-level noise simulation even with a modest number of BP iterations for the BPLSD decoder. This may be owed to a low stabilizer weight of \nameofcode codes in small instances.

There are several open questions. 
One immediate question is how to inject the magic-state injection into a \nameofcode code. Since the stabilizer weight of the \nameofcode code family grows slowly, one may use the adapter~\cite{cross2025, Swaroop2026}, which is developed for performing a logical multi-qubit Pauli measurement between different qLDPC codes. By connecting a \nameofcode code block to a magic-state factory using the adapter, one can perform a non-Clifford gate, leading to a universal fault-tolerant quantum computation. Whether magic states can more directly be injected into a \nameofcode code block and be distilled by its Clifford gates is open.

Another question is how to effectively exploit the meta-check structure of the \nameofcode codes in decoders. The \nameofcode codes naturally have meta-check structures as shown in Rem.~\ref{rem:meta-check}, and this provides additional constraints that could be incorporated explicitly into a decoder, whereas our present simulations apply BP-LSD directly and therefore do not exploit it. One natural approach would be a hierarchical decoder: the meta-check syndrome could first be used to identify and repair inconsistent syndrome bits, after which BP-LSD would decode the resulting syndrome. Alternatively, a unified decoder could simultaneously infer data and syndrome errors while enforcing both the stabilizer and meta-check constraints. Since the distance of the meta-check code is substantially smaller than that of the code itself, single-shot decoding may not be possible for the \nameofcode code as the code size increases. Nevertheless, exploiting the meta-check constraints may still yield appreciable finite-size improvements under the circuit-level noise, and comparing such structure-aware decoders with the direct BP-LSD baseline is an important direction for improving the decoder.

Going beyond our code construction, an open question is whether an asymptotic constant rate is compatible with a linear gate depth in the number of logical qubits to implement an arbitrary logical Clifford gate. As mentioned in Sec.~\ref{subsec:overhead_analysis}, the \nameofcode codes do not achieve a linear gate depth, unlike Ref.~\cite{Malcolm2026}, which uses a similar compiling strategy. The reason in this case is clear; the cardinality of the automorphism group is too small. To achieve a linear gate depth, the cardinality of the automorphism group needs to be at least exponentially large in the number of logical qubits. Thus, simultaneously achieving a constant rate and a linear-gate-depth logical Clifford action requires an exponentially large code automorphism group in the number of physical qubits. We are not currently aware of such constructions.

Closely related to the above question, our final question is a general tradeoff relation between the applicable logical Clifford gates, the encoding rate, and the stabilizer weight. This has been studied for the phantom-code family~\cite{morris2026}, but the constraint may be extended to more general tradeoffs of code parameters for code families that afford full logical Clifford action via transversal and fold-transversal gates. 
It is expected that parity-check matrices of an LDPC code family do not have rapidly growing non-abelian Tanner-graph automorphism groups that act nontrivially on the code space. This implies that the construction based on the hypergraph product of highly symmetric classical codes, as is done in this work, may have an inherent limitation and may not be able to yield a constant-rate qLDPC family that has full logical Clifford action via transversal and fold-transversal gates. Going beyond a current tradeoff between the encoding rate and the stabilizer weight while keeping full-Clifford applicability may thus require a new technique. Answering this question may be a step towards a deeper understanding of the difference between memory and computation in quantum mechanics.

\section{acknowledgment}
This work is supported by JST, Moonshot R\&D, Grant Number JPMJMS256I.
Y.~C.~is supported by JST, Moonshot
R\&D, Grant Number JPMJMS256E.

\bibliography{ref}% Produces the bibliography via BibTeX.

\appendix

\section{Proof of Proposition~\ref{prop:symmetric_matrix_decomp}} \label{sec:proof_matrix_decomp}
In this section, we prove Prop.~\ref{prop:symmetric_matrix_decomp}.
A symmetric matrix $A$ is called \emph{alternating} when $xAx^T=0$ for every $x$.  Over $\F_2$ this is equivalent to $\diag(A)=0$.  Thus, $A$ is \emph{nonalternating} precisely when at least one of its diagonal entries is nonzero. We first restate Prop.~\ref{prop:symmetric_matrix_decomp}.

\begin{prop}
\label{prop:two-congruence}
Let $k\geq3$. Then, for every $R\in\Sym_k(\F_2)$, there exist
$Q_1,Q_2\in\GL_k(2)$ such that
\begin{equation}
 R=Q_1\tau Q_1^{T}+Q_2\tau Q_2^{T}.
 \label{eq:target-decomposition}
\end{equation}
\end{prop}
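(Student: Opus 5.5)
The plan is to remove $\tau$ by congruence, reduce $R$ to a normal form, and then build the decomposition block by block. Here $\tau$ is the fixed nonsingular nonalternating symmetric matrix of Prop.~\ref{prop:symmetric_matrix_decomp}.

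\emph{Step 1: reduce to $\tau=I_k$.} I will use the classical classification of symmetric bilinear forms over $\F_2$. A nondegenerate nonalternating symmetric form has an orthonormal basis, so $\tau=P P^{T}$ for some $P\in\GL_k(\F_2)$. Replacing $Q_s$ by $Q_sP$ turns Eq.~\eqref{eq:target-decomposition} into $R=Q_1Q_1^T+Q_2Q_2^T$. By the same classification, $\{QQ^T:Q\in\GL_k(\F_2)\}$ is exactly the set $\mathcal N_k$ of invertible nonalternating symmetric matrices. So the claim becomes: for every $R\in\Sym_k(\F_2)$ there is $S\in\mathcal N_k$ with $R+S\in\mathcal N_k$.

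\emph{Step 2: reduce $R$ to a normal form.} The set $\mathcal N_k$ is stable under $S\mapsto PSP^T$, so $R$ may be replaced by any congruent matrix. Let $h=\left(\begin{smallmatrix}0&1\\1&0\end{smallmatrix}\right)$. Up to congruence, $R$ is either $I_a\oplus 0_c$ in the nonalternating case, or $h^{\oplus b}\oplus 0_c$ in the alternating case. This uses the standard fact that $I_1\oplus h$ is congruent to $I_3$.

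\emph{Step 3: construct $S$ blockwise.} I take $S$ block diagonal, compatible with a partition of the normal form into small blocks. Invertibility of $S$ and of $R+S$ is then checked block by block. Nonalternation only has to hold globally, i.e.\ some single diagonal entry of $S$ and some single diagonal entry of $R+S$ must be nonzero. The local pieces are as follows.
\begin{itemize}
\item On a $0_1$ block, take $S=1$; then $R+S=1$ as well.
\item On an $I_2$ block, take $S=\left(\begin{smallmatrix}1&1\\1&0\end{smallmatrix}\right)$, so that $R+S=\left(\begin{smallmatrix}0&1\\1&1\end{smallmatrix}\right)$. Both are invertible with a nonzero diagonal entry.
\item On an $h\oplus h$ block, or on a $0_2$ block, an alternating choice such as a suitable $S$ with $S$ and $S+R$ both invertible is enough locally. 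Nonalternation is supplied by some other block.
\item The leftover odd or awkward pieces are $I_1$, $h$, $I_3$, $h\oplus 0_1$, and similar. These are absorbed into a neighbouring block of total size $3$ or $4$, and $S$ is chosen there by an explicit finite search. Examples are $I_3$, $I_1\oplus 0_2$, $h\oplus 0_1$, $h\oplus I_1\cong I_3$, and $h\oplus h\oplus\cdots$ with a zero or identity summand attached.
\end{itemize}

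\emph{Main obstacle.} The hard part is this small-block bookkeeping, especially when $R$ is alternating, $R=h^{\oplus b}$ with $c=0$. A single $h$ block admits no valid $S$: for each of the four invertible symmetric $2\times 2$ matrices $S$, the sum $h+S$ is singular. This failure is exactly why $k=2$ is excluded. So I must show that every normal form with $k\ge 3$ can be tiled by blocks of sizes $1$ to $4$, each admitting a locally invertible choice, with at least one block contributing a diagonal $1$ to both $S$ and $R+S$.

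I will first treat $h\oplus h$ and $h\oplus 0_1$ explicitly, by hand. Then I will pair off the remaining $h$'s in twos, and attach any single leftover $h$ to a third coordinate (a zero or identity summand, or one coordinate of another $h$ pair, forming a $3$- or $4$-dimensional block). In each case I will verify the required $3\times3$ or $4\times4$ witness directly.
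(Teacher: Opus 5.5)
Your overall route is the same as the paper's. You reduce to showing that the normal form of $R$ is a sum of two invertible nonalternating symmetric matrices, and then use that every such matrix is congruent to $\tau$. Your $I_2$, $0_1$, $I_3$, $I_1\oplus 0_2$ and $h\oplus 0_1$ pieces also match Lemma~\ref{lem:normal-form-splitting}.

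\textbf{Gap.} The argument breaks in the alternating case with no radical and an odd number of hyperbolic planes: $R\cong h^{\oplus b}$ with $b$ odd and $c=0$. Here $k\equiv 2\pmod 4$, and the first instance is $k=6$, $R=h^{\oplus 3}$. Your claim that every normal form can be tiled by blocks of size at most $4$ is false in this case.

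For a block-by-block invertibility check to be valid, $R$ itself must be block diagonal with respect to your partition. Then $R$ restricted to each part is nondegenerate and alternating, hence even-dimensional. This holds in any basis, so no further congruence helps. A $2$-dimensional nondegenerate alternating block is necessarily $h$, and you showed yourself that a single $h$ admits no witness. So every part has dimension $4,6,8,\dots$, and with $b$ odd some part must have dimension at least $6$.

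Your proposed repair does not work. Attaching the leftover $h$ to ``one coordinate of another $h$ pair'' gives a partition under which $R$ is not block diagonal: the off-diagonal $1$ of the split plane lies outside the blocks. Then $R+S$ is not block diagonal and its invertibility cannot be read off blockwise. The other attachments you list, to a $0_1$ or $I_1$ summand, do not exist when $R$ is alternating and $c=0$.

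\textbf{Fix.} You need a genuinely $6$-dimensional witness for $h^{\oplus 3}$, or better, the paper's uniform construction.
\begin{itemize}
\item Permute $h^{\oplus b}$ to $\Omega_b=\begin{pmatrix}0&I_b\\ I_b&0\end{pmatrix}$.
\item Take $S=\begin{pmatrix}I_b&P\\ P^{T}&0\end{pmatrix}$, where $P$ and $P+I_b$ are both invertible. For example, let $P$ be a direct sum of companion matrices of $x^2+x+1$ and $x^3+x+1$.
\item Then $S+\Omega_b=\begin{pmatrix}I_b&P+I_b\\ P^{T}+I_b&0\end{pmatrix}$.
\item The block-triangular elimination argument ($P^{T}x=0\Rightarrow x=0$, then $Py=0\Rightarrow y=0$) shows that $S$ and $S+\Omega_b$ are both invertible.
\item The $I_b$ corner makes both of them nonalternating.
\end{itemize}
This handles every $b\ge 2$ at once, including the odd case your tiling cannot reach.
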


We first state the two ingredients used in the proof of the above proposition. 

\begin{lem}[Binary symmetric normal form~\cite{Buckhiester1973}]
\label{lem:normal-form}
Every symmetric $R\in M_k(\F_2)$ is congruent to one of the forms
\begin{equation}
 I_r\oplus0_z,
 \qquad r+z=k,
 \label{eq:nonalternating-normal-form}
\end{equation}
or
\begin{equation}
 X^{\oplus h}\oplus0_z,
 \qquad
 X=\begin{pmatrix}0&1\\1&0\end{pmatrix},
 \qquad 2h+z=k.
 \label{eq:alternating-normal-form}
\end{equation}
The second case is the alternating case.  In particular, every nonsingular nonalternating symmetric $k\times k$ matrix is congruent to $I_k$.
\end{lem}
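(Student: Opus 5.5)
The plan is to prove the normal form by induction on $k$ using simultaneous row/column elimination, i.e.\ congruence $R\mapsto PRP^{T}$ with elementary $P$. Then I will merge the resulting blocks with a single small congruence identity.

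First I split off a block whenever $R\neq 0$. If $R=0$ there is nothing to do, giving $0_k$.

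\emph{Diagonal case.} Suppose some diagonal entry $R_{ii}=1$. Conjugating by a permutation moves it to position $(1,1)$. For every $j\ge2$ with $R_{1j}=1$, I add row $1$ to row $j$ and column $1$ to column $j$; this is the congruence by $I+E_{j1}$. It clears the first row and column and leaves $R_{11}=1$. Hence $R\simeq (1)\oplus R'$ with $R'\in\Sym_{k-1}(\F_2)$.

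\emph{Alternating case.} Suppose instead all diagonal entries vanish but $R_{ij}=1$ for some $i\neq j$. A permutation puts the block $X$ in the top-left $2\times2$ corner. Since $X$ is invertible, the remaining entries of rows $1,2$ can be cleared by adding suitable combinations of rows/columns $1,2$ to the others. This gives $R\simeq X\oplus R'$.

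By induction, $R$ is congruent to a direct sum of copies of $(1)$, copies of $X$, and a zero block.

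Next, I note the invariant that separates the two cases. Over $\F_2$ the quadratic form satisfies $xRx^{T}=\sum_i R_{ii}x_i^2$, and the property ``$xRx^{T}=0$ for all $x$'' is preserved under congruence. A sum containing at least one $(1)$ block is therefore nonalternating. A sum of only $X$'s and zeros is alternating. Hence the two families of normal forms cannot overlap.

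The key step, and the only real obstacle, is showing that $(1)\oplus X\simeq I_3$. With this, any $X$ blocks can be absorbed one at a time as soon as a single $(1)$ block is present, which yields $I_r\oplus 0_z$ in the nonalternating case.

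To prove $(1)\oplus X\simeq I_3$, let $e_1,e_2,e_3$ be the standard basis with the form $(1)\oplus X$. I take the new basis $f_1=e_1+e_2+e_3$, $f_2=e_1+e_2$, $f_3=e_1+e_3$.
\begin{itemize}
\item Each $f_i$ has first coordinate $1$, so $f_iRf_i^{T}=1$.
\item Direct computation gives $f_1Rf_2^{T}=1+1=0$, $f_1Rf_3^{T}=1+1=0$, and $f_2Rf_3^{T}=1+1=0$.
\item The change-of-basis matrix with rows $(1,1,1)$, $(1,1,0)$, $(1,0,1)$ has determinant $1$ over $\F_2$.
\end{itemize}
So $(1)\oplus X$ is congruent to $I_3$.

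Finally, the rank is a congruence invariant. A nonsingular $R$ therefore has $z=0$, and if it is also nonalternating its normal form is $I_k$, which proves the last sentence of Lemma~\ref{lem:normal-form}.
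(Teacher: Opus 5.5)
Your argument is correct. Note that the paper does not prove this lemma at all: it states the normal form and defers to Ref.~\cite{Buckhiester1973} ("We omit the proof"), so your proposal supplies a self-contained route where the paper has none.

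Your elimination is the classical argument:
\begin{itemize}
\item split off a $(1)$ block whenever some diagonal entry is nonzero, and a hyperbolic block $X$ otherwise;
\item separate the two cases by the congruence invariant $xRx^{T}=\sum_i R_{ii}x_i$;
\item absorb hyperbolic blocks using $(1)\oplus X\simeq I_3$.
\end{itemize}
I checked your basis $f_1=e_1+e_2+e_3$, $f_2=e_1+e_2$, $f_3=e_1+e_3$. It does give $F\bigl((1)\oplus X\bigr)F^{T}=I_3$ with $\det F=1$ over $\F_2$, and clearing the row and column of the pivot in both cases works as you describe.

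Compared with the bare citation, your route buys constructivity. Your row/column elimination is exactly the "congruence reduction" producing $U_\tau,U_1,U_2$ that the paper invokes after the proof of Prop.~\ref{prop:symmetric_matrix_decomp}, where it calls the decomposition constructive. With only the citation, that remark rests on an algorithm the paper never states. The citation route is simply shorter.

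One small wording point: the two families are not literally disjoint, since $r=0$ and $h=0$ both give $0_k$. Your invariant argument shows disjointness once $r\geq 1$, and that is all the clause "the second case is the alternating case" requires.
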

We omit the proof of this and refer the reader to Ref.~\cite{Buckhiester1973}.

\begin{lem}[Splitting the normal forms]
\label{lem:normal-form-splitting}
Let $C$ be either normal form in
\eqref{eq:nonalternating-normal-form}--\eqref{eq:alternating-normal-form},
of total dimension $k\geq3$.  Then
\begin{equation}
 C=X+Y
\end{equation}
for symmetric, nonsingular, nonalternating matrices $X,Y\in M_k(\F_2)$.
\end{lem}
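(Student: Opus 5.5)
The plan is to prove Lemma~\ref{lem:normal-form-splitting} by a \emph{block-gluing} argument. Both normal forms are direct sums of a few elementary blocks: $[1]$, $[0]$ and the hyperbolic block $X=\begin{pmatrix}0&1\\1&0\end{pmatrix}$. Nonsingularity is preserved under direct sums. Nonalternation of a direct sum needs only one summand with a nonzero diagonal entry. So it suffices to do two things. First, cut the index set into consecutive groups of size $2$ or $3$; this is possible because every $k\geq3$ is of the form $2a+3b$. Second, for each group, write the corresponding principal block $C_i$ of $C$ as $C_i=X_i+Y_i$ with $X_i,Y_i$ symmetric and nonsingular. The global nonalternation of $X=\bigoplus X_i$ and $Y=\bigoplus Y_i$ is then arranged by making sure that at least one local $X_i$ and at least one local $Y_i$ have a nonzero diagonal entry. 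The grouping has to respect the blocks of $C$: an $X$ block must never be cut across two groups.

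The first step is a finite list of local decompositions. In size $2$, the symmetric nonsingular matrices over $\F_2$ are exactly $I_2$, $X$, $N_1=\begin{pmatrix}1&1\\1&0\end{pmatrix}$ and $N_2=\begin{pmatrix}0&1\\1&1\end{pmatrix}$. From these one gets
\begin{equation}
 I_2=N_1+N_2,\qquad 0_2=I_2+I_2=N_1+N_1,\qquad [1]\oplus[0]=N_1+X .
\end{equation}
In each of these, at least one summand on each side is nonalternating. One checks directly that $X$ itself has \emph{no} such splitting in size $2$: the only candidate sums equal to $X$ use a singular matrix. Likewise $[1]$ and $[0]$ cannot be handled in size $1$, since $1+1=0$. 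This is exactly why groups of size $2$ and $3$ are needed and why $k\geq3$ is required. In size $3$, I will exhibit explicit decompositions, found by hand or by a short search over the $2^6$ symmetric $3\times3$ matrices, for the blocks that can occur in a group of size $3$:
\begin{equation}
 I_3,\quad I_2\oplus[0],\quad [1]\oplus0_2,\quad 0_3,\quad X\oplus[1],\quad X\oplus[0].
\end{equation}
For example,
\begin{equation}
 I_3=I_3+\begin{pmatrix}0&1&1\\1&0&1\\1&1&0\end{pmatrix}
\end{equation}
fails, because that all-off-diagonal matrix is singular over $\F_2$, so some care is needed. For $X\oplus[0]$ I would try $X_i=\begin{pmatrix}1&1&0\\1&0&1\\0&1&0\end{pmatrix}$ and $Y_i=X_i+C_i$, and verify both determinants.

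The second step is the bookkeeping for each normal form. For $I_r\oplus0_z$, pair up indices so that each group is one of $I_2$, $0_2$, $[1]\oplus[0]$, $I_3$, $I_2\oplus[0]$, $[1]\oplus0_2$ or $0_3$. A size-$3$ group is used only when the parity of $r$ or $z$ forces it. For $X^{\oplus h}\oplus0_z$, the $X$ blocks are the problem. Each $X$ must be absorbed into a size-$3$ group $X\oplus[0]$, or combined with another $X$ in a size-$4$ group. So I will also add one $4\times4$ decomposition of $X\oplus X$; the group sizes are then $2$, $3$ and $4$, which is harmless. Global nonalternation is automatic once every local decomposition uses at least one nonalternating summand on each side.

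I expect the main obstacle to be the alternating case when $z$ is small, in particular $z=0$ with $h\ge2$ and the odd-size cases such as $X\oplus[0]$. There the hyperbolic blocks cannot be split individually, so the $4\times4$ (and $3\times3$) gadgets must be found explicitly and checked for nonsingularity. My first attempt for $X\oplus X$ is to take $X_1=I_4$ and $Y_1=I_4+(X\oplus X)$. The second matrix is singular, so I would instead perturb $X_1$ by an off-diagonal coupling between the two $X$ blocks and verify the determinants directly. The remaining verifications are routine determinant computations over $\F_2$.
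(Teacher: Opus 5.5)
There is a genuine gap in the alternating case $C=X^{\oplus h}$ with $z=0$ and $h$ odd; the first instance is $k=6$, $C=X\oplus X\oplus X$. Since $C_{2i-1,2i}=1$, any block-diagonal splitting $\bigoplus_i(X_i+Y_i)$ must keep each hyperbolic pair inside a single group. So when $z=0$, every group is a union of hyperbolic pairs. A lone pair admits no splitting, as you correctly check. Your only gadget built purely from hyperbolic blocks is $X\oplus X$, so an odd number of pairs can never be exhausted by groups of size $2$, $3$, $4$, however you choose them.

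The missing ingredient is a gadget on an odd number $\geq3$ of hyperbolic pairs. The paper provides one uniformly for every $h\geq2$. Reorder coordinates so that $X^{\oplus h}$ becomes $\begin{pmatrix}0&I_h\\ I_h&0\end{pmatrix}$, and write it as $\begin{pmatrix}I_h&P\\ P^{T}&0\end{pmatrix}+\begin{pmatrix}I_h&P+I_h\\ P^{T}+I_h&0\end{pmatrix}$. Both summands are nonalternating because of the $I_h$ block. A matrix $\begin{pmatrix}I_h&P\\ P^{T}&0\end{pmatrix}$ is invertible exactly when $P$ is, so you need $P$ and $P+I_h$ both invertible. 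For $h=3$, the companion matrix of the irreducible polynomial $x^3+x+1$ works; it is the paper's $P_3$. Adding this $6\times6$ gadget, and writing odd $h$ as $3+2j$, closes the gap.

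A second, easily repaired error is your claim that $[0]$ cannot be handled in size $1$. In fact $[0]=[1]+[1]$, with both summands nonsingular and nonalternating; only $[1]$ is unsplittable. Because you forbid singleton zero groups, your listed gadgets also fail on, e.g., $X\oplus0_2$ ($k=4$) and $X\oplus X\oplus[0]$ ($k=5$). In both, every admissible grouping leaves either a lone hyperbolic block or a lone zero coordinate. The paper sidesteps all such parity bookkeeping by appending $I_z$ to both summands, so the radical never constrains the grouping. With that fix, your $X\oplus[0]$ gadget handles $h=1$; its two summands are indeed invertible and nonalternating. The construction above then handles $h\geq2$.

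Finally, ``global nonalternation is automatic'' is not literally true: your own $[1]\oplus[0]=N_1+X$ has an alternating summand. This is harmless only because in $I_r\oplus0_z$ at most one consecutive group straddles the boundary between the $1$'s and $0$'s. Moreover, $k\geq3$ guarantees another group whose two summands are both nonalternating. You should state this explicitly.
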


\begin{proof}
We treat the nonalternating and alternating normal forms separately.

\smallskip
\noindent\emph{Case 1: $C=I_r\oplus0_z$.}
For dimensions two and three, set
\begin{align}
 A_2&=\begin{pmatrix}1&1\\1&0\end{pmatrix},
 &B_2=A_2+I_2
 &=\begin{pmatrix}0&1\\1&1\end{pmatrix},
 \label{eq:I2-splitting}\\[1ex]
 A_3&=\begin{pmatrix}
 0&1&1\\
 1&1&0\\
 1&0&0
 \end{pmatrix},
 &B_3=A_3+I_3
 &=\begin{pmatrix}
 1&1&1\\
 1&0&0\\
 1&0&1
 \end{pmatrix}.
 \label{eq:I3-splitting}
\end{align}
All four matrices are symmetric, nonsingular, and nonalternating.  Every integer $r\geq2$ has the form $r=2a+3b$ with $a,b\geq0$.  Direct sums of
\eqref{eq:I2-splitting} and \eqref{eq:I3-splitting} therefore give
\begin{equation}
 I_r=A_r+B_r,
\end{equation}
where $A_r$ and $B_r$ have all three required properties.  The radical is then added by
\begin{equation}
 I_r\oplus0_z
   =(A_r\oplus I_z)+(B_r\oplus I_z).
 \label{eq:append-zero-block}
\end{equation}

It remains to treat $r=0$ and $r=1$.  If $r=0$, then simply
$0_k=I_k+I_k$.  If $r=1$, the assumption $k\geq3$ implies $z\geq2$, and
\begin{equation}
 \begin{pmatrix}
 1&0&0\\
 0&0&0\\
 0&0&0
 \end{pmatrix}
 =
 \begin{pmatrix}
 0&1&0\\
 1&0&0\\
 0&0&1
 \end{pmatrix}
 +
 \begin{pmatrix}
 1&1&0\\
 1&0&0\\
 0&0&1
 \end{pmatrix}.
 \label{eq:rank-one-splitting}
\end{equation}
The two matrices on the right have determinant one and nonzero diagonal. Any remaining zero coordinates are appended as in
\eqref{eq:append-zero-block}.

\smallskip
\noindent\emph{Case 2: $C=H^{\oplus h}\oplus0_z$.}
The case $h=0$ was already handled by $0_k=I_k+I_k$.  If $h=1$, then $k\geq3$ forces $z\geq1$, and
\begin{equation}
 \begin{pmatrix}
 0&1&0\\
 1&0&0\\
 0&0&0
 \end{pmatrix}
 =
 \begin{pmatrix}
 0&0&1\\
 0&1&0\\
 1&0&0
 \end{pmatrix}
 +
 \begin{pmatrix}
 0&1&1\\
 1&1&0\\
 1&0&0
 \end{pmatrix}.
 \label{eq:H-plus-zero-splitting}
\end{equation}
Again, the two summands are symmetric, have determinant one, and have nonzero diagonals.  Extra zero coordinates are appended using identical identity blocks.

Now suppose $h\geq2$.  After a simultaneous permutation of rows and columns,
\begin{equation}
 H^{\oplus h}\sim
\Omega_h:=\begin{pmatrix}0&I_h\\I_h&0\end{pmatrix},
\end{equation}
where $\sim$ denotes the equivalence up to row and column permutations.
Choose $P_h\in\GL_h(\F_2)$ such that $P_h+I_h$ is also invertible. Such a choice exists for every $h\geq2$: write $h=2a+3b$ and take an appropriate direct sum of
\begin{equation}
 P_2=\begin{pmatrix}0&1\\1&1\end{pmatrix},
 \qquad
 P_3=\begin{pmatrix}
 0&0&1\\
 1&0&1\\
 0&1&0
 \end{pmatrix}.
 \label{eq:P2-P3}
\end{equation}
Both $P_j$ and $P_j+I_j$ are invertible for $j=2,3$.  Define
\begin{align}
 X_h&=\begin{pmatrix}I_h&P_h\\P_h^{T}&0\end{pmatrix},
 &Y_h&=\begin{pmatrix}
 I_h&P_h+I_h\\
 P_h^{T}+I_h&0
 \end{pmatrix}.
 \label{eq:Hh-splitting}
\end{align}
Then $X_h+Y_h=\Omega_h$.  The matrices are symmetric and nonalternating because their upper-left block is $I_h$.  To prove that $X_h$ is invertible, suppose
\begin{equation}
 X_h\binom{x}{y}=0.
\end{equation}
The lower block equation gives $P_h^{T}x=0$, hence $x=0$; the upper block equation then gives $P_h y=0$, hence $y=0$.  The same argument, with $P_h+I_h$ in place of $P_h$, proves that $Y_h$ is invertible.  Appending
$I_z$ to both summands incorporates the radical $0_z$.  Finally, undoing the coordinate permutation transports this splitting from $\Omega_h$ to $H^{\oplus h}$.
\end{proof}

\begin{proof}[Proof of Proposition~\ref{prop:symmetric_matrix_decomp}]
By Lemma~\ref{lem:normal-form}, choose $T\in\GL_k(\F_2)$ such that
\begin{equation}
 TRT^{T}=C
\end{equation}
is one of the two normal forms.  By Lemma~\ref{lem:normal-form-splitting}, write $C=X+Y$, where $X$ and $Y$ are symmetric, nonsingular, and nonalternating.  Hence
\begin{equation}
 R=X'+Y',
 \qquad
 X'=T^{-1}XT^{-T},
 \qquad
 Y'=T^{-1}YT^{-T}.
 \label{eq:transport-back}
\end{equation}
Congruence preserves symmetry, nonsingularity, and the alternating versus nonalternating property, so $X'$ and $Y'$ are again nonsingular nonalternating symmetric matrices.

Lemma~\ref{lem:normal-form} shows that $X'$, $Y'$, and $\tau$ are all congruent to $I_k$.  Therefore $X'$ and $Y'$ are each congruent to $\tau$: there exist $Q_1,Q_2\in\GL_k(\F_2)$ such that
\begin{equation}
 X'=Q_1\tau Q_1^{T},
 \qquad
 Y'=Q_2\tau Q_2^{T}.
\end{equation}
Substitution into \eqref{eq:transport-back} gives
\eqref{eq:target-decomposition}.

The proof is constructive.  For example, if congruence reduction produces
matrices $U_\tau,U_1,U_2$ satisfying
\begin{equation}
 U_\tau\tau U_\tau^{T}=I_k,
 \qquad
 U_1X'U_1^{T}=I_k,
 \qquad
 U_2Y'U_2^{T}=I_k,
\end{equation}
then we have
\begin{equation}
 Q_1=U_1^{-1}U_\tau,
 \qquad
 Q_2=U_2^{-1}U_\tau.
\end{equation}
\end{proof}

\begin{rem}[Relation to Ref.~\cite{Malcolm2026}]
Supplementary Lemma~6 of Ref.~\cite{Malcolm2026} proves that every $M\in M_k(\F_2)$ is a sum of at most two elements of $\GL_k(\F_2)$.  Its $I_2$ construction agrees with \eqref{eq:I2-splitting}.  That lemma does not, however, require its two summands to be symmetric or nonalternating,
and its general rank reduction uses independent row and column operations $gMg'$ rather than a congruence $hRh^{T}$.  In particular, the displayed $I_3$ summands in its Eq.~(55) are not symmetric, and one summand in its rank-one Eq.~(58) is alternating.  Equations \eqref{eq:I3-splitting}, \eqref{eq:rank-one-splitting}, and \eqref{eq:H-plus-zero-splitting} provide the additional symmetry/nonalternation refinement required for Prop~\ref{prop:symmetric_matrix_decomp}.
\end{rem}

\section{Constructive multi-block Clifford decomposition} \label{sec:multi-block_Clifford_decomp}

Let the number $b$ of code blocks be even for simplicity, the number $k'$ of logical qubits in each code block be $k'\geq 3$, and put $m=bk'$.  
As explained in Sec.~\ref{subsec:Clifford-synthesis} in the main text, the local diagonal gate $\mathcal D_i(G)$ is defined as
\begin{equation}
    \mathcal D_i(G) = \begin{pmatrix}
        \begin{matrix}
            I_{(i-1)k'} &  & O  \\
             & G &  \\
            O & & I_{(b-i)k'}
        \end{matrix} & \vline &  \hspace{0.5em} O \hspace{0.5em} \\ \hline
        \hspace{0.5em} O \hspace{0.5em} & \vline & \begin{matrix}
            I_{(i-1)k'} &  & O  \\
             & G^{-T} &  \\
            O & & I_{(b-i)k'}
        \end{matrix}
    \end{pmatrix}.
\end{equation}
Similarly, the $i$-th-block local upper shear $\mathcal U_i(R)$ with a symmetric $R\in M_{k'}(\F_2)$ is defined as
\begin{equation}
\mathcal U_i(R) = 
\begin{array}{c @{\hspace{-0.7em}} l}
   \begin{matrix} 
   \begin{matrix}\hspace{1.4cm} i\end{matrix} 
   \\
  \begin{pmatrix}
    \begin{matrix} & & \\ & I_m & \\ & & \end{matrix} 
    & \vline & 
    \begin{matrix} 0 & \cdots & 0 \\ \vdots & R & \vdots \\ 0 & \cdots & 0 \end{matrix} \\
    \hline
    \begin{matrix} & & \\ & O & \\ & & \end{matrix} 
    & \vline & 
    \begin{matrix} & & \\ & I_m & \\ & & \end{matrix}
  \end{pmatrix}.
  \end{matrix}
  & 
  \begin{matrix} 
  \\
    \\[-1em]
    i \\
    \\[2em]
  \end{matrix}
\end{array}
\end{equation}
The matrix $\operatorname{S}_{\tau_{k'}, i} $, which can be implemented by a phase-type fold-transversal gate on the $i$-th block, can then be written as $\operatorname{S}_{\tau_{k'}, i}=\mathcal U_i(\tau_{k'})$.

As mentioned in Sec.~\ref{subsec:Clifford-synthesis}, by combining elementary gate gadgets implementing $\mathcal D_i(A)$ and $\mathcal C_{ij}$, one can implement the following gate for an arbitrary $M\in M_{k'}(\F_2)$:
\begin{equation}
    \mathcal C_{ij}(M) \coloneqq \mathcal D(I_m+E_{ij}\otimes M). \label{eq:arbitrary_mat_decomp}
\end{equation}
Note that $\mathcal C_{ij}(I_{k'})=\mathcal C_{ij}$.
Since $E_{ij}E_{ij}=0$ for $i\neq j$, this gate is self-inverse.  
Furthermore, since upper-triangular matrices with the same ordered support add their coefficients, we have
\begin{align}
 \mathcal C_{ij}(M)
 &=\prod_{s=1}^{\rho_\Gamma(M)}\mathcal C_{ij}(h_s),
 \label{eq:C-addition}\\
 \mathcal C_{ij}(h_s)
 &=\mathcal D_i(h_s)\,\mathcal C_{ij}\,
   \mathcal D_i(h_s)^{-1}.
 \label{eq:C-conjugation}
\end{align}
Expanding $\mathcal{C}_{ij}(M)$ with Eq.~\eqref{eq:C-conjugation}, we can merge consecutive diagonal gates as follows.
\begin{align}
    \mathcal{C}_{ij}(M) = \mathcal{D}_i(h_1) \mathcal{C}_{ij} \mathcal{D}_i(h_1^{-1} h_2) 
    \cdots
    \mathcal{C}_{ij} \mathcal{D}_i(h_{\rho_\Gamma(M)}^{-1}).
    \label{eq:merge-adjacent-diagonal-gates}
\end{align}
Thus, for an arbitrary $M \in M_{k'}(\F_2)$, $\mathcal{C}_{ij}(M)$ requires at most
$\rho_{\max}\coloneqq \max_M\rho_\Gamma(M)\leq k'^2$ uses of the two-block primitive gates $\mathcal C_{ij}$ and the single-block primitive code-automorphism gadgets, which sums up to $2\rho_{\max}+1$ gate gadgets, as described in the proof of Prop.~\ref{prop:diagonal_generation}.

\subsection{Global LULU/LDU decomposition}
\label{subsec:global-gate-decom}

Now, we aim to decompose a general element
\begin{equation}
 S=\begin{pmatrix}A&B\\C&D\end{pmatrix}\in\Sp_{2m}(\F_2).
\end{equation}
As has been shown in Ref.~\cite{Malcolm2026}, when the top-left quadrant $A$ is invertible, we can decompose it as~\cite[Lemma~32]{Malcolm2026}
\begin{equation}
    S=\mathcal L(X) \mathcal U(Y) \mathcal L(Z) \mathcal U(W).\label{eq:alternative_LULU}
\end{equation}
We call it the LULU decomposition.
Another possibly useful decomposition is~\cite[Lemma~33]{Malcolm2026}
\begin{equation}
 S=\mathcal L(Y)\,\mathcal D(A)\,\mathcal U(Z).
 \label{eq:invertible-A-LDU}
\end{equation}
The former decomposition in general leads to less overhead, as can be seen in the following, but the overhead depends on the balance between the required constant in the decomposition, the number $b$ of code blocks, and the number $k'$ of logical qubits in a code block.

For singular $A$, there inevitably appears another $\mathcal U(E)$ with $E\in\Sym_{m}(\F_2)$ and the row weight of $E$ at most $1$, such that~\cite{Malcolm2026}
\begin{equation}
    S=\mathcal U(E)\,S',
 \label{eq:general-A-LDU}
\end{equation}
where $S'$ has an invertible top-left quadrant.
Thus, an additional overhead to implement $\mathcal U(E)$ needs to be taken into account.
It thus remains to synthesize global symmetric shears $\mathcal U$ and $\mathcal L$, and possibly one $\mathcal D$ factor as in Eq.~\eqref{eq:invertible-A-LDU} or \eqref{eq:alternative_LULU}.

\subsection{Multi-block shears}
\label{subsec:shear-synthesis}

Partition a symmetric matrix $B\in \Sym_m(\F_2)$ into $b^2$ number of $k'\times k'$ blocks $B_{ij}$.
We separate the discussion of off-diagonal terms $B_{ij}$ for $i \neq j$ and diagonal terms $B_{ii}$.

\subsubsection{Inter-block shears} \label{subsubsec:inter-block-shear}
We process inter-block shears, i.e., off-diagonal terms $B_{ij}$, into layers of two-block shears. For $i<j$, define
\begin{equation}
 \mathcal U_{ij}(B)
 :=\mathcal U(E_{ij}\otimes B_{ij}+E_{ji}\otimes B_{ji}),
\end{equation}
with $B_{ji}=B_{ij}^T$.
This is a CZ-type circuit and can thus be synthesized as
\begin{equation}
 \mathcal U_{ij}(B)=\operatorname{H}_{\tau_{k'},j} \mathcal C_{ij}(B_{ij}\tau_{k'})\operatorname{H}_{\tau_{k'},j}.
 \label{eq:offdiagonal-shear-gadget}
\end{equation}
Indeed, since the inverse of upper right triangular matrix $U(A)$ is $U(A)$ itself and $\tau = \tau^{-1}$ holds for an involutive permutation $\tau = \tau_{k'}$, the direct calculation shows that
\begin{align}
\begin{split}
    &\begin{pmatrix}
        \begin{matrix}
            I & \\ & 0
        \end{matrix} &\vline&  
        \begin{matrix}
           0 & \\ & \tau
        \end{matrix} \\
        \hline
    \begin{matrix}
       0 & \\  & \tau
    \end{matrix} & \vline &
    \begin{matrix}
        I & \\ & 0
    \end{matrix}
    \end{pmatrix} 
    \begin{pmatrix}
        \begin{matrix}
            I & B_{ij} \tau \\
            & I
        \end{matrix} & \vline & 
        \begin{matrix}
            &  \\ &
        \end{matrix} \\
        \hline
        \begin{matrix}
            & \\ &
        \end{matrix} & \vline & 
        \begin{matrix}
            I & \\ \tau B_{ij}^T & I
        \end{matrix}
    \end{pmatrix}
    \begin{pmatrix}
        \begin{matrix}
            I & \\ & 0
        \end{matrix} &\vline&  
        \begin{matrix}
           0 & \\ & \tau
        \end{matrix} \\
        \hline
    \begin{matrix}
        0 & \\ & \tau
    \end{matrix} & \vline &
    \begin{matrix}
        I & \\ & 0
    \end{matrix}
    \end{pmatrix} \\
    &= \begin{pmatrix}
        \begin{matrix}
            I & B_{ij} \tau \\
            & 
        \end{matrix} & \vline & 
        \begin{matrix}
            & \\  B_{ij}^T & \tau
        \end{matrix} \\
        \hline 
        \begin{matrix}
            0 & \\ & \tau
        \end{matrix} & \vline & 
        \begin{matrix}
            I & \\ & 0
        \end{matrix}
    \end{pmatrix}
    \begin{pmatrix}
        \begin{matrix}
            I & \\ & 0
        \end{matrix} &\vline&  
        \begin{matrix}
           0 & \\ & \tau
        \end{matrix} \\
        \hline
    \begin{matrix}
        0 & \\ & \tau
    \end{matrix} & \vline &
    \begin{matrix}
        I & \\ & 0
    \end{matrix}
    \end{pmatrix} \\
    &= \begin{pmatrix}
        I & &\vline&  & B_{ij} \\
        & I  &\vline&  B_{ij}^T & \\ \hline
        0 &  &\vline& I & \\
        & 0 & \vline& & I
    \end{pmatrix}. 
\end{split}
\end{align}
From~\eqref{eq:offdiagonal-shear-gadget}, the gate depth for two-block off-diagonal terms is thus given by $2\rho_{\max} + 3$.

Because $b$ is even, the edges of the complete graph $K_b$ split into
$b-1$ perfect matchings.  Hence, all the off-diagonal gadgets $\prod_{i<j}\mathcal U_{ij}(B)$ can be organized in $b-1$ rounds of disjoint block pairs.  This is the same edge-coloring argument used for multi-block diagonal operations in Supplementary Lemma~26 of Ref.~\cite{Malcolm2026}.

Similarly, the off-diagonal blocks of the lower shear $\mathcal L(C)$ can be obtained by sandwiching $\mathcal C_{ij}(\tau_{k'}C)$ with $\operatorname{H}_{\tau_{k'},i} $ as 
\begin{equation}
 \mathcal L(C)
 =\operatorname{H}_{\tau_{k'},i}\,\mathcal C_{ij}(\tau_{k'}C)\,\operatorname{H}_{\tau_{k'},i},
 \label{eq:lower-from-upper}
\end{equation}
for each $i<j$.
The gate depth of it is thus given by $2\rho_\text{max} + 3$.

Followed by the diagonal term construction in the next section, this produces $\mathcal U(B)$ exactly.

\subsubsection{In-block shears} \label{subsubsec:in-block-shear}
\label{subsec:local-phase-full-gl}

Using Prop.~\ref{prop:symmetric_matrix_decomp}, we can decompose diagonal shear gadgets $\mathcal{U}_i(R)$ as
\begin{equation}
 \mathcal{U}_i(B_{ii})=
 \prod_{s=1}^{r}
 \mathcal D_i(Q_s)\operatorname{S}_{\tau_{k'},i}\mathcal D_i(Q_s)^{-1},
 \label{eq:local-phase-conjugation}
\end{equation}
where $B_{ii}=\sum_{s=1}^r Q_s \tau_{k'} Q_s^{T}$ with $r \leq 2$.
The arbitrary $\mathcal D_i(Q_s)$ in \eqref{eq:local-phase-conjugation} is synthesized with an auxiliary block, not as a word in $\Gamma$, as explained in Prop.~\ref{prop:diagonal_generation}.  
Thus, it is convenient that the two members of a fixed pair $(i,j)$ are processed together through
\begin{equation}
 \mathcal{D}_{ij}(Q_{s_i} \oplus Q_{s_j})
 \in\GL_{2k'}(\F_2).
 \label{eq:paired-local-GL}
\end{equation}
More specifically, two-block paired diagonal terms are produced by
\begin{align}
\begin{split}
    &\mathcal{U}_{i}(B_{ii}) \mathcal{U}_{j}(B_{jj}) \\
    &= \prod_{s=1}^r \mathcal{D}_{ij}(Q_{s_i} \oplus Q_{s_j}) (\operatorname{S}_{\tau_{k'},i} \cdot \operatorname{S}_{\tau_{k'},j}) 
    \mathcal{D}_{ij}(Q_{s_i} \oplus Q_{s_j})^{-1} .
\end{split}
\end{align}
As in Eq.\eqref{eq:merge-adjacent-diagonal-gates}, adjacent diagonal gates can be merged so that at most three, rather than four, diagonal gates and two parallel fold-transversal S gates are applied.

Since each $\mathcal{D}_{ij}(Q_{s_i} \oplus Q_{s_j})$ term is implemented by a six-gate decomposition as in Eq.~\eqref{eq:six-block-decomp} and thus requires $6(2\rho_{\max}+1)$ gate depth as shown in Prop.~\ref{prop:diagonal_generation}, the total gate depth to implement $\mathcal U(B)$ is given by
\begin{equation}
\begin{split}
    &(b-1)(2\rho_{\max}+3) + 3 \cdot 6(2\rho_{\max} + 1) + 2 \\
    &= (2b+34)\rho_{\max} + 3b + \Theta(1).
    \end{split} \label{eq:one-symmetric-shear-upper}
\end{equation}
The same depth count holds for $\mathcal L(C)$.
The total gate depth count for the LULU decomposition Eq.~\eqref{eq:alternative_LULU} is thus given by
\begin{align}
\begin{split}
    d_\text{LULU} 
    &= 4\bigl((2b + 34)\rho_\text{max} + 3b + \Theta(1)\bigr)  \\
    &= 4(2b + 34)\rho_\text{max} + 12b + \Theta(1).
\end{split}
\label{eq:total-cost-LULU}
\end{align}

\subsection{Multi-block diagonal gates by recursive block-$PLDU$}
\label{subsec:levi-compilation}

We now describe how to synthesize a multi-block diagonal gate
\begin{equation}
 \mathcal D(A)=\begin{pmatrix}A&0\\0&A^{-T}\end{pmatrix},
 \qquad A\in\GL_{bk'}(\F_2),
\end{equation}
with parallel depth linear in the number $b$ of code blocks.  
The map $A\mapsto\mathcal D(A)$ preserves products, so it is enough to factor $A$ itself.

The construction has a simple recursive form.  We divide the code blocks into two parts, use lower and upper block-unitriangular matrices to separate them, and then apply the same procedure to the two diagonal pieces in
parallel. We stop when a diagonal piece acts on at most two blocks. Each resulting two-block piece is synthesized directly by the construction of Prop.~\ref{prop:diagonal_generation}, and any remaining one-block remainders are paired and treated by the same construction.  We first explain this construction assuming that every required pivot is invertible.  We then show that one permutation before the recursion can ensure this condition in general.  Together, these steps give the block-$PLDU$ (permutation--lower--diagonal--upper) decomposition.

\subsubsection{One recursive step}

Suppose first that $b=2q$, and divide the blocks into two halves of $q$.
Write
\begin{equation}
A=\begin{pmatrix}A_{11}&A_{12}\\A_{21}&A_{22}\end{pmatrix},
 \qquad A_{ij}\in M_{qk'}(\F_2),~i,j\in[2],
 \label{eq:block-plu-split}
\end{equation}
where $[n]\coloneqq \{1,\ldots,n\}$.
Assume for the moment that $A_{11}$ is invertible, and define
\begin{align}
 \Phi&=A_{21}A_{11}^{-1},\\
 \Theta&=A_{11}^{-1}A_{12},\\
 \Sigma&=A_{22}+A_{21}A_{11}^{-1}A_{12}.
 \label{eq:block-plu-RTS}
\end{align}
Direct multiplication gives the following LDU decomposition:
\begin{equation}
 A=
 \underbrace{\begin{pmatrix}I&0\\ \Phi&I\end{pmatrix}}_{L_\times(\Phi)}
 \underbrace{\begin{pmatrix}A_{11}&0\\0&\Sigma\end{pmatrix}}_{A_{\rm diag}}
 \underbrace{\begin{pmatrix}I&\Theta\\0&I\end{pmatrix}}_{U_\times(\Theta)}.
 \label{eq:recursive-block-LDU}
\end{equation}
Indeed, the right-hand side is
\begin{equation}
 \begin{pmatrix}
 A_{11}&A_{11}\Theta\\
 \Phi A_{11}&\Phi A_{11}\Theta+\Sigma
 \end{pmatrix}
 = \begin{pmatrix}A_{11}&A_{12}\\A_{21}&A_{22}\end{pmatrix}.
\end{equation}
Since both $A$ and $A_{11}$ are invertible, $\Sigma$ is also invertible.
Thus, after implementing the two cross-half factors $L_\times(\Phi)$ and
$U_\times(\Theta)$, we may factor $A_{11}$ and $\Sigma$ recursively and in parallel.

\subsubsection{Implementing the cross-half factors}
Partition $\Phi$ and $\Theta$ into $q\times q$ arrays of $k'\times k'$ matrices,
denoted $\Phi_{ij}$ and $\Theta_{ij}$, where $i,j\in[q]$. Define $T_{ij}(M)$ as
\begin{equation}
 T_{ij}(M)=I_{bk'}+E_{ij}\otimes M, 
\end{equation}
which then satisfies $\mathcal C_{ij}(M)=\mathcal D(T_{ij}(M))$. Then,
\begin{align}
 L_\times(\Phi)
 &=\prod_{i=1}^{q}\prod_{j=1}^{q} T_{q+i,j}(\Phi_{ij}),
 \label{eq:cross-lower-product}\\
 U_\times(\Theta)
 &=\prod_{i=1}^{q}\prod_{j=1}^{q}
   T_{i,q+j}(\Theta_{ij}).
 \label{eq:cross-upper-product}
\end{align}
All factors in each displayed product commute. For example, every factor in Eq.~\eqref{eq:cross-lower-product} adds a control block in the first half to a target block in the second half. Therefore, multiplying two such factors creates
no additional cross term.  The same argument applies to
Eq.~\eqref{eq:cross-upper-product}, with the two halves interchanged.

Each product contains all $q^2$ pairs formed by one block from each half. They can be scheduled in $q$ rounds.  One explicit schedule is to pair block $a$ in the first half with block $q+1+((a+r-2)\bmod q)$ in round $r=1,\ldots,q$.  Every cross-half pair then appears exactly once, and no block appears twice in the same round.  Hence,
$L_\times(\Phi)$ requires $q$ matching rounds, and $U_\times(\Theta)$ requires another $q$ matching rounds.

By Eqs.~\eqref{eq:arbitrary_mat_decomp}--\eqref{eq:C-conjugation}, each $\mathcal C_{ij}(M)$ has depth at most $2\rho_{\max}+1$.  All block pairs in one round are disjoint, so their decompositions run in parallel.  Thus, one recursive step on $2q=b$ blocks contributes
$b(2\rho_{\max}+1)$ to the depth, in addition to the two recursive children, which are also processed in parallel.

\subsubsection{Depth recurrence and remaining factors}
Let $g(b)$ denote the number of cross-half pair matching rounds before expanding each $\mathcal C_{ij}(M)$ and before synthesizing the remaining single-block transformations.  When $b$ is a power of two, the two children of every recursive node
have the same size and are processed in parallel.  Therefore,
\begin{equation} 
\begin{cases}
 g(1)=g(2)=0,\\
 g(b)=g(b/2)+b & (b\geq4).
 \end{cases}
 \label{eq:block-plu-recurrence}
\end{equation}
It thus follows that
\begin{equation}
 g(b)=b+\frac b2+\frac b4+\cdots + 4=2b-4.
 \label{eq:block-plu-rounds}
\end{equation}
For a general even $b$ that is not a power of two, one may instead split the blocks as evenly as possible. At a split into $\lfloor b/2\rfloor$ and $\lceil b/2\rceil$ blocks, the two cross factors require $2\lceil b/2\rceil$ rounds.  The balanced recursion therefore obeys, for $b\geq3$,
\begin{equation}
 g(b)=
 \max\left\{
 g\left(\left\lfloor\frac b2\right\rfloor\right),
 g\left(\left\lceil\frac b2\right\rceil\right)
 \right\}
 +2\left\lceil\frac b2\right\rceil.
\end{equation}
The recurrence is nondecreasing in $b$, so the larger child determines the maximum.  Iterating along that child gives
\begin{equation}
 g(b)
 =2\sum_{s=1}^{\lceil\log_2 b\rceil-1}
 \left\lceil\frac{b}{2^s}\right\rceil
 =2b+O(\log b).
 \label{eq:block-plu-rounds-general}
\end{equation}
The $O(\log b)$ term is the worst-case accumulation of the rounding at each level; it is not incurred for every non-power-of-two value of $b$.  It can, however, occur within this balanced recursion.  For example, if
$b=2^r+2$ with $r\geq2$, then
\begin{equation}
 g(b)=2b+2r-6.
\end{equation}
The dominant term of the round count remains linear in $b$.  Equation~\eqref{eq:block-plu-rounds} is exact for powers of two, whereas Eq.~\eqref{eq:block-plu-rounds-general}
is a worst-case description of the rounding overhead for the balanced recursion; it is not a lower bound for other possible synthesis methods.

After the recursive procedure given above, there remain matrices of the form
\begin{equation}
 A_{\rm rem}
 =\bigoplus_{\ell=1}^{t}G_\ell
  \oplus\bigoplus_{j=1}^{2u}Q_j,
 \label{eq:block-plu-leaves}
\end{equation}
where $G_\ell\in\GL_{2k'}(\F_2)$, $Q_j\in\GL_{k'}(\F_2)$, and $2t+2u=b$.
Here, each $G_\ell$ is a two-block remainder at which the recursion stops.  The number of one-block remainders is even because $b$ is even.  We pair these leaves and implement
\begin{equation}
 \mathcal D(G_1),\ldots,\mathcal D(G_t),
 \quad
 \mathcal D(Q_1\oplus Q_2),\ldots,
 \mathcal D(Q_{2u-1}\oplus Q_{2u})
\end{equation}
on disjoint block pairs in parallel. 
By Eq.~\eqref{eq:six-block-decomp}, every displayed two-block gate has depth at most $6(2\rho_{\max}+1)$, and hence the entire remainder has this same depth.  

Combining the cross-half rounds with the transformations of remainders gives the pivot-free bound
\begin{equation}
 d_{\rm base}(b)
 \leq \bigl(g(b)+6\bigr)(2\rho_{\max}+1).
 \label{eq:block-plu-base-cost}
\end{equation}
In particular, for $b$ a power of two,
\begin{align}
 d_{\rm base}(b)
 &\leq (2b+2)(2\rho_{\max}+1).
 \label{eq:block-plu-no-pivot-cost}
\end{align}

\subsubsection{Ensuring invertible pivots}

The preceding recursion assumes that the top-left part at every split is invertible.  A general $A$ need not have this property.  We can, however, enforce it by applying one row permutation from the left. In fact, Gaussian elimination gives a permutation matrix $P_A$ and triangular matrices $L_{\rm s}$ and $U_{\rm s}$ such that
\begin{equation}
 M\coloneqq P_A A=L_{\rm s}U_{\rm s},
 \label{eq:global-scalar-PLU}
\end{equation}
where $L_{\rm s}$ and $U_{\rm s}$ are lower and upper triangular with unit diagonal entries, respectively. 
This one permutation is sufficient for the entire recursion.  To see this, split the triangular factors at any recursive boundary after $p$ coordinates:
\begin{equation}
 L_{\rm s}=\begin{pmatrix}L_{11}&0\\L_{21}&L_{22}\end{pmatrix},
 \qquad
 U_{\rm s}=\begin{pmatrix}U_{11}&U_{12}\\0&U_{22}\end{pmatrix},
\end{equation}
where $L_{11}, U_{11} \in \GL_p(\F_2)$ and $L_{22}, U_{22} \in \GL_{bk'-p}(\F_2)$ because they are triangular matrices with unit diagonal entries. The matrices $L_s$ and $U_s$ can be rewritten as
\begin{align}
    \begin{pmatrix}
        L_{11} & 0 \\
        L_{21} & L_{22}
    \end{pmatrix}
    &= \begin{pmatrix}
        I & 0 \\
        L_{21} L_{11}^{-1} & I
    \end{pmatrix} \begin{pmatrix}
        L_{11} & 0 \\
        0 & L_{22}
    \end{pmatrix}, \\
    \begin{pmatrix}
        U_{11} & U_{12} \\ 0 & U_{22}
    \end{pmatrix}
    &= \begin{pmatrix}
        U_{11} & 0 \\
        0 & U_{22}
    \end{pmatrix} \begin{pmatrix}
        I & U_{11}^{-1} U_{12} \\ 0 & I
    \end{pmatrix},
\end{align}
and thus the top-left quadrant of $M$ is given by $L_{11} U_{11}$, which is invertible.
Furthermore, we can repeat this procedure on the diagonal blocks $L_{ii}, U_{ii}$ for $i \in [2]$ with arbitrary splits into two parts, which recursively splits triangular matrices into diagonal triangular blocks of smaller size with additional cross-half factors coming out.
The argument above guarantees that no further permutation is needed.

It remains to account for the cost of $P_A$, which can be bounded from above as follows.
First, recall that any permutation is a product of two involutions
\begin{align}
    P_A = \tau_S \cdot \tau_T,
\end{align}
and each involution is a collection of disjoint swaps. If $\tau_S$ and $\tau_T$ are composed of swapping whole $k'$-qubit blocks, their implementations are inexpensive. 
This is because swapping two whole blocks can be implemented in depth three by
\begin{equation}
 \begin{pmatrix}0&I_{k'}\\I_{k'}&0\end{pmatrix}
 =U(I_{k'})L(I_{k'})U(I_{k'}), \label{eq:swap_block}
\end{equation}
and thus $P_A$ can be implemented by depth six in this best case.
In general, $\tau_S$ and $\tau_T$ may swap individual coordinates between blocks and within a block. For individual-coordinate swaps between blocks, one needs a partial swap between two code blocks, which requires $k' \times k'$ partial permutation matrices instead of $I_{k'}$ as in Eq.~\eqref{eq:swap_block}.
Let
\begin{equation}
 \rho_{\rm pp}
 =\max_R\max\{\rho_\Gamma(R),\rho_\Gamma(R^T)\}
 \leq\rho_{\max}, \label{eq:overhead_partial_permutation}
\end{equation}
where $R$ ranges over $k'\times k'$ partial permutation matrices, that is, matrices with at most one nonzero entry in each row and column.  Collect the coordinate swaps between blocks $i$ and $j$ into one such matrix $R$.  Their joint action is
\begin{equation}
 \operatorname{Swap}_{ij}(R)
 =T_{ij}(R)T_{ji}(R^T)T_{ij}(R).
 \label{eq:partial-block-swap}
\end{equation}
Indeed, on these two blocks, the product is
\begin{equation}
 \begin{pmatrix}
 I_{k'}+RR^T&R\\
 R^T&I_{k'}+R^T R
 \end{pmatrix},
\end{equation}
which exchanges exactly the coordinates selected by $R$.  Here, we used $RR^T R=R$, which holds for every partial permutation matrix.  One partial
swap thus has depth at most $3(2\rho_{\rm pp}+1)$.

Within either involution, a block can exchange coordinates with at most $\min\{k',b-1\}$ other blocks.  The inter-block exchanges can therefore be divided into at most $\min\{k',b-1\}+1$ rounds of disjoint block pairs, which is the standard edge-coloring bound used in Theorem~9 of Ref.~\cite{Malcolm2026}.  Applying the bound to both involutions gives
\begin{equation}
 2\cdot3(2\rho_{\rm pp}+1)
 \bigl(\min\{k',b-1\}+1\bigr)
\end{equation}
for their inter-block parts.  The remaining swaps within individual blocks have the form $Q_1\oplus\cdots\oplus Q_b$.  Pairing the blocks and using the six-factor synthesis costs $6(2\rho_{\max}+1)$ for each
involution.  Hence, the depth $d(\mathcal{D}(P_A))$ for synthesizing $\mathcal{D}(P_A)$ is upper bounded by
\begin{equation}
 d\bigl(\mathcal D(P_A)\bigr)
 \leq
 (12\rho_{\rm pp}+6)\bigl(\min\{k',b-1\}+1\bigr)
 +24\rho_{\max}+\Theta(1).
 \label{eq:scalar-permutation-cost}
\end{equation}

As a result, for a general singular $A$, we have
\begin{equation}
\begin{split}
 d\bigl(\mathcal D(A)\bigr)
 &\leq d_{\rm base}(b)
 +
 d\bigl(\mathcal D(P_A)\bigr).
\end{split}
 \label{eq:block-pldu-full-cost}
\end{equation}

\subsection{Overall layer counts for a general symplectic matrix element}

We can thus count the total number of layers of gates necessary to implement a general symplectic matrix element Eq.~\eqref{eq:alternative_LULU} and Eq.~\eqref{eq:invertible-A-LDU}. The additional implementation cost for a general decomposition Eq.~\eqref{eq:general-A-LDU} is the same between the two. 
For the LULU decomposition Eq.~\eqref{eq:alternative_LULU}, the total gate depth is
\begin{equation}
    d_{\text{LULU}}=4(2b+34)\rho_{\max} + 12b + \Theta(1), \label{eq:total-cost-LULU-2}
\end{equation}
as shown in Eq.~\eqref{eq:total-cost-LULU}, which follows from Eq.~\eqref{eq:one-symmetric-shear-upper}.
Alternatively, for the LDU decomposition Eq.~\eqref{eq:invertible-A-LDU}, the total depth is 
\begin{equation}
    \begin{split}
    d_{\text{LDU}}&=2(2b+34)\rho_{\max} + 6 b + d_{\rm base}(b) \\
    & \qquad + d\bigl(\mathcal D(P_A)\bigr) + \Theta(1), 
    \end{split}\label{eq:total_cost_LDU}
\end{equation}
which follows from Eq.~\eqref{eq:one-symmetric-shear-upper} and \eqref{eq:block-pldu-full-cost}.
The leading-order term in $d_{\rm base}(b)$ is $4b\rho_{\max}$ from Eqs.~\eqref{eq:block-plu-rounds-general} and \eqref{eq:block-plu-base-cost}, and the term $d\bigl(\mathcal D(P_A)\bigr)$ is, in the worst case, $k'$- or $b$-dependent factor as can be seen from Eqs.~\eqref{eq:scalar-permutation-cost}.
Thus, the former decomposition in general gives a lower implementation overhead for large $b$ and $k'$, unless the D term in the LDU decomposition has a nice form. This conclusion is the same as Ref.~\cite{Malcolm2026}. The difference is that the in-block shear gates are more costly in our model, which leads to a large constant in Eq.~\eqref{eq:total-cost-LULU-2}, in addition to the basic layer overhead $ \rho_{\max}=k'^{2-o(1)}$ for in-block gate implementations instead of $O(k')$ in Ref.~\cite{Malcolm2026}.

When the permutation ${\cal U}(E)$ in Eq.~\eqref{eq:general-A-LDU} is necessary for the case of a singular top-left quadrant, the additional overhead is incurred, which depends heavily on the structure of $E$. If $E$ happens to be composed of inter-block swaps, then the additional cost is at most $2\rho_{\rm pp}+\Theta(1)$ from Eqs.~\eqref{eq:offdiagonal-shear-gadget} and \eqref{eq:overhead_partial_permutation}. If in-block swaps are contained, $\mathcal D_i(Q_s)$ synthesis in Eq.~\eqref{eq:local-phase-conjugation} may be required. From Prop.~\ref{prop:symmetric_matrix_decomp} and the fact that neighboring local diagonal gates can be merged as explained above, this costs at most three times the six-factor decomposition as in Eq.~\eqref{eq:six-block-decomp}, each factor of which requires $2\rho_{\rm pp}+\Theta(1)$ gate gadgets in the worst case. Thus, the overhead should be $3\cdot 6\cdot2\rho_{\rm pp}+\Theta(1)=36\rho_{\rm pp}+\Theta(1)$ in this case.

\section{Small-size examples of Steinberg codes}
In this appendix, we introduce small-size examples of Steinberg codes, which are also used in our numerical study in Sec.~\ref{sec:numerical-study}. 
We also summarize an elementary point of view of Steinberg codes useful for deriving their parity check matrices.

\subsection{Classical linear codes from homology made of building}
Let us consider a chain complex over a simplicial complex $\Delta = \cup_{i=0}^n \Delta_i$ with coefficient over $\F_2$:
\begin{align}
    C_n \xrightarrow[]{\partial_n} C_{n-1} \xrightarrow[]{\partial_{n-1}} \cdots \xrightarrow[]{\partial_1} C_0 \xrightarrow[]{\partial_0} \mathbb{F}_2. 
\end{align}
For a Steinberg code, the classical linear code is associated with the highest level cycle space, i.e., $Z_n \coloneqq \ker{\partial_n}$. Note that $n$ here is related to the Lie rank $r$ by $r=n+1$.
An element $x$ in $C_n$ is a formal sum over $n$-dimensional simplicial complex $c \in \Delta_n$, which is written as
\begin{align}
    x = \sum_{c \in \Delta_n} f_c(x) \cdot c.
\end{align}
A codeword of the associated code is given by $(f_c(z))_{c \in \Delta_n}$ for $z \in Z_n$.

Let $\Delta$ be a flag complex over $\F_2$; i.e., each simplex is an increasing sequence of subspaces over a finite field $\F_q$, and each boundary of a simplex $c$ contains every sequence with one subspace deleted from $c$.
Note that all flag complexes are simplicial complexes. In this complex, an $n$-dimensional simplex $c \in \Delta_n$ is a complete flag
\begin{align}
    c = (V_0 \subset V_1 \subset \cdots \subset V_n).
\end{align}

If a group $G$ acts on $V_n$, it acts on $\Delta$ as follows:
\begin{align}
    g \cdot c = (g \cdot V_0 \subset g \cdot V_1 \subset \cdots \subset g \cdot V_n),
\end{align}
where $g$ is a group element.
$G$ acts flag-transitively on $\Delta$ if there exists $g \in G$ for any $c, c' \in \Delta_n$ such that $c' = g \cdot c$.

Let $G$ be a Chevalley group and $B$ a Borel subgroup. The corresponding spherical building $\Delta$ is a flag complex with the following structure: $B$ stabilizes a specific complete flag $c_{f}$ called the fundamental chamber, and $G$ acts transitively on $\Delta$, which implies that the chambers are $\Delta_n = \{g\cdot c_f: g \in G\}$. Thus, the chambers $\Delta_n$ have one-to-one correspondence with $G/B$.
By Solomon-Tits theorem, the cycle space $Z_n$ has a $G$ representation of dimension $q^N = \dim Z_n$~\cite{SMITH1995199}. 

Since the fundamental chamber is stabilized by the Borel subgroup $B$, its panel can also be described by its stabilizer $P$, but it could change the chamber. This means that for the fundamental chamber $c_f$ with
\begin{align}
    c_f = V_0 \subset V_1 \subset \cdots \subset V_n,
\end{align}
one of its panels $p_i$ can be obtained by removing one of the subspaces $V_i$ as
\begin{align}
    p_i = V_0 \subset \cdots V_{i-1} \subset V_{i+1} \subset \cdots \subset V_n.
\end{align}
The stabilizer $P_i$ of $p_i$ stabilizes $V_j$ for $j \neq i$ but may change $V_i$.  The panels $\{p_0,\ldots,p_n\}$ is called the fundamental panels.
Other panels of the type $i$ is given by $\{g\cdot p_i:g\in G\}$, which thus implies that the panel of the type $i$ has one-to-one correspondence with $\{gP_i, g \in G\}$. If a chamber $c$ that corresponds to $gB$ contains the panel $p_i$ that corresponds to $hP_i$, then $gB \subset hP_i$. This condition is equivalent to $h^{-1}g \in P_i$, which could be a more helpful criterion for deriving parity check matrices. 
\begin{prop}
    Given a chamber $c$ and a panel $p$, the following holds
\begin{align}
\begin{split}
    p \subset c 
    &\iff h^{-1}g \in P_i,\quad \forall g, h,\ c=g\cdot c_f,\ p = h\cdot p_i.
    \label{eq:chamber-panel-criterion}
\end{split}
\end{align}   
\end{prop}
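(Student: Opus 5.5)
The proof reduces both sides of \eqref{eq:chamber-panel-criterion} to a statement about stabilizers. The only facts needed are the ones just set up: $c_f$ has stabilizer $B$, the fundamental panel $p_i$ has stabilizer $P_i$, and $B\subseteq P_i$ because $p_i\subset c_f$ (any element fixing the whole flag $c_f$ fixes the subflag $p_i$). The plan is to prove the two implications separately. Along the way we check that the condition $h^{-1}g\in P_i$ does not depend on the chosen representatives. Indeed, replacing $g$ by $gb$ with $b\in B$ and $h$ by $hp'$ with $p'\in P_i$ gives $p'^{-1}h^{-1}gb$, which lies in $P_i$ exactly when $h^{-1}g$ does, since $B\subseteq P_i$.

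($\Leftarrow$) Suppose $h^{-1}g\in P_i$. Then
\begin{equation}
h\cdot p_i = h(h^{-1}g)\cdot p_i = g\cdot p_i,
\end{equation}
because $h^{-1}g$ stabilizes $p_i$. The $G$-action preserves inclusion of flags, so $p_i\subset c_f$ gives $g\cdot p_i\subset g\cdot c_f=c$. Hence $p\subset c$.

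($\Rightarrow$) Suppose $p=h\cdot p_i\subset c=g\cdot c_f$. Applying $g^{-1}$ gives $g^{-1}h\cdot p_i\subset c_f$. The panel $g^{-1}h\cdot p_i$ is again of type $i$, because the $G$-action preserves subspace dimensions and so preserves types. Hence it is obtained from $c_f$ by deleting the type-$i$ subspace. But a chamber contains exactly one panel of each type, since deleting $V_i$ from a complete flag is uniquely determined. Therefore $g^{-1}h\cdot p_i=p_i$, so $g^{-1}h\in\operatorname{Stab}(p_i)=P_i$, and taking inverses gives $h^{-1}g\in P_i$.

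The main obstacle is this last step: justifying that a chamber has a unique type-$i$ face and that $G$ preserves types. In the flag-complex model both facts are immediate, because type is the dimension of the missing subspace and $G$ acts linearly. In the abstract $BN$-pair language the same fact is the earlier identification that $gB$ contains $hP_s$ iff $gP_s=hP_s$. So one can alternatively conclude directly: $gP_i=hP_i$ holds iff $h^{-1}g\in P_i$.
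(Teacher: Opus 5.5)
Your proof is correct and takes essentially the same route as the paper: the ($\Leftarrow$) direction is identical, and in the ($\Rightarrow$) direction both arguments reduce to membership in the stabilizer $P_i$ of $p_i$. The one difference is that the paper simply asserts that $p\subset c$ implies $gB\subset hP_i$ and then extracts $h^{-1}g=qb^{-1}\in P_i$, whereas you explicitly prove the fact that assertion tacitly relies on: $G$ preserves types and a chamber has a unique type-$i$ panel, so $g^{-1}h\cdot p_i=p_i$.
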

\begin{proof}
$(\Rightarrow)$ Since $B$ and $P_i$ are stabilizers of $c_f$ and $p_i$, respectively, $p\subset c$ implies $g B \subset h P_i$.
Given $gB \subset hP_i$, for all $b \in B$, there exists $q \in P_i$ such that $gb = hq $, which implies $ h^{-1}g = qb^{-1}$. Therefore, $h^{-1}g \in P_i$ since $B \subset P_i$. 

\noindent $(\Leftarrow)$ Assume that $h^{-1}g \in P_i$. Then, there exists $f \in P_i$ such that $h^{-1}g = f $, which implies $ g = hf$. Since $p_i \subset c_f$, we have $p = h\cdot p_i = hf\cdot p_i =g\cdot p_i \subset g\cdot c_f = c$.    
\end{proof}
The parity-check matrix $H$ of $C_\text{St}(G)$ is given by
    \begin{align}
        H_{pc} &= \begin{cases}
        1 & \text{if $p \subset c$} \\
        0 & \text{if $p \not\subset c$} 
    \end{cases}, \\
    &= \begin{cases}
        1 & \text{if $h^{-1}g \in P_
        i$}, \\
        0 & \text{if $h^{-1}g \notin P_i$} 
    \end{cases}.
\end{align}

\subsection{Examples}
\label{appendix:code-examples}

\subsubsection{$[21, 8, 6]$ code from $\mathrm{SL}_3(\F_2)$}
With group $\operatorname{SL}_3(\F_2)$, the code parameters $[21, 8, 6]$ of its Steinberg code is given by the result in Sec.~\ref{par:subfamily}.
Its building corresponds to the 1-dimensional flag complex of the Fano plane~\cite{SMITH1995199, Abramenko2008}.
The panels of this building correspond to 7 points and 7 lines of the Fano plane. A chamber is a full flag, i.e., a pair of a point and a line of the Fano plane. Since each line contains 3 points, there are 21 chambers in total. 

The parity-check matrix $H \in \F_2^{14 \times 21}$ is determined by the incidence relation between chambers and panels and explicitly given by
\begin{equation}
H = \left(
\begin{smallmatrix}
1&1&1&0&0&0&0&0&0&0&0&0&0&0&0&0&0&0&0&0&0\\
1&0&0&1&0&0&0&0&1&0&0&0&0&0&0&0&0&0&0&0&0\\
0&1&0&0&0&0&0&0&0&0&1&0&0&0&0&0&0&0&1&0&0\\
0&0&1&0&0&0&0&0&0&0&0&0&1&0&0&1&0&0&0&0&0\\
0&0&0&1&1&1&0&0&0&0&0&0&0&0&0&0&0&0&0&0&0\\
0&0&0&0&1&0&0&0&0&0&0&0&0&1&0&0&0&0&0&1&0\\
0&0&0&0&0&1&0&0&0&0&0&1&0&0&0&0&0&1&0&0&0\\
0&0&0&0&0&0&1&1&1&0&0&0&0&0&0&0&0&0&0&0&0\\
0&0&0&0&0&0&1&0&0&1&0&0&0&0&1&0&0&0&0&0&0\\
0&0&0&0&0&0&0&1&0&0&0&0&0&0&0&0&1&0&0&0&1\\
0&0&0&0&0&0&0&0&0&1&1&1&0&0&0&0&0&0&0&0&0\\
0&0&0&0&0&0&0&0&0&0&0&0&1&1&1&0&0&0&0&0&0\\
0&0&0&0&0&0&0&0&0&0&0&0&0&0&0&1&1&1&0&0&0\\
0&0&0&0&0&0&0&0&0&0&0&0&0&0&0&0&0&0&1&1&1
\end{smallmatrix}\right).
\end{equation}
The generator matrix $G\in\F_2^{8\times 21}$ is determined by apartments, which are ``cycles'' in a Fano plane. Here, the ``cycle'' is in the sense of a simplicial complex; i.e., point--line pairs share the same points and lines exactly twice for each.
Explicitly, one choice of the generator matrix is given by 
\begin{equation}
    G= \left(\begin{smallmatrix}
        1&1&0&1&1&0&0&0&0&0&0&0&0&0&0&0&0&0&1&1&0\\
 0&0&0&1&1&0&0&1&1&0&0&0&0&0&0&0&0&0&0&1&1\\
 0&0&0&0&0&0&1&1&0&1&1&0&0&0&0&0&0&0&1&0&1\\
 0&0&0&0&0&0&0&0&0&1&1&0&0&1&1&0&0&0&1&1&0\\
 0&0&0&0&0&0&0&0&0&0&0&0&1&1&0&1&1&0&0&1&1\\
 0&1&1&0&0&0&0&0&0&0&0&0&0&0&0&1&1&0&1&0&1\\
 0&0&0&0&1&1&0&0&0&0&0&0&0&0&0&0&1&1&0&1&1\\
 0&0&0&0&1&1&0&0&0&0&1&1&0&0&0&0&0&0&1&1&0
    \end{smallmatrix}\right).
    \label{eq:generator_Fano}
\end{equation}

The Tanner-graph automorphisms for this parity-check matrix are generated by two coordinate permutations given by
\begin{align}
    \begin{split}
    \sigma_1 & \coloneqq (2~3)(7~8)(10~17)(11~16)(12~18)\\
    &\hspace{2cm}(13~19)(14~20)(15~21),
    \end{split}\\
    \begin{split}
    \sigma_2 & \coloneqq (1~19~5)(2~20~4)(3~21~6)(7~10~15)\\
    &\hspace{2cm}(8~12~13)(9~11~14)(16~17~18).
    \end{split}
\end{align}
Their induced action on the logical space is given for a generator matrix Eq.~\eqref{eq:generator_Fano} by
\begin{align}
    \pi_{\rm St}(\sigma_1) &= \begin{pmatrix}
        1&0&0&0&1&1&0&0\\
 0&1&1&1&0&0&0&0\\
 0&0&1&1&1&0&0&0\\
 0&0&0&0&1&0&0&0\\
 0&0&0&1&0&0&0&0\\
 0&0&0&1&1&1&0&0\\
 0&0&0&1&0&0&0&1\\
 0&0&0&0&1&0&1&0
    \end{pmatrix}, \\
    \pi_{\rm St}(\sigma_2) &= \begin{pmatrix}
        1&0&0&0&0&0&0&0\\
 1&0&0&0&0&0&0&1\\
 0&0&0&1&0&0&0&1\\
 0&1&1&1&0&0&0&0\\
 0&1&0&0&0&0&1&0\\
 0&0&0&0&0&0&1&0\\
 1&0&0&0&0&1&1&0\\
 1&0&0&0&1&1&0&0
    \end{pmatrix}.
\end{align}

\subsubsection{$[45, 16, 8]$ code from $\Sp_4(\mathbb{F}_2)$}
We take $G = \mathrm{Sp}_4(\F_2) \cong S_6$, which is also a finite group of Lie type. Considering its building \cite{Abramenko2008}, its dimension 1 simplex consists of points $\{\{i, j\}:i,j\in\{1,\ldots,6\},i\neq j\}$ and a line $\{\{\{i, j\}, \{k, l\}, \{m, n\}\}:i,j,k,l,m,n\in\{1,\ldots,6\},\text{all distinct}\}$. First, we fix a chamber. Let this chamber be $c_f = \{1, 2\} \subset \{\{1, 2\}, \{3, 4\}, \{5, 6\}\}$. The Borel subgroup $B$ stabilizes it and thus given by
\begin{align}
    B = \langle (1~2), (3~4), (5~6), (3~5)(4~6)\rangle.   
\end{align}
Other chambers are generated by the transitive action of $S_6$, and the number of chambers is $[G:B] = |S_6| / |B| = 6! / 2^4 = 45$.

The subgroup $P_p$ of $G$ is defined to stabilize the point $\{1, 2\}$ of $c_f$, and thus given by
\begin{align}
    P_p = \langle (1~2) \rangle \times S_4, 
\end{align}
where $S_4$ permutes $\{3, 4, 5, 6\}$. Hence, the number of this type of panels is $[G : P_p] = |S_6| / |P_p| = 6! / (2 \times |S_4|) = 15$. 
On the other hand, the subgroup $P_l$ of $G$ that stabilizes the line-type panel $\{\{1, 2\}, \{3, 4\}, \{5, 6\}\}$ of $c_f$ is given by
\begin{align}
    P_l = \langle (1~2), (3~4), (5~6), (1~3)(2~4), (3~5)(4~6) \rangle.
\end{align}
Since $|P_l| = 48$, the number of line-type panels is $[G : P_l] = 15$.

The parity-check matrix $H \in \F_2^{30 \times 45}$ of this code is concretely calculated with criterion~\ref{eq:chamber-panel-criterion}, which is given by
\begin{widetext}
\begin{equation}
    H = \left(\begin{smallmatrix}
1 & 0 & 0 & 0 & 0 & 0 & 0 & 0 & 0 & 0 & 0 & 0 & 0 & 0 & 0 & 0 & 0 & 0 & 0 & 1 & 0 & 0 & 0 & 0 & 0 & 0 & 0 & 0 & 0 & 0 & 0 & 0 & 0 & 0 & 0 & 1 & 0 & 0 & 0 & 0 & 0 & 0 & 0 & 0 & 0 \\
0 & 1 & 0 & 0 & 0 & 0 & 0 & 0 & 0 & 0 & 0 & 0 & 0 & 0 & 0 & 1 & 0 & 0 & 0 & 0 & 0 & 0 & 0 & 0 & 0 & 0 & 1 & 0 & 0 & 0 & 0 & 0 & 0 & 0 & 0 & 0 & 0 & 0 & 0 & 0 & 0 & 0 & 0 & 0 & 0 \\
0 & 0 & 1 & 0 & 1 & 0 & 0 & 0 & 0 & 0 & 0 & 0 & 0 & 0 & 0 & 0 & 0 & 0 & 0 & 0 & 0 & 0 & 0 & 0 & 0 & 0 & 0 & 0 & 0 & 0 & 1 & 0 & 0 & 0 & 0 & 0 & 0 & 0 & 0 & 0 & 0 & 0 & 0 & 0 & 0 \\
0 & 0 & 0 & 1 & 0 & 0 & 1 & 0 & 1 & 0 & 0 & 0 & 0 & 0 & 0 & 0 & 0 & 0 & 0 & 0 & 0 & 0 & 0 & 0 & 0 & 0 & 0 & 0 & 0 & 0 & 0 & 0 & 0 & 0 & 0 & 0 & 0 & 0 & 0 & 0 & 0 & 0 & 0 & 0 & 0 \\
0 & 0 & 0 & 0 & 0 & 1 & 0 & 0 & 0 & 0 & 0 & 0 & 0 & 0 & 0 & 0 & 0 & 0 & 0 & 0 & 0 & 0 & 0 & 0 & 0 & 0 & 0 & 0 & 0 & 0 & 0 & 1 & 0 & 0 & 0 & 0 & 0 & 0 & 0 & 0 & 0 & 0 & 0 & 0 & 1 \\
0 & 0 & 0 & 0 & 0 & 0 & 0 & 1 & 0 & 0 & 1 & 0 & 1 & 0 & 0 & 0 & 0 & 0 & 0 & 0 & 0 & 0 & 0 & 0 & 0 & 0 & 0 & 0 & 0 & 0 & 0 & 0 & 0 & 0 & 0 & 0 & 0 & 0 & 0 & 0 & 0 & 0 & 0 & 0 & 0 \\
0 & 0 & 0 & 0 & 0 & 0 & 0 & 0 & 0 & 1 & 0 & 0 & 0 & 0 & 0 & 0 & 0 & 1 & 0 & 0 & 0 & 0 & 0 & 0 & 0 & 0 & 0 & 0 & 0 & 0 & 0 & 0 & 0 & 0 & 0 & 0 & 0 & 0 & 0 & 0 & 0 & 0 & 1 & 0 & 0 \\
0 & 0 & 0 & 0 & 0 & 0 & 0 & 0 & 0 & 0 & 0 & 1 & 0 & 0 & 1 & 0 & 1 & 0 & 0 & 0 & 0 & 0 & 0 & 0 & 0 & 0 & 0 & 0 & 0 & 0 & 0 & 0 & 0 & 0 & 0 & 0 & 0 & 0 & 0 & 0 & 0 & 0 & 0 & 0 & 0 \\
0 & 0 & 0 & 0 & 0 & 0 & 0 & 0 & 0 & 0 & 0 & 0 & 0 & 1 & 0 & 0 & 0 & 0 & 0 & 0 & 0 & 1 & 0 & 0 & 0 & 0 & 0 & 1 & 0 & 0 & 0 & 0 & 0 & 0 & 0 & 0 & 0 & 0 & 0 & 0 & 0 & 0 & 0 & 0 & 0 \\
0 & 0 & 0 & 0 & 0 & 0 & 0 & 0 & 0 & 0 & 0 & 0 & 0 & 0 & 0 & 0 & 0 & 0 & 1 & 0 & 0 & 0 & 0 & 0 & 0 & 1 & 0 & 0 & 0 & 0 & 0 & 0 & 0 & 1 & 0 & 0 & 0 & 0 & 0 & 0 & 0 & 0 & 0 & 0 & 0 \\
0 & 0 & 0 & 0 & 0 & 0 & 0 & 0 & 0 & 0 & 0 & 0 & 0 & 0 & 0 & 0 & 0 & 0 & 0 & 0 & 1 & 0 & 1 & 1 & 0 & 0 & 0 & 0 & 0 & 0 & 0 & 0 & 0 & 0 & 0 & 0 & 0 & 0 & 0 & 0 & 0 & 0 & 0 & 0 & 0 \\
0 & 0 & 0 & 0 & 0 & 0 & 0 & 0 & 0 & 0 & 0 & 0 & 0 & 0 & 0 & 0 & 0 & 0 & 0 & 0 & 0 & 0 & 0 & 0 & 1 & 0 & 0 & 0 & 0 & 1 & 0 & 0 & 0 & 0 & 1 & 0 & 0 & 0 & 0 & 0 & 0 & 0 & 0 & 0 & 0 \\
0 & 0 & 0 & 0 & 0 & 0 & 0 & 0 & 0 & 0 & 0 & 0 & 0 & 0 & 0 & 0 & 0 & 0 & 0 & 0 & 0 & 0 & 0 & 0 & 0 & 0 & 0 & 0 & 1 & 0 & 0 & 0 & 1 & 0 & 0 & 0 & 0 & 0 & 0 & 0 & 0 & 1 & 0 & 0 & 0 \\
0 & 0 & 0 & 0 & 0 & 0 & 0 & 0 & 0 & 0 & 0 & 0 & 0 & 0 & 0 & 0 & 0 & 0 & 0 & 0 & 0 & 0 & 0 & 0 & 0 & 0 & 0 & 0 & 0 & 0 & 0 & 0 & 0 & 0 & 0 & 0 & 1 & 1 & 0 & 0 & 0 & 0 & 0 & 1 & 0 \\
0 & 0 & 0 & 0 & 0 & 0 & 0 & 0 & 0 & 0 & 0 & 0 & 0 & 0 & 0 & 0 & 0 & 0 & 0 & 0 & 0 & 0 & 0 & 0 & 0 & 0 & 0 & 0 & 0 & 0 & 0 & 0 & 0 & 0 & 0 & 0 & 0 & 0 & 1 & 1 & 1 & 0 & 0 & 0 & 0 \\
1 & 0 & 1 & 0 & 0 & 0 & 0 & 1 & 0 & 0 & 0 & 0 & 0 & 0 & 0 & 0 & 0 & 0 & 0 & 0 & 0 & 0 & 0 & 0 & 0 & 0 & 0 & 0 & 0 & 0 & 0 & 0 & 0 & 0 & 0 & 0 & 0 & 0 & 0 & 0 & 0 & 0 & 0 & 0 & 0 \\
0 & 1 & 0 & 1 & 0 & 0 & 0 & 0 & 0 & 0 & 0 & 0 & 0 & 0 & 0 & 0 & 0 & 0 & 0 & 0 & 0 & 0 & 0 & 0 & 1 & 0 & 0 & 0 & 0 & 0 & 0 & 0 & 0 & 0 & 0 & 0 & 0 & 0 & 0 & 0 & 0 & 0 & 0 & 0 & 0 \\
0 & 0 & 0 & 0 & 1 & 0 & 0 & 0 & 0 & 0 & 0 & 0 & 0 & 0 & 0 & 0 & 0 & 0 & 0 & 0 & 0 & 0 & 0 & 0 & 0 & 0 & 0 & 0 & 0 & 1 & 0 & 0 & 1 & 0 & 0 & 0 & 0 & 0 & 0 & 0 & 0 & 0 & 0 & 0 & 0 \\
0 & 0 & 0 & 0 & 0 & 1 & 1 & 0 & 0 & 0 & 0 & 1 & 0 & 0 & 0 & 0 & 0 & 0 & 0 & 0 & 0 & 0 & 0 & 0 & 0 & 0 & 0 & 0 & 0 & 0 & 0 & 0 & 0 & 0 & 0 & 0 & 0 & 0 & 0 & 0 & 0 & 0 & 0 & 0 & 0 \\
0 & 0 & 0 & 0 & 0 & 0 & 0 & 0 & 1 & 0 & 0 & 0 & 0 & 0 & 0 & 0 & 0 & 0 & 0 & 1 & 1 & 0 & 0 & 0 & 0 & 0 & 0 & 0 & 0 & 0 & 0 & 0 & 0 & 0 & 0 & 0 & 0 & 0 & 0 & 0 & 0 & 0 & 0 & 0 & 0 \\
0 & 0 & 0 & 0 & 0 & 0 & 0 & 0 & 0 & 1 & 1 & 0 & 0 & 0 & 0 & 1 & 0 & 0 & 0 & 0 & 0 & 0 & 0 & 0 & 0 & 0 & 0 & 0 & 0 & 0 & 0 & 0 & 0 & 0 & 0 & 0 & 0 & 0 & 0 & 0 & 0 & 0 & 0 & 0 & 0 \\
0 & 0 & 0 & 0 & 0 & 0 & 0 & 0 & 0 & 0 & 0 & 0 & 1 & 0 & 0 & 0 & 0 & 0 & 1 & 0 & 0 & 0 & 0 & 0 & 0 & 0 & 0 & 0 & 0 & 0 & 0 & 1 & 0 & 0 & 0 & 0 & 0 & 0 & 0 & 0 & 0 & 0 & 0 & 0 & 0 \\
0 & 0 & 0 & 0 & 0 & 0 & 0 & 0 & 0 & 0 & 0 & 0 & 0 & 1 & 1 & 0 & 0 & 0 & 0 & 0 & 0 & 0 & 0 & 0 & 0 & 0 & 0 & 0 & 0 & 0 & 1 & 0 & 0 & 0 & 0 & 0 & 0 & 0 & 0 & 0 & 0 & 0 & 0 & 0 & 0 \\
0 & 0 & 0 & 0 & 0 & 0 & 0 & 0 & 0 & 0 & 0 & 0 & 0 & 0 & 0 & 0 & 1 & 1 & 0 & 0 & 0 & 0 & 0 & 0 & 0 & 0 & 0 & 0 & 0 & 0 & 0 & 0 & 0 & 0 & 0 & 0 & 0 & 0 & 0 & 0 & 0 & 0 & 0 & 1 & 0 \\
0 & 0 & 0 & 0 & 0 & 0 & 0 & 0 & 0 & 0 & 0 & 0 & 0 & 0 & 0 & 0 & 0 & 0 & 0 & 0 & 0 & 1 & 1 & 0 & 0 & 1 & 0 & 0 & 0 & 0 & 0 & 0 & 0 & 0 & 0 & 0 & 0 & 0 & 0 & 0 & 0 & 0 & 0 & 0 & 0 \\
0 & 0 & 0 & 0 & 0 & 0 & 0 & 0 & 0 & 0 & 0 & 0 & 0 & 0 & 0 & 0 & 0 & 0 & 0 & 0 & 0 & 0 & 0 & 1 & 0 & 0 & 0 & 0 & 1 & 0 & 0 & 0 & 0 & 0 & 0 & 0 & 0 & 0 & 0 & 0 & 0 & 0 & 1 & 0 & 0 \\
0 & 0 & 0 & 0 & 0 & 0 & 0 & 0 & 0 & 0 & 0 & 0 & 0 & 0 & 0 & 0 & 0 & 0 & 0 & 0 & 0 & 0 & 0 & 0 & 0 & 0 & 1 & 1 & 0 & 0 & 0 & 0 & 0 & 0 & 0 & 0 & 0 & 0 & 0 & 1 & 0 & 0 & 0 & 0 & 0 \\
0 & 0 & 0 & 0 & 0 & 0 & 0 & 0 & 0 & 0 & 0 & 0 & 0 & 0 & 0 & 0 & 0 & 0 & 0 & 0 & 0 & 0 & 0 & 0 & 0 & 0 & 0 & 0 & 0 & 0 & 0 & 0 & 0 & 1 & 1 & 0 & 0 & 1 & 0 & 0 & 0 & 0 & 0 & 0 & 0 \\
0 & 0 & 0 & 0 & 0 & 0 & 0 & 0 & 0 & 0 & 0 & 0 & 0 & 0 & 0 & 0 & 0 & 0 & 0 & 0 & 0 & 0 & 0 & 0 & 0 & 0 & 0 & 0 & 0 & 0 & 0 & 0 & 0 & 0 & 0 & 1 & 1 & 0 & 1 & 0 & 0 & 0 & 0 & 0 & 0 \\
0 & 0 & 0 & 0 & 0 & 0 & 0 & 0 & 0 & 0 & 0 & 0 & 0 & 0 & 0 & 0 & 0 & 0 & 0 & 0 & 0 & 0 & 0 & 0 & 0 & 0 & 0 & 0 & 0 & 0 & 0 & 0 & 0 & 0 & 0 & 0 & 0 & 0 & 0 & 0 & 1 & 1 & 0 & 0 & 1
\end{smallmatrix}\right).
\end{equation}
The generator matrix $G\in\F_2^{16\times 45}$ for this parity-check matrix is given by 
\begin{equation}
    G=\left(
\begin{smallmatrix}
1&1&0&1&0&0&0&1&1&0&1&0&0&0&0&1&0&0&0&1&0&0&0&0&0&0&0&0&0&0&0&0&0&0&0&0&0&0&0&0&0&0&0&0&0\\
1&0&0&0&0&1&1&1&1&0&0&0&1&0&0&0&0&0&0&1&0&0&0&0&0&0&0&0&0&0&0&1&0&0&0&0&0&0&0&0&0&0&0&0&0\\
1&0&1&0&0&0&1&0&1&0&0&1&0&0&1&0&0&0&0&1&0&0&0&0&0&0&0&0&0&0&1&0&0&0&0&0&0&0&0&0&0&0&0&0&0\\
1&0&1&0&0&0&0&0&0&0&0&0&0&0&1&0&1&0&0&0&0&0&0&0&0&0&0&0&0&0&1&0&0&0&0&1&1&0&0&0&0&0&0&1&0\\
1&0&0&0&0&0&0&1&0&1&1&0&0&0&0&0&0&1&0&0&0&0&0&0&0&0&0&0&0&0&0&0&0&0&0&1&1&0&0&0&0&0&0&1&0\\
1&0&1&0&0&0&0&0&0&0&0&0&0&1&0&0&0&0&0&1&1&1&1&0&0&0&0&0&0&0&1&0&0&0&0&0&0&0&0&0&0&0&0&0&0\\
1&0&1&1&1&0&0&0&1&0&0&0&0&0&0&0&0&0&0&1&0&0&0&0&1&0&0&0&0&1&0&0&0&0&0&0&0&0&0&0&0&0&0&0&0\\
1&0&0&0&0&0&0&1&0&0&0&0&1&0&0&0&0&0&1&1&1&0&1&0&0&1&0&0&0&0&0&0&0&0&0&0&0&0&0&0&0&0&0&0&0\\
1&0&0&0&0&0&0&1&0&0&1&0&0&0&0&1&0&0&0&0&0&0&0&0&0&0&1&0&0&0&0&0&0&0&0&1&0&0&1&1&0&0&0&0&0\\
1&0&1&0&0&0&0&0&0&0&0&0&0&1&0&0&0&0&0&0&0&0&0&0&0&0&0&1&0&0&1&0&0&0&0&1&0&0&1&1&0&0&0&0&0\\
1&0&1&0&1&0&0&0&0&0&0&0&0&0&0&0&0&0&0&1&1&0&0&1&0&0&0&0&1&0&0&0&1&0&0&0&0&0&0&0&0&0&0&0&0\\
1&0&0&0&0&0&0&1&0&0&0&0&1&0&0&0&0&0&1&0&0&0&0&0&0&0&0&0&0&0&0&0&0&1&0&1&1&1&0&0&0&0&0&0&0\\
1&0&1&0&1&0&0&0&0&0&0&0&0&0&0&0&0&0&0&0&0&0&0&0&0&0&0&0&0&1&0&0&0&0&1&1&1&1&0&0&0&0&0&0&0\\
1&0&1&0&1&0&0&0&0&0&0&0&0&0&0&0&0&0&0&0&0&0&0&0&0&0&0&0&0&0&0&0&1&0&0&1&0&0&1&0&1&1&0&0&0\\
1&0&0&0&0&0&0&1&0&1&1&0&0&0&0&0&0&0&0&1&1&0&0&1&0&0&0&0&0&0&0&0&0&0&0&0&0&0&0&0&0&0&1&0&0\\
1&0&0&0&0&0&0&1&0&0&0&0&1&0&0&0&0&0&0&0&0&0&0&0&0&0&0&0&0&0&0&1&0&0&0&1&0&0&1&0&1&0&0&0&1
\end{smallmatrix}
\right).
\end{equation}
The Tanner-graph automorphisms for this code can be generated by the following two permutations:
\begin{align}
    \begin{split}
    \sigma_1 &\coloneqq 
(1~16)(2~20)(3~10)(4~9)(5~43)(8~11)(14~44)(15~17)(18~31)(21~25) \\
 &\hspace{3cm} (22~38)(23~35)(24~30)(26~34)(27~36)(28~37)(29~33)(39~40), 
    \end{split}\\
    \begin{split}
    \sigma_2 &\coloneqq (1~20~21~23~26~34~35~30~5~3)
     (2~42~15~11~39~7~43~28~32~44)(4~29~14~13~37) \\&\hspace{2cm}(6~18~27~45~17~16~41~12~10~40)(8~36~9~24~22~19~38~25~33~31).
     \end{split}
\end{align}
\end{widetext}

\end{document}